\documentclass{article}

\usepackage[backend=biber,style=alphabetic,doi=true,isbn=false,url=false,maxbibnames=5]{biblatex}
\usepackage{mathtools}
\usepackage{changepage}
\usepackage{prelude}

\newcommand\nbd\nobreakdash
\long\def\onlyshort#1{}

\DeclareMathOperator{\des}{des}
\newcommand{\Lie}{\mathrm{Lie}}
\newcommand{\ellvec}{\boldsymbol{\ell}}
\newcommand{\e}{\mathrm{e}}
\newcommand{\LP}{\mathbf{LP}}

\newcommand{\mc}[1]{\mathcal{#1}}
\newcommand{\Htarg}{H_{\mathrm{target}}}
\newcommand{\hatHtarg}{\widehat{H}_{\mathrm{target}}}

\definecolor{myred}{RGB}{221,0,0}
\definecolor{myblue}{RGB}{0,0,221}
\definecolor{citegreen}{RGB}{0,165,0}
\definecolor{mygreen}{RGB}{18,143,31}
\definecolor{citeblue}{RGB}{35,55,225}

\begin{document}

\title{Floquet-Universal Hamiltonian Simulation}

\author[1,2]{Emilio Onorati}
\author[3]{Harriet Apel}
\author[1]{Michael M.~Wolf}
\author[3,*]{Toby Cubitt}
\affil[1]{Department of Mathematics, Technische Universit\"at M\"unchen, Germany}
\affil[2]{Department of Physics, Freie Universit\"at Berlin, Germany}
\affil[3]{Department of Computer Science, University College London, United Kingdom}

\date{}

\maketitle

\begingroup
\renewcommand{\thefootnote}{\fnsymbol{footnote}}
\footnotetext[1]{\mbox{email: \href{mailto:t.cubitt@ucl.ac.uk}{\texttt{t.cubitt@ucl.ac.uk}}}}
\endgroup

\begin{abstract}
  
Analogue and digital Hamiltonian simulation represent two distinct approaches to making one quantum system replicate the physical properties of another, with consequences ranging from practical applications to complexity theory and quantum gravity.
They are rooted in two distinct regimes of quantum dynamics: time-independent Hamiltonians for the former; fully controllable time-dependent quantum circuits in the latter.
Time-periodic Hamiltonians represent an intermediate regime.
In this work, we establish a theory of Floquet simulation where we use periodically driven Hamiltonians to synthesise time-independent ones.

We show that a set of interactions $\mathcal{S}$ can Floquet-simulate every Hamiltonian in the Lie algebra $\mathrm{Lie}(\mathcal{S})$, and consequently we give a complete and constructive characterisation of Floquet-universal Hamiltonians that are able to produce any target Hamiltonian.
Notably, our construction uses only $\mathcal{O}(1)$ local interaction strengths and ratios thereof, avoiding the impractical multi-scale local interaction strengths often required in analogue simulation via time-independent Hamiltonians.
Instead, the construction requires a range of driving frequencies which scales polynomially with system size for important classes such as $k$-local lattice Hamiltonians.
This has immediate complexity-theoretic implications, including numerous BQP-completeness and QMA-hardness results for natural physical properties and quantities studied in Floquet physics.

\end{abstract}

 \clearpage

\tableofcontents
\clearpage

\part{Overview}
\section{Introduction}
Quantum simulation is both of immense practical value and fundamental scientific significance.
On the practical front, simulating the quantum many-body systems that occur in chemistry and materials science is one of the most promising applications of quantum technology~\cite{Daley2022}.
Analogue simulation, in which a quantum system is engineered to reproduce a target Hamiltonian of interest, has been successfully performed for decades, at scales well-beyond what is possible on classical or quantum computers~\cite{browaeys2020many,monroe2021programmable}.
Meanwhile, recent results have reported the first scientifically meaningful digital quantum simulations that challenge current state-of-the-art classical techniques~\cite{AlamGoogle2025,AlamQuantinuum2025}.
Fundamentally, quantum simulation has deep connections to quantum computation and complexity theory.
It was established long ago that general-purpose digital quantum simulation is theoretically equivalent to a universal quantum computer~\cite{Lloyd1996,Dodd2002}.
Meanwhile, analogue quantum simulation is closely related to Hamiltonian complexity theory~\cite{Cubitt2019,CubittMontanaro2016,BravyiHastings2017}, with recent theoretical results proving an exact equivalence between general-purpose analogue quantum simulators and Hamiltonians in the QMA complexity class~\cite{Kohler2022}.

These qualitatively different sets of simulation experiments and theorems, analogue and digital, represent two distinct mathematical structures underlying the simulation.
In analogue simulation, a static, time-independent Hamiltonian is engineered to replicate the desired target Hamiltonian.
This form of simulation was originally proposed in~\cite{Jaksch1998}, and a full mathematical theory underlying it worked out in~\cite{Cubitt2019}.
Meanwhile, in digital quantum simulation, a quantum circuit consisting of a sequence of few-qubit unitary gates is constructed, which reproduces the time-dynamics of the target Hamiltonian.
Its theory is exactly that of universal quantum computation and the Hamiltonian simulation algorithm, worked out in the 1980-90s~\cite{NielsenChuang2010}.

Thus analogue quantum simulation (at least in its original form) uses a time-independent Hamiltonian with no dynamically varying parameters to carry out the simulation.
Whereas digital quantum simulation uses a quantum computer, which can be viewed as a highly time-dependent Hamiltonian with many dynamically varying parameters.

The consequences of this underlying mathematical difference can be traced all the way through from theory to state-of-the-art experiments.
On the theoretical side, it is a classic result from the early days of quantum computing theory that any entangling two-qubit interaction (together with single-qubit gates) is sufficient to carry out universal quantum computation~\cite{Dodd2002}, hence sufficient for digital simulation of any desired target Hamiltonian.
Whereas the classification of two-qubit interactions sufficient for universal analogue simulation is non-trivial, does not encompass all entangling interactions, and was worked out only in 2018~\cite{Cubitt2019}.
Experimentally, because analogue simulation only requires engineering time-independent Hamiltonians, it is less demanding of hardware, hence far larger analogue quantum simulation experiments have been performed than digital.
On the other hand, since only time-independent Hamiltonians are available, the range of target Hamiltonians that can feasibly be simulated using analogue techniques is more restricted.

There is an interesting intermediate regime between time-independent Hamiltonians with no dynamical control, and time-dependent Hamiltonians with full dynamical control: periodically driven Hamiltonians, also called ``Floquet Hamiltonians''.
Their time-dependence is periodic:
\begin{equation}
  H(t) = H(t+T)
\end{equation}
for some period $T$ and therefore can be expressed as a Fourier series:
\begin{equation}
  H(t) = \sum_{k=1}^\infty a_k \cos(\omega k t) + b_k \sin(\omega k t).
\end{equation}
The Floquet Hamiltonians encountered in practice typically have simple Fourier series, consisting of a finite number of terms not involving extremely large values of $k$; i.e.\ Fourier series with bounded bandwidth.

As with analogue quantum simulation, Floquet quantum systems have been engineered and studied experimentally for multiple decades.
They exhibit interesting many-body physics phenomena, many of which parallel those of static many-body quantum systems~\cite{Bukov2015,Khemani2016}, but some of which are not possible with time-independent Hamiltonians, most famously time-crystals~\cite{Else2016,GoogleTimeCrystal}.
Simulation of Floquet systems using digital quantum computers has become popular recently~\cite{GoogleTimeCrystal,Zhang2022FloquetSPT}, largely because the circuit depths attainable in current quantum computers are typically insufficient to achieve a good approximation to continuous-time dynamics.
Although these are discrete Floquet systems consisting of repeated layers of quantum gates, rather than the sinusoidally periodically-driven Hamiltonians of conventional Floquet physics.

However, the use of Floquet Hamiltonians themselves to simulate other quantum Hamiltonians remains less explored  than either the static or fully time-dependent cases.
In this paper, we initiate the systematic, mathematical study of quantum Floquet simulation: specifically, the use of periodically driven Hamiltonians to simulate target Hamiltonians that are time-independent.
We prove a complete characterisation of the Floquet-simulation power of any set of interactions, and as a consequence fully characterise Floquet-universal Hamiltonians able to simulate any time-independent Hamiltonians.

The classification is simple to state, although its proof is non-trivial: a set of periodically driven interactions $\cS$ can Floquet-simulate any Hamiltonian in the Lie algebra $\Lie(\cS)$ generated by $\cS$.
In particular, this means that a set of interactions can simulate all target Hamiltonians -- is ``Floquet universal'' -- if it generates the full Lie algebra.

Comparing this characterisation with the time-independent and fully time-dependent cases, we see that the condition for Floquet-universality coincides with the classification of interactions that generate universal gates for quantum computation.
Hence, restricting from discontinuous time-dependent Hamiltonians with infinite frequency bandwidth to smoothly time-periodic Hamiltonians with bounded bandwidth does not reduce the computational or simulation power.
This contrasts to restricting to time-independent control, which was shown in~\cite{Cubitt2019,Kohler2022} to in general reduce both.

Despite arriving at the same classification, the proof in the Floquet case is very different and significantly more involved than the case of universal quantum gates.
It uses a novel algebraic geometry argument via residues to relate the effective Hamiltonian to the canonical projection from the free associative algebra to the Lie algebra, and requires a recent tighter bound on the tail of the Magnus expansion~\cite{ACO}.
Our characterisation of the Floquet-simulation power of an arbitrary set of interactions is also constructive: given any target Hamiltonian, suitable periodic driving parameters for generating that Hamiltonian can be derived by essentially a linear algebra computation.

Notably, all these results still hold when all ratios between local interaction strengths in the engineered Floquet Hamiltonian bounded between 1 and~2 (in natural units) so that only the driving frequencies are allowed to scale.
This is a key practical motivation for this work, contrasting strongly with the time-independent case where the the ratios of the interaction strengths scale polynomially, and there are no-go results indicating that this cannot be improved~\cite{Cubitt2019,AharonovZhou2019,Harley2024,Baspin2025}.
For this reason, analogue Hamiltonian simulation constructions based on perturbation gadgets or history state constructions have had little practical impact, as they require precise engineering of local interaction strengths across different energy scales, well beyond capabilities of analogue experiments.
On the other hand, engineering finite bandwidth periodic driving is potentially more experimental feasible -- indeed, this type of driving has already been realised experimentally~\cite{Jotzu2014,Choi2020,Koyluoglu2025}.

Since universality encompasses the simulation of arbitrary Hamiltonians, including families with exponentially many independent parameters and QMA hard instances, unsurprisingly the driving frequencies in the general Floquet-universal construction can grow exponentially with the size of the target Hamiltonian.
The relevant question is whether an efficient construction exists for Hamiltonians with a succinct description.
We show that it does for the important class of $k$-local Hamiltonians on a lattice, and more generally $k$-local Hamiltonians whose non-commutativity graph has constant chromatic number.
For any such Hamiltonian on $n$ qudits, Floquet simulation can be achieved with all interaction strengths $\mc O(1)$ and all driving frequencies bounded by $\operatorname{poly}(n)$.

In the next section we will review Floquet theory before defining Floquet simulation and establishing its broader consequences.
We then present our two main results in \cref{sect:main results}: the characterisation of the simulation power of Floquet Hamiltonians including a description of Floquet-universal Hamiltonians; and $k$-local lattice Hamiltonians can be efficiently simulated.
The complete proof and rigorous statement of our results is deferred to \cref{part:technical}.

\section{Floquet theory}

We will be concerned throughout with time-dependent Hamiltonians that are periodic in time, which we capture in a definition for concreteness:

\begin{definition}
  A time-dependent Hamiltonian is \emph{time-periodic} (or just \emph{periodic}) if there exists a $T$ such that $H(t)=H(t+nT)$ for all $n\in\N$.
  In this case, $T$ is the \emph{period} and $\omega = 2\pi/T$ is the \emph{frequency} of $H(t)$.
\end{definition}

Floquet engineering is an application of the Floquet-Bloch theorem \cite{Floquet1883,Bloch1929} which describes how a periodically driven quantum system can be transformed into an effective time-independent dynamics by a time-dependent unitary change of basis.

\begin{theorem}[Floquet Theorem: see \cite{Floquet1883,Bloch1929}]\label{thm:Floquet-Bloch}
  Let $H(t)$ be a time-periodic Hamiltonian, and $U(t)$ its associated time-evolution operator.
  For all $t\in[0,T]$
  \begin{equation}
    U(t +nT) =U(t)\left[U(T) \right]^n .
  \end{equation}
\end{theorem}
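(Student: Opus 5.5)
The plan is to use uniqueness of solutions to the Schr\"odinger equation together with the periodicity of $H(t)$. Here $U(t)$ denotes the propagator solving $i\,\frac{d}{dt}U(t) = H(t)U(t)$ with $U(0)=\mathbb{1}$. This initial value problem is linear, and $H(t)$ is assumed regular enough (e.g.\ piecewise continuous and bounded on a period) for it to have a unique solution on all of $\mathbb{R}_{\ge 0}$. I will first prove the one-period shift identity
\begin{equation}
  U(t+T) = U(t)\,U(T) \qquad \text{for all } t\ge 0,
\end{equation}
and then obtain the stated formula by induction on $n$.

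For the one-period identity, fix $T$ and define $V(t) := U(t+T)\,U(T)^{-1}$. Here $U(T)$ is unitary, so it is invertible. Differentiating in $t$ and using $H(t+T)=H(t)$ gives
\[
  i\,\tfrac{d}{dt}V(t) = H(t+T)\,U(t+T)\,U(T)^{-1} = H(t)\,V(t),
\]
and at $t=0$ we have $V(0)=U(T)U(T)^{-1}=\mathbb{1}$. Thus $V$ and $U$ solve the same initial value problem, so uniqueness forces $V(t)=U(t)$. Multiplying on the right by $U(T)$ gives $U(t+T)=U(t)U(T)$.

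At first glance the statement's ordering $U(t)[U(T)]^n$ looks suspicious, since one might expect the later evolution to act on the left. The derivation above, however, shows that the right-multiplication form is the correct one for a propagator anchored at time $0$. The point is that the full-period evolution is applied first, and the partial evolution over $[0,t]$ (which equals that over $[nT, nT+t]$ by periodicity) is applied last. Equivalently, $U(t+T)=U(t+T,T)\,U(T,0)$ with $U(t+T,T)=U(t,0)$.

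For the induction, the case $n=0$ is trivial. Assuming $U(t+nT)=U(t)U(T)^n$ for all $t$, apply the shift identity with $t$ replaced by $t+nT$:
\[
  U(t+(n+1)T) = U(t+nT)\,U(T) = U(t)\,U(T)^{n+1}.
\]
The identity in fact holds for all $t\ge 0$, so restricting to $t\in[0,T]$ is harmless.

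The only real obstacle is justifying existence and uniqueness of the propagator, which requires a regularity assumption on $H(t)$. For piecewise continuous bounded $H(t)$ on finite-dimensional spaces, standard linear ODE theory (Picard--Lindel\"of / Gr\"onwall) suffices, and I would simply invoke it.
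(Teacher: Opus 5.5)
Your proof is correct. Uniqueness applied to $V(t)=U(t+T)\,U(T)^{-1}$ gives the one-period shift $U(t+T)=U(t)\,U(T)$, since $V$ solves the same initial value problem as $U$ by $H(t+T)=H(t)$. Induction on $n$ then yields the stated identity, and your explanation of the right-multiplication ordering via $U(t+nT)=U(t+nT,nT)\,U(nT,0)$ is correct.

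The paper gives no proof of this statement and simply cites the classical Floquet--Bloch result, so your argument supplies the standard derivation. For the Hamiltonians used later in the paper, which have finite Fourier series and are therefore smooth, the regularity you need for existence and uniqueness holds automatically.
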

It is therefore useful to define a time-independent Floquet Hamiltonian whose time-evolution operator coincides with that of the periodic Hamiltonian at integer multiples of the period.
\begin{definition}[Floquet Hamiltonian]\label{def:Floquet_Hamiltonian}
  Given a periodic Hamiltonian $H(t)$ with period $T$, the associated \emph{Floquet Hamiltonian}, $H_F$, satisfies
  \begin{equation}
    e^{-i H_F n T} = \exp\left[-i \mathcal{T} \int_{0}^{nT} H(s)ds \right]
  \end{equation}
  for any $n\in\N^+$.
  We write $H_F(n) = n H_F$ for the Floquet Hamiltonian defined over $n$ periods.
\end{definition}
Note the Floquet Hamiltonian is not unique; for example, integer multiples of $2\pi/T$ can be added to $H_F$ without altering the above equality.
There is also an additional gauge freedom in choosing when during the period the system starts; i.e.\ when considering a period $[t_0,t_0+T]$ we have freedom to chose $t_0\in[0,T]$~\cite{Bukov2015, Goldman, Saar2003}.
For simplicity, we disregard this Floquet gauge freedom throughout, equivalent to implicitly choosing $t_0=0$.

\section{Floquet simulation}

We first map out the rigorous mathematical formulation of simulation when the simulating system is periodic.
\begin{definition}[Floquet simulation]\label{def:Floquet_simulation}
  A time-periodic Hamiltonian $H(t)$ \emph{Floquet-simulates} the time-independent Hamiltonian $\Htarg$ to precision $\epsilon$ over $n$ periods -- or, more briefly, \emph{$(\epsilon,n)$-Floquet-simulates} $\Htarg$ -- if
  \begin{equation}
    \norm{H_F(n) - \Htarg} \leq \epsilon,
  \end{equation}
  where $H_F$ is its associated Floquet Hamiltonian.
  We say a Floquet simulation is \emph{perfect} if $\epsilon=0$.
\end{definition}

A simple and immediate consequence of the Floquet theorem is:
\begin{corollary}
A time periodic Hamiltonian ${H(t) = H(t+nT)}$ perfectly Floquet-simulates its associated Floquet Hamiltonian.
\end{corollary}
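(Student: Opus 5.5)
The claim follows by unwinding the definitions, so the plan is to check the condition of \cref{def:Floquet_simulation} with $\epsilon=0$ for a suitable choice of $\Htarg$. Read the target as the $n$-period Floquet Hamiltonian $H_F(n)=nH_F$, where $H_F$ is any Floquet Hamiltonian associated to $H(t)$ as in \cref{def:Floquet_Hamiltonian}. With $\Htarg:=H_F(n)$ we get $\norm{H_F(n)-\Htarg}=0$, so $H(t)$ is a perfect $(0,n)$-Floquet-simulation of it. If instead the target is the single-period $H_F$, take $n=1$ and the same computation applies.

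The only real content is that $H_F$ exists, so that the statement is not vacuous. For this I would use \cref{thm:Floquet-Bloch}. The one-period propagator $U(T)=\mathcal{T}\exp[-i\int_0^T H(s)\,ds]$ is unitary on a finite-dimensional space. It therefore has a Hermitian logarithm, obtained by diagonalising $U(T)=\sum_j e^{-i\theta_j}P_j$ and setting $H_F=\frac1T\sum_j\theta_jP_j$. This gives $e^{-iH_FT}=U(T)$. Then \cref{thm:Floquet-Bloch} with $t=0$ gives $U(nT)=U(T)^n=e^{-iH_FnT}$ for every $n\in\N^+$, which is exactly the defining identity. The only subtlety is that $H_F$ is not unique, because of the branch of the logarithm. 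This does no harm: the corollary says $H(t)$ perfectly simulates \emph{its} Floquet Hamiltonian, meaning whichever representative is fixed.

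No step is a real obstacle. The most delicate point is bookkeeping: making sure the $n$ in ``over $n$ periods'' matches the $H_F(n)$ being simulated, and noting that the choice of branch is the non-uniqueness the paper has already remarked on.
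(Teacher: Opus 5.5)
Your proof is correct and follows the paper's approach: the paper gives no separate proof, treating the corollary as an immediate consequence of \cref{thm:Floquet-Bloch} and \cref{def:Floquet_simulation,def:Floquet_Hamiltonian}, and taking $\Htarg = H_F(n)$ (or $H_F$ with $n=1$) gives $\epsilon=0$ directly. Your extra step, building $H_F$ as a Hermitian logarithm of $U(T)$ and using $U(nT)=U(T)^n$, just makes explicit the existence of the Floquet Hamiltonian that the paper takes for granted.
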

Similarly, a time-periodic Hamiltonian will approximately Floquet-simulate any time-independent Hamiltonian $H'$ that is close to its associated Floquet Hamiltonian:
\begin{corollary}
  Let $H(t)$ be a time-periodic Hamiltonian, and $H_F$ its associated Floquet Hamiltonian.
  If $\norm{H' - H_F} \leq \epsilon$, then $H(t)$ $\epsilon$-Floquet-simulates $H'$.
\end{corollary}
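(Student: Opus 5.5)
The plan is a one-line triangle-inequality argument built on the preceding corollary. By that corollary, $H(t)$ perfectly Floquet-simulates its associated Floquet Hamiltonian $H_F$. In the notation of \cref{def:Floquet_simulation}, this means the Floquet Hamiltonian $H_F(n)$ attached to $H(t)$ coincides with $H_F$ itself. So the quantity we must control, $\norm{H_F(n) - H'}$, is exactly $\norm{H_F - H'}$.

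The steps, in order, are as follows.

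First, I fix the Floquet Hamiltonian $H_F$ of $H(t)$. I choose the representative in \cref{def:Floquet_Hamiltonian} that is compatible with the gauge convention $t_0=0$ adopted earlier. This is the same $H_F$ appearing in the hypothesis $\norm{H'-H_F}\le\epsilon$, so there is no ambiguity from the non-uniqueness (shifts by multiples of $2\pi/T$).

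Second, I invoke the previous corollary to identify the simulated Hamiltonian with $H_F$. The perfect-simulation statement gives $\norm{H_F(n)-H_F}=0$, read in the same sense the paper uses there.

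Third, I apply the triangle inequality:
\begin{equation*}
  \norm{H_F(n)-H'} \le \norm{H_F(n)-H_F} + \norm{H_F-H'} \le 0+\epsilon = \epsilon .
\end{equation*}
This is precisely the condition in \cref{def:Floquet_simulation} for $\epsilon$-Floquet-simulation of $H'$.

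The only point needing care is bookkeeping rather than analysis. The statement omits the number of periods, so I read it as the simulation established for the perfect-simulation corollary, which the paper takes as holding for the chosen period. Concretely, the same identification of $H_F(n)$ with $H_F$ is used in both corollaries, and no genuinely hard step is expected.
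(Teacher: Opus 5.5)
Your argument is correct and is essentially the paper's, which treats the corollary as immediate from \cref{def:Floquet_simulation} at $n=1$. There $H_F(1)=H_F$, and the hypothesis $\norm{H'-H_F}\le\epsilon$ is literally the defining inequality, so the detour through the perfect-simulation corollary and the triangle inequality is harmless but unnecessary.

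Do state $n=1$ explicitly, though. The paper sets $H_F(n)=nH_F$, so your identification $\norm{H_F(n)-H_F}=0$ fails for $n\ge 2$, where the left side equals $(n-1)\norm{H_F}$. The conclusion therefore cannot be read for an arbitrary number of periods.
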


The following lemma is standard (see e.g.~\cite[Lemma 30]{Cubitt2019}).
\begin{lemma}\label{Lipschitz}
  For $H$, $H'$ Hermitian and $\norm{\cdot}$ any unitarily invariant norm,\linebreak \mbox{$\norm{e^{iH't} - e^{iHt}} \le \norm{H'-H}t$}.
\end{lemma}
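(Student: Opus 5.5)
The natural approach is a Duhamel-type interpolation. Set $U(t)=e^{iHt}$ and $V(t)=e^{iH't}$, and consider the operator-valued function
\begin{equation}
  F(s) = e^{iH'(t-s)}\,e^{iHs}, \qquad s\in[0,t].
\end{equation}
It satisfies $F(0)=e^{iH't}$ and $F(t)=e^{iHt}$, so the difference we want to bound is the total change of $F$ over $[0,t]$. Since $F$ is a product of matrix exponentials, it is smooth in $s$, and the fundamental theorem of calculus gives
\begin{equation}
  e^{iHt}-e^{iH't} = \int_0^t F'(s)\,ds .
\end{equation}

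Differentiating by the product rule, and using that $H'$ commutes with $e^{iH'(t-s)}$, we get
\begin{equation}
  F'(s) = e^{iH'(t-s)}\bigl(-iH' + iH\bigr)e^{iHs} = i\,e^{iH'(t-s)}(H-H')e^{iHs}.
\end{equation}
Here we are using that $H$ and $H'$ each commute with their own exponentials.

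Taking norms, the triangle inequality for integrals (valid for any norm on a finite-dimensional space, via Riemann sums) gives
\begin{equation}
  \norm{e^{iHt}-e^{iH't}} \le \int_0^t \norm{e^{iH'(t-s)}(H-H')e^{iHs}}\,ds .
\end{equation}
Because $H$ and $H'$ are Hermitian, both exponential factors are unitary. Unitary invariance of the norm therefore reduces the integrand to $\norm{H-H'}$ for every $s$, and integrating over an interval of length $t$ yields the bound $\norm{H'-H}\,t$ for $t\ge0$. For $t<0$ we can either apply the same argument with $-H,-H'$ and $|t|$, or simply read the statement as concerning $t\ge 0$, which is the convention of the paper.

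There is no real obstacle here. The only points requiring care are the ordering of the non-commuting factors when differentiating $F$, and the fact that unitary invariance is what removes both the left and the right unitary factors from the integrand.
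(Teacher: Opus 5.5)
Your Duhamel interpolation argument is correct and complete: the derivative of $F$ is computed correctly, and unitary invariance strips off both unitary factors. You are also right that the inequality as written requires $t\ge 0$. The paper does not prove this lemma itself; it only cites it as standard (Lemma~30 of~\cite{Cubitt2019}), and your argument is the standard proof of that fact.
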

This immediately implies (via \cref{def:Floquet_simulation,def:Floquet_Hamiltonian}) that Floquet-simulation reproduces the time-dynamics of the target Hamiltonian:

\begin{corollary}
  If $H(t)$ $(\epsilon,n)$-Floquet-simulates $\Htarg$, then for all $n\in\N^+$
  \begin{equation}
    \norm{U(nT)-e^{-i\Htarg T}} \leq \epsilon T,
  \end{equation}
  where $T$ is the period of $H(t)$ and $U(t)$ is the time-evolution operator generated by it:
  \begin{equation}\label{eq:time_ordered_integral_U}
    U(t)=\mathcal T\exp\left(-i\int_0^tH(s)\,ds\right).
  \end{equation}
\end{corollary}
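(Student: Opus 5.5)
The plan is to rewrite $U(nT)$ as the exponential of a time-independent Hermitian operator with time parameter $T$, and then apply \cref{Lipschitz}.

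First I will identify $U(nT)$ with an exponential of $H_F(n)$. Set $t=nT$ in \eqref{eq:time_ordered_integral_U}; by \cref{def:Floquet_Hamiltonian} this gives
\begin{equation*}
  U(nT)=\mathcal T\exp\left(-i\int_0^{nT}H(s)\,ds\right)=e^{-iH_F nT}=e^{-i(nH_F)T}=e^{-iH_F(n)T}.
\end{equation*}
This uses only the convention $H_F(n)=nH_F$. The Floquet theorem (\cref{thm:Floquet-Bloch}) guarantees that a single $H_F$ works for all $n$, since $U(nT)=U(T)^n$.

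Next I need $H_F(n)$ to be Hermitian, because \cref{Lipschitz} requires it. Since $U(T)$ is unitary, it has a Hermitian logarithm. I therefore fix $H_F$ to be a Hermitian operator with $e^{-iH_FT}=U(T)$, for instance via the spectral decomposition of $U(T)$, choosing one branch of the phases. This is consistent with the non-uniqueness noted after \cref{def:Floquet_Hamiltonian}. It then follows that $H_F(n)=nH_F$ is Hermitian. The target $\Htarg$ is Hermitian by assumption.

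Finally I apply \cref{Lipschitz} with $H=\Htarg$, $H'=H_F(n)$ and time $T$. The lemma is stated for $e^{iHt}$ rather than $e^{-iHt}$. To handle the sign, I take adjoints: for any unitarily invariant norm $\norm{X^\dagger}=\norm{X}$, and $(e^{iHT})^\dagger=e^{-iHT}$. Equivalently, I can apply the lemma to $-H$ and $-H'$, which leaves $\norm{H'-H}$ unchanged. Either way this gives
\begin{equation*}
  \norm{U(nT)-e^{-i\Htarg T}}=\norm{e^{-iH_F(n)T}-e^{-i\Htarg T}}\le\norm{H_F(n)-\Htarg}\,T\le\epsilon T.
\end{equation*}
The last step is the $(\epsilon,n)$-Floquet-simulation hypothesis of \cref{def:Floquet_simulation}.

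There is no real obstacle here. The only delicate point is the Hermiticity and branch choice of $H_F$, together with reading the statement correctly. The simulation hypothesis is stated for a particular $n$, so the bound holds for that $n$. It holds for all $n$ only in the sense that the same argument applies whenever the hypothesis holds for that $n$.
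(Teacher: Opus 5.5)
Your proposal is correct and follows the paper's route. The paper obtains the corollary directly from \cref{def:Floquet_Hamiltonian,def:Floquet_simulation}, writing $U(nT)=e^{-iH_F(n)T}$ and applying \cref{Lipschitz} at time $T$. Your extra care about the Hermiticity of $H_F$, the sign convention in \cref{Lipschitz}, and reading the quantifier on $n$ as the $n$ in the hypothesis is appropriate but does not change the argument.
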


However, the Floquet Hamiltonian, $H_F$, plays a more important role in Floquet physics than simply being the effective generator of the time-evolution over multiples of the period.
The properties of the Floquet Hamiltonian play a role directly analogous to those of a static, time-independent Hamiltonian.
The eigenvalues of $H_F$ are called the ``quasi-energies'', which are conserved modulo $\hbar\omega$, reflecting the fact that the system can absorb or emit energy in quanta of $\hbar\omega$ from the periodic drive.
The associated eigenvectors $\ket{\psi}$ of $H_F$ are called the ``Floquet states'' (or ``Floquet modes''), and correspond to stable states whose evolution under $H(t)$ has a simple periodic dependence on $t$.
Their structure governs phenomena such as Floquet topological phases, time-crystals, localisation, and many other many-body phenomena in Floquet physics.
See e.g.\ \cite{Shirley65,Grifoni1998,Sambe1973} for previous works and textbooks covering this theory, and \cite{Bukov2015,Eckardt2017,Oka2019} for a selection of more recent reviews covering different aspects of Floquet physics in quantum systems.

From the \cref{def:Floquet_simulation} it follows immediately that the quasienergies and Floquet states of $H(t)$ reproduce the eigenenergies and eigenstates of $\Htarg$:

\begin{corollary}
  Let $H(t)$ be a time-periodic Hamiltonian that $(\epsilon,n)$-Floquet-simulates $\Htarg$.
  Let $\varepsilon_i$ be the quasienergies and $\ket{\varphi_i(t)}$ the corresponding Floquet modes of $H(t)$, and let $\lambda_i$ and $\ket{\psi_i}$ denote the eigenvalues and corresponding eigenstates of $\Htarg$.
  Then
\begin{equation}
	\abs{\varepsilon_i - \lambda_i} \leq \epsilon,
	\qquad
	\norm{\ket{\varphi_i(0)} - \ket{\psi_i}} \leq \epsilon.
\end{equation}
\end{corollary}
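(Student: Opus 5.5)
The plan is to treat the statement as a perturbation result for Hermitian matrices, applied to $A = H_F(n)$ (or $H_F$ after rescaling by the number of periods, as in \cref{def:Floquet_simulation}) and $B = \Htarg$. By \cref{def:Floquet_simulation} we have $\norm{A - B} \le \epsilon$ in operator norm. The quasienergies $\varepsilon_i$ are the eigenvalues of $H_F$, and the Floquet modes at $t=0$, $\ket{\varphi_i(0)}$, are its eigenvectors. Here I use the convention that $\ket{\varphi_i(t)} = e^{i\varepsilon_i t}U(t)\ket{\varphi_i(0)}$, so that $\ket{\varphi_i(0)}$ is an eigenvector of $U(T)$, hence of $H_F$. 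Both spectra are ordered non-decreasingly, and within the principal branch the quasienergies are chosen so that $H_F$ is the particular logarithm fixed in \cref{def:Floquet_Hamiltonian}. This disposes of the modulo-$\hbar\omega$ ambiguity: we compare with the representative $H_F$ that is $\epsilon$-close to $\Htarg$.

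\textbf{Eigenvalues.} Apply Weyl's inequality: for Hermitian $A,B$ with ordered eigenvalues, $\abs{\lambda_i(A)-\lambda_i(B)} \le \norm{A-B}$. I would derive it from the Courant--Fischer min-max characterisation, using $B - \norm{A-B}\,I \le A \le B + \norm{A-B}\,I$ in the Loewner order. This immediately gives $\abs{\varepsilon_i - \lambda_i}\le\epsilon$.

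\textbf{Eigenvectors.} This is the main obstacle. Eigenvector closeness is false in general without a spectral gap, and with degeneracy the eigenvectors are not even well defined. I would therefore invoke the Davis--Kahan $\sin\Theta$ theorem. If $\lambda_i$ is separated from the rest of the spectrum of $\Htarg$ by a gap $\delta$, then the phase can be chosen so that
\[
\norm{\ket{\varphi_i(0)}-\ket{\psi_i}} \le \sqrt{2}\,\sin\theta \le \frac{\sqrt2\,\epsilon}{\delta - \epsilon}.
\]
To get the stated bound literally, I would either state the result with $\epsilon$ understood up to a constant factor divided by the relevant gap, or assume the gap normalisation $\delta \ge (1+\sqrt2)\epsilon$ implicit in the statement. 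In the degenerate case, I would interpret $\ket{\psi_i}$ as a suitably chosen element of the eigenspace and apply the subspace version of Davis--Kahan.

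The proof would state this caveat explicitly. The eigenvalue part is exact as stated, while the eigenvector part holds with the gap-dependent constant.
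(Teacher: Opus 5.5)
Your eigenvalue half is what the paper's one-line justification (that the corollary ``follows immediately'' from \cref{def:Floquet_simulation}) amounts to: Weyl's inequality applied to the representative of the Floquet Hamiltonian that is $\epsilon$-close to $\Htarg$. You should make the rescaling explicit, though. Since $H_F(n)=nH_F$ and the quasienergies are the eigenvalues of $H_F$, Weyl gives $\abs{n\varepsilon_i-\lambda_i}\le\epsilon$. The displayed form therefore holds only for $n=1$, or with $\varepsilon_i$ read as eigenvalues of $H_F(n)$; the eigenvectors are unaffected. Likewise, define $\ket{\varphi_i(0)}$ to be an eigenvector of the chosen $H_F$. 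Being an eigenvector of $U(T)$ does not imply this when two eigenvalues of $H_F$ differ by a multiple of $\omega$.

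On eigenvectors you are right, and the paper, which gives no argument, is not: the bound is false without a gap hypothesis. Take $n=1$ and the constant (hence periodic) Hamiltonian $H(t)\equiv H_F=\Htarg+\epsilon\sigma_x$ with $\Htarg=\mathrm{diag}(0,c\,\epsilon)$, $c>0$. Then $\norm{H_F-\Htarg}=\epsilon$. But $H_F=\epsilon\bigl(\begin{smallmatrix}0&1\\1&c\end{smallmatrix}\bigr)$ has $\epsilon$-independent eigenvectors, rotated from $\ket{0},\ket{1}$ by an angle $\theta$ with $\tan2\theta=2/c$. Hence $\norm{\ket{\varphi_1(0)}-\ket{\psi_1}}\ge2\sin(\theta/2)>0$ for every phase choice, which exceeds $\epsilon$ once $\epsilon$ is small. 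At $c=0$ the $\ket{\psi_i}$ are not even determined. Davis--Kahan is the right repair.

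However, your proposed normalisation $\delta\ge(1+\sqrt2)\epsilon$ does not rescue the literal statement. Substituting it into your bound $\sqrt2\,\epsilon/(\delta-\epsilon)$ gives only $\le1$. The example above, where $\delta=c\,\epsilon$, shows that a gap proportional to $\epsilon$ can never give an error that vanishes with $\epsilon$. Your bound yields $\le\epsilon$ only when $\delta\ge\epsilon+\sqrt2$, i.e.\ for an absolute, $\epsilon$-independent gap. This reflects the fact that the stated inequality compares a dimensionless distance with an energy. The correct corollary is therefore the gap-dependent one, $\norm{\ket{\varphi_i(0)}-\ket{\psi_i}}\le\sqrt2\,\epsilon/(\delta-\epsilon)$ for $\delta>\epsilon$, with the subspace version for degenerate $\lambda_i$. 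You should state that as the result, not as a caveat to the paper's version.
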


Because our definition of Floquet simulation approximates the target Hamiltonian in operator norm, this in turn immediately implies that many of the complexity theoretic results for time-independent Hamiltonians carry over to their Floquet analogues.
E.g.\ approximating the minimum quasienergies to $1/poly$ precision is QMA-hard, approximating the quasienergy spectrum is BQP-complete in the same setting in which spectral approximation for static Hamiltonians is, etc.

It is worth highlighting here the differences between our Floquet simulation results and classic Trotterization-based quantum algorithms for Hamiltonian simulation.
That family of quantum algorithms simulates the time dynamics of a quantum Hamiltonian by constructing a sequence of quantum gates that approximates time-evolution under the target Hamiltonian for a small time-step, and then repeats this series of gates.
If we represent the quantum circuit for such a Trotter algorithm as a time-dependent Hamiltonian, then this Hamiltonian is indeed time-periodic, as the sequence of gates is repeated.
Such circuits are sometimes called ``Floquet circuits''.
But the time dependence in this Hamiltonian is very far from a finite sum of smooth sinusoids; it requires ``bang-bang'' control, i.e.\ the local Hamiltonians must be switched on and off discontinuously, equivalent to square-wave driving whose Fourier series has infinite bandwidth.

\section{Main results}\label{sect:main results}

We will be interested in families of time-periodic Hamiltonians that are composed of sums of individual time-dependent interactions from some finite set.
Analogously to the time-independent case, we define $\mathcal{S}$-driven Hamiltonians as the set of all time-dependent Hamiltonians that can be constructed by periodic driving of interactions in the set $\mathcal{S}$:

\begin{definition}[$\mathcal{S}$-driven Hamiltonian]\label{def:S-driven}
  Let $\mathcal{S}=\{h_1,\dots,h_m\}$ be a subset of interactions (Hermitian matrices, each defined on a fixed number of qudits).
  The set of \emph{$\mathcal{S}$-driven Hamiltonians} comprises all periodic Hamiltonians $H(t)$ on any number of qudits $n$ which can be decomposed in terms of these interactions as
  \begin{equation}
    H(t) = \sum_{x=1}^m f_x(t) h_x,
  \end{equation}
  where the $f_x(t)$ are periodic functions with a common period $T$ and base frequency $\omega = \frac{2\pi}{T}$.
\end{definition}

The first main result of this work is the characterisation of the simulation power of Floquet Hamiltonians (see \cref{thm:bounded_amplitude_universal} for the full theorem statement):

\begin{theorem}[Floquet simulation and Floquet universality]\label{thm:Floquet_universality}
  Let $\Lie(\cS)$ be a perfect Lie algebra, generated by the set $\mathcal{S}$. Then an $\mathcal{S}$-driven {Hamiltonians} can Floquet-simulate any Hamiltonian in $\Lie(\cS)$.
  In particular, an $\mathcal{S}$-Hamiltonian is Floquet-universal iff $\mathcal{S}$ generates the full Lie algebra.

  Moreover, the ratios of the amplitudes of all $f(t)$ in \cref{def:S-driven} are bounded between 1 and~2, and the $f(t)$ have finite Fourier series.
\end{theorem}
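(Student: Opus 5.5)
The approach is to compute the Floquet Hamiltonian through the Magnus expansion and show that its leading-order terms can be steered onto any prescribed element of $\Lie(\cS)$. We work in a high-frequency scaling. Take drives of the form $f_x(t)=\sum_k c_{x,k}\,\omega^{\alpha}\cos(k\omega t+\phi_{x,k})$ with a finite set of harmonics $k$, with amplitudes in $[1,2]$ up to a common scale, and with zero time-average. We evaluate the Floquet Hamiltonian over one period with the Magnus series $H_F T=\sum_{j\ge1}\Omega_j$.

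Each $\Omega_j$ is a $j$-fold iterated integral of products $f_{x_1}(t_1)\cdots f_{x_j}(t_j)$ contracted against nested commutators $[h_{x_1},[\dots,h_{x_j}]]$. For trigonometric drives these integrals reduce to rational functions of the harmonic indices, which we read off as residues of the Fourier kernel. The key algebraic step is to identify the degree-$j$ Magnus term with the image, under the Dynkin/canonical projection from the free associative algebra $\C\langle h_1,\dots,h_m\rangle$ onto the free Lie algebra, of a word-polynomial whose coefficients are these residue sums. Concretely, we would show that for each fixed depth $j$ the map
\[
(\text{harmonics, phases}) \;\longmapsto\; \text{coefficient vector of } \Omega_j \text{ in a Hall basis}
\]
has image spanning the entire degree-$j$ component of the free Lie algebra. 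This is done by choosing distinct incommensurate-looking integer harmonics per letter, so that a single monomial residue isolates a prescribed bracket.

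Since $\Lie(\cS)$ is perfect, every target $\Htarg$ is a linear combination of brackets of bounded depth $d$. By linearity in suitably rescaled amplitudes, we realise each bracket term at depth $d$ in a separate block of harmonics. The lower-depth contributions are cancelled: perfectness lets us rewrite the target purely at a uniform depth, and the lower Magnus terms are killed by symmetric choices (zero-mean and time-reflection-symmetric drives, which remove even orders). Choosing the amplitude scale $\omega^{(d-1)/d}$ makes $\Omega_d/T$ equal to $\Htarg$ at order one. The Magnus tail beyond depth $d$ is then bounded by the tighter tail estimate of \cite{ACO} as $O(\omega^{-1/d})$. Taking $\omega\to\infty$, or iterating with a correction step, gives an $(\epsilon,n)$-Floquet simulation, and \cref{Lipschitz} transfers the bound to the dynamics. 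Within a block the amplitude ratios stay in $[1,2]$, because target coefficients are encoded in phases and harmonic multiplicities rather than magnitudes.

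For the universality statement, sufficiency follows from the above applied with $\Lie(\cS)$ equal to the full algebra. Necessity holds because $H_F$ always lies in the Lie algebra generated by $\{h_x\}$: every Magnus term is a nested commutator of them.

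The main obstacle is the surjectivity (spanning) claim for a single depth, i.e.\ proving that the residue sums reach every Lie element, while simultaneously showing that lower and mixed depths cancel exactly. I would first attempt this by a Vandermonde-type argument over the harmonic indices, showing that the residue coefficients of distinct words are linearly independent functions of the frequencies.
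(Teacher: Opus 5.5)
You have the right overall architecture (leading Magnus term, canonical projection $\pi_1$, tail bound from \cite{ACO}). The gap is that the two steps you defer are where the proof actually lies, and the mechanisms you propose for them do not work.

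\textbf{Exact vanishing below the target order.} Zero mean removes $M_1$, and reflection symmetry $H(T-t)=H(t)$ removes only the even orders. Nothing in your argument removes the odd orders $3\le j<d$. Under your scaling $f\sim\omega^{(d-1)/d}$ their contribution to $H_F$ grows like $\omega^{1-j/d}$. For even $d$, reflection symmetry also kills the target order $M_d$ itself.

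You cannot escape this by ``rewriting at a uniform depth'', because perfectness does not give that. Take $\mathcal{S}=\{X,Y\}$ generating $\mathfrak{su}(2)$. Conjugation by $Z$ sends $X\mapsto -X$, $Y\mapsto -Y$, $Z\mapsto Z$. Hence brackets of length $d$ lie in $\mathrm{span}\{Z\}$ for even $d$ and in $\mathrm{span}\{X,Y\}$ for odd $d$, so mixed degrees of both parities are unavoidable.

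The missing idea is that cancellation must be enforced by the harmonics, not by symmetry:
\begin{itemize}
\item Each letter is a formal copy, so repeated operators get separate harmonics. It carries one integer $\ell_j$ of an admissible vector whose only signed zero-sums of length $\le K$ are $\sum_j\ell_j=0$ and opposite-sign pairs.
\item Non-resonant nested integrals vanish over a period.
\item Pair resonances do give individually non-zero integrals; e.g.\ exponents $(1,-1,2)$ give $-\pi$. They cancel only after summing with the sign-flipped mirror term.
\item This yields $M_k(T)=0$ for $k<K$. For even $K$ one component is driven by a sine, because with all-cosine (reflection-symmetric) driving the two resonant sign patterns cancel.
\item Different blocks must be jointly admissible so that they do not interfere.
\end{itemize}

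\textbf{The spanning step.} Up to a scalar prefactor, a single admissible vector produces
\[
\pi_1\bigl(\sum_{\pi}\Xi(\pi,\ellvec)A_{\pi(1)}\cdots A_{\pi(K)}\bigr)=\sum_{\tau\in S_{K-1}}\Xi(\tau K,\ellvec)\,[A_{\tau(1)},\dots,A_{\tau(K-1)},A_K].
\]
So one vector never isolates a single bracket. The route you suggest, linear independence of the coefficient functions of distinct words, is false. On $\sum_j\ell_j=0$ the $\Xi(\pi,\cdot)$ satisfy the shuffle relations and span only a $(K-1)!$-dimensional space.

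What is needed is that these relations are exactly the shuffles:
\begin{itemize}
\item A family of $K!-(K-1)!$ shuffle relations is independent by a triangularity argument.
\item The fixed-last-letter family is independent by iterated residues along a prefix flag.
\item Ree's criterion then shows every multilinear Lie coefficient vector is a real combination of $(K-1)!$ evaluations $\Xi(\cdot,\ellvec^{(m)})$.
\end{itemize}

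\textbf{Bounded amplitude ratios across degrees.} The combination weights, and the degree-dependent factors $\omega^{-(K_r-1)}$, must be absorbed without breaking the ratio-$2$ amplitude bound. Your ``phases and harmonic multiplicities'' is not worked out. Moreover, phases other than $0,\pi$ destroy the reflection symmetry you rely on.

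The paper handles this with integer dilations $\ellvec\mapsto s\ellvec$, using $\Xi(\pi,s\ellvec)=s^{-(K-1)}\Xi(\pi,\ellvec)$:
\begin{itemize}
\item negligible weights are truncated first;
\item the $s$ are chosen so that the dilated family stays admissible;
\item intermediate-order mixed-degree cross terms are bounded by the $s^{-(K_r-1)}$ suppression of any integral containing a complete vector.
\end{itemize}
Only after these steps does the tail bound close the argument.
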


Notably, while it is significant that in the above construction all the amplitudes are~$\mc O(1)$ and their ratios are between 1 and~2, in the general case the frequencies have unfavourable scaling with the size of the target Hamiltonian being simulated.
Our second result shows that for an important class of target Hamiltonian, including $k$-local lattice Hamiltonians, this Floquet-simulation can be performed efficiently.
In this setting, in addition to all the interaction strengths and their ratios being $\mc O(1)$, the driving frequencies scale only polynomially in the number of parameters in the target Hamiltonian.

\begin{theorem}[Floquet efficiency]\label{thm:Floquet_efficiency_overview}
  $k$-local Hamiltonians whose interaction graph has constant chromatic number (which includes all lattice Hamiltonians) can be simulated by Floquet-universal Hamiltonians with all amplitude ratios bounded between 1 and~2, as above, and with all frequencies in the Fourier series scaling polynomially with the size of the Hamiltonian being simulated.
\end{theorem}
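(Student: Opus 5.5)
Write the target as $\Htarg=\sum_{j}\lambda_j P_j$, where each $P_j$ is a $k$-local term. Colour the non-commutativity graph with $\chi=\mc O(1)$ colours, so that $\Htarg=\sum_{c=1}^{\chi} H_c$ and each colour class $H_c$ is a sum of mutually commuting (indeed disjointly supported, or at least commuting) local terms. The plan is a two-level construction. At the local level, I apply the local version of \cref{thm:Floquet_universality} to each term $P_j$, which is a fixed-size operator. Because $\Lie$ of the local restriction of $\cS$ is the full algebra on $k$ qudits, each $P_j$ is reproduced by a finite Fourier-series drive on $\mc O(k)$ sites, with amplitude ratios in $[1,2]$. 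Its Floquet Hamiltonian matches $\lambda_j P_j$ up to a Magnus-tail error controlled by the tighter bound of~\cite{ACO}. Since $k$ is constant, the frequencies and the constants in this local gadget are $\mc O(1)$ multiples of a base frequency $\omega$, and the local error scales as $\mc O(\omega^{-1})$ in the relevant parameter.

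At the global level, I drive all terms of a single colour class simultaneously with the same local gadget, reusing identical frequencies. Commuting, disjointly supported gadgets do not interfere, so the effective Hamiltonian of a colour class is exactly the sum of the local effective Hamiltonians. The class error is therefore at most $|H_c|$ times the local error, which is $\mathrm{poly}(n)\cdot\mc O(\omega^{-1})$. The $\chi$ colour classes are then combined by assigning each class its own frequency band, separated by a poly factor. I then need to show that the cross terms between classes in the Magnus expansion of the combined drive vanish to leading order, and that their contribution is suppressed at higher order. The second-order Magnus term between bands at distinct frequencies averages out over a common period, leaving residual commutators of size $\|H_c\|\|H_{c'}\|/\omega$. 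With $\|H_c\|=\mathrm{poly}(n)$, choosing $\omega=\mathrm{poly}(n)/\epsilon$ makes these $\le\epsilon$. The key point is that the number of distinct frequencies depends only on $\chi$ and $k$, not on $n$. This is exactly what prevents the exponential blow-up seen in the general construction.

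The main obstacle will be controlling the full Magnus series when the total norm $\|H(t)\|T$ is $\mathrm{poly}(n)$ rather than small. The naive convergence condition $\int\|H\|<\pi$ would force $T\sim 1/\mathrm{poly}(n)$, and hence $\omega\sim\mathrm{poly}(n)$. This is acceptable, but the higher-order terms must then be bounded using locality rather than the global norm. I would first try a Lieb--Robinson style, or nested-commutator, counting bound. In such a bound, an $r$-th order term is a sum over connected clusters of $r$ interacting local terms, giving $\mc O(n\,(c\,k\,d/\omega)^{r})$, where $d$ is the bounded degree. Feeding this into the tail estimate of~\cite{ACO} gives a total error of $\mc O(n/\omega)$. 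Hence $\omega=\mathrm{poly}(n,1/\epsilon)$ suffices, while the amplitudes remain $\mc O(1)$ with ratios in $[1,2]$, as inherited from the local gadgets.
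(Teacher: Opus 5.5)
\textbf{The gap is the cross-class step, and it comes from mis-scaling what a local gadget produces.} In the construction you import (\cref{thm:bounded_amplitude_universal}), the target is not a first-order, time-averaged term. It is the order-$K$ Magnus term, of size $\sim y^{K}$ per period, where $y=a/\omega$ is the amplitude-to-frequency ratio. It reaches strength $\lambda_jP_j$ only after $N\sim y^{-K}$ periods. Consequently, any mixed term between colour classes that survives at an order $2\le J\le K$ is at least comparable to the target, and for $J<K$ it is larger by a factor $\sim y^{J-K}$. Raising $\omega$ at fixed amplitude makes this worse, not better, so bounding a residual by $\|H_c\|\|H_{c'}\|/\omega\le\epsilon$ is the wrong criterion. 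In fact, the second-order cross term between disjoint sets of integer frequencies vanishes exactly over a common period. The real danger is accidental resonances at orders $3,\dots,K_{\max}$, e.g.\ $\ell^{(c)}_1+\ell^{(c)}_2-\ell^{(c')}_1=0$. This includes resonances assembled from components of different gadgets of one class that are linked through a letter of another class. What you need is threefold. First, a joint non-resonance condition on the whole frequency set up to order $K_{\max}$ (\cref{cond:admissible_frequency_family}, supplied by \cref{lem:integer_dilation_intervals}). Second, the mirror cancellation for opposite-sign pairs, so that every mixed term of order at most $K_{\max}$ vanishes unless it contains a complete designed resonance plus extra letters. Third, \cref{lem:resonant_frequency_suppression}, to show those survivors are $\mathcal{O}(y)$ relative to the target. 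Separating bands by a factor larger than $K_{\max}$ times the top of the lower bands could supply the non-resonance, but you must state it and verify it at every order up to $K_{\max}$, not only the second.

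\textbf{Coefficient encoding with shared frequencies.} Applying \cref{thm:bounded_amplitude_universal} term by term fixes each gadget's output only up to its own normalisation $\delta_j\propto y^{K}/w_{\ast,j}$. The relative weights $\lambda_j$ within a class are therefore lost. To restore them under a common runtime, you must either give gadgets different dilations, in which case frequencies are no longer shared, or put $\lambda_j$ into the amplitudes, which breaks the ratio-$2$ bound. \cref{lem:strengthened_frequency_scaling} resolves exactly this. It uses aggregated letters $B_j^g=\sum_p(u^{\nu}_{r,p,m})^{1/k}A_j^{(r,p)}$, the split $w_{r,p,m}=W(u^+_{r,p,m}-u^-_{r,p,m})$ with $u^{\pm}\in[1,2]$, and constant phase shifts rather than dilations to calibrate different degrees. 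Your exact-factorisation argument for disjointly supported gadgets is a clean substitute for its multilinearity step. Note, however, that it needs disjoint supports (the intersection graph), not mere commutation of the $P_j$, because the gadgets are built from letters of $\mathcal{S}$.

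\textbf{The final obstacle is not needed, and the proposed fix does not work as stated.} \cref{thm:ACO} takes the global $\int_0^T\|H(s)\|\,ds$ as its hypothesis, so a cluster bound cannot be fed into it. None is required. With $K$ constant, take $y=1/\mathrm{poly}(n)$, so that $\int_0^T\|H\|=\mathcal{O}(ny)$ is small. The global tail is then $\mathcal{O}((ny)^{K+1})$ against a target of order $y^{K}\|\Htarg\|$. Hence $y\le\epsilon/\mathrm{poly}(n)$ and $N=\mathrm{poly}(n)$ periods suffice. This is how the paper obtains frequencies polynomial in $n$ relative to $\mathcal{O}(1)$ amplitudes.
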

\noindent This result is shown by careful accounting of the requirements on the driving frequencies in \cref{sect:efficiency}.
A formal statement of the efficiency result is given in \cref{thm:Floquet_efficiency}.

\section{Proof roadmap}
The proof of \cref{thm:Floquet_universality} is somewhat lengthy and is organised into four key steps.
Each step is captured by a separate theorem and developed in its own section of~\cref{part:technical}.

In \cref{sec:Magnus_expansion}, we prove that given certain conditions on the driving frequencies in the periodic Hamiltonian the lower order terms of the Magnus expansion vanish (\cref{thm:vanishing_lower_orders}).
The remaining leading order term consists of a linear combination of nested commutators of terms appearing in the time-periodic Hamiltonian.
This term divides into a product of two distinct parts: an outer sum over permutations of nested commutators of elements, and an inner sum over permutations of combinatorial functions of the driving frequencies (\cref{thm:leading_order_term}).

\Cref{sec:proof:Xi} deals with the second of these two parts.
We essentially show that the combinatorial functions appearing in the sum form a complete basis of rational functions in a suitable space.
The driving frequencies can therefore always be chosen such that the inner sum effectively reduces to the identity, leaving just the outer sum over nested commutators of generators.
Ironically, showing that this combinatorial factor can be simplified away entirely is the most involved part of the proof, culminating in \cref{thm:filtering}.

\Cref{sec:proof:canonical_projection} then shows in \cref{thm:leading_order_lie_polynomial} that the remaining outer sum is in fact the canonical projection of words in the free associative algebra over the generators into the Lie algebra over the same.
This implies that any Lie polynomial of fixed degree generated by the terms in the original Hamiltonian can be realised in the leading order term in the Magnus expansion.

The final step in \cref{sec:proof:Floquet_universality} addresses the challenge of combining Lie-polynomial terms appearing at different orders in the Magnus expansion, while controlling the unwanted cross-terms and the infinite tail of the expansion.
The solution is two-fold.
First, we show in \cref{lem:bounded_amplitude_synthesis} that rescaling of the relative frequencies alone allows us to tune the effective magnitudes of terms in the Magnus expansion, allowing terms of different orders to be combined successfully.
Secondly, recent new error bounds on the Magnus expansion due to some of the authors~\cite{ACO} give sufficiently tight control on the norm of the tail.
This leads to the formal statement of our main Floquet universality result \cref{thm:bounded_amplitude_universal}.

\medskip

To analyse the scaling behaviour of the frequency coefficients for the simulation of an arbitrary target Hamiltonian, \cref{lem:dilated_frequency_scaling} first provides a bound in terms of the number of polynomials and their degree.
We then restrict the setting in \cref{sect:efficiency} by assuming a commutation structure of the target decomposition; in \cref{lem:strengthened_frequency_scaling}, commutation between the generators of distinct terms allows the same frequency vectors to be reused within each family, while constant phase shifts bring contributions of different degrees to a common scale. Consequently, the frequency bound depends on the number of families and the maximal Lie polynomial degree, but is independent of the total number of terms within each family.

For local Hamiltonians, \cref{lem:graph_colouring_decomposition} provides the required decomposition by colouring the intersection graph, so that the number of commuting families is controlled by locality and the chromatic number. In particular, lattice Hamiltonians admit such a decomposition with a constant number of families. Combining this with \cref{lem:strengthened_frequency_scaling} and the bounded amplitude construction above yields \cref{thm:Floquet_efficiency_overview}: $k$-local lattice Hamiltonians, and more generally those with constant chromatic number, can be Floquet-simulated with polynomially bounded driving frequencies while all physical amplitudes remain $\mc O(1)$ and within a ratio of~2.

\subsection*{Acknowledgements}
We would like to thank Robert K\"onig, Thomas Schulte--Herbrüggen and Libor Caha for insightful discussions.
HA is supported by EPSRC DTP Grant Reference: EP/N509577/1 and EP/T517793/1.

\clearpage

\part{Mathematical proofs}\label{part:technical}

\section{Preliminaries}

As usual, we will work in natural units where $\hbar = 1$.
Throughout, $\norm{\bullet}$ denotes the operator norm, whereas the trace norm will be denoted $\norm{\bullet}_1$.

\begin{definition}[Lie polynomial]\label{def:Lie_polynomial}
  A \emph{Lie polynomial} over an alphabet $\cA$ is an element of the free Lie algebra $\Lie(\cA)\subset R\langle\cA\rangle$.
  A Lie polynomial $\LP(A_1,\dots,A_n)$ is \emph{homogeneous of degree $K$} if each term in the polynomial contains exactly $K$ of the $A_j$.
  A homogeneous Lie polynomial $\LP(A_1,\dots,A_K)$ is \emph{multilinear of degree $K$} if each $A_j$ appears exactly once in every monomial, so that
  \begin{equation}
    \LP(A_1,\dots,A_K) = \sum_{\pi\in S_K} F(\pi)A_{\pi(1)}\cdots A_{\pi(K)}
  \end{equation}
  for some coefficients $F:S_K\to R$.
  (Note that the $A_j$ are formal elements of the alphabet $\cA$, so different $A_j$ are allowed to correspond to the same operator.)
\end{definition}

\section{Floquet Hamiltonian and Magnus expansion}\label{sec:Magnus_expansion}

The Magnus expansion~\cite{Magnus1954} provides an exponential representation of the solution of ordinary linear differential equations, in particular the time-dependent Schr\"{o}dinger equation.
Applying this to evolution under a time-dependent Hamiltonian, the evolution operator can be expressed as
\begin{equation}
  U(t) = \exp\left[ \mathcal{M}(t)\right],
\end{equation}
where the exponent is given by an infinite series expansion over terms involving multiple integrals of nested commutators:
\begin{equation}\label{eq:Magnus_basic}
  \mathcal{M}(t) = \sum_{k=1}^\infty M_k(t).
\end{equation}

Refs.~\cite{Arnal2017,bandiera2017,Ebrahimi2025Magnus} give an explicit expression for the $k$'th Magnus term as a sum of right nested commutators:
\begin{equation}\label{eq:original_Magnus}
  M_k(t) =
 \frac{(-i)^k}{k^2} \int_0^t dt_1 \cdots \int_0^{t_{k-1}}dt_k \sum_{\sigma \in S_k}
  \frac{(-1)^{\des(\sigma)}}{\binom{k-1}{\des(\sigma)}}
  \left[ H(t_{\sigma(1)}), H(t_{\sigma(2)}),\dots,H(t_{\sigma(k)})\right],
\end{equation}
where we adopt a standard condensed notation for the right-nested commutator:
\begin{equation}
  [X_1,\dots,X_k] \coloneqq [X_1,[X_2,\dots,[X_{k-1},X_k]\dots]].
\end{equation}
Here $\sigma$ is one of the possible $k!$ permutations of ${1,2,\dots,k}$ and $\des(\sigma)$ is the number of descents\footnote{A descent in a permutation $\sigma$ is any position $i\in{1,2,...,n-1}$ for which $\sigma(i+1)<\sigma(i)$.} in the permutation $\sigma$.
In this work we will sometimes denote by $M_k(H,t)$ the $k$'th order Magnus expansion of a periodic Hamiltonian $H$ evaluated at time $t$.

The Magnus expansion can be used to approximate the Floquet Hamiltonian for a time-periodic system.
From \cref{thm:Floquet-Bloch}, there is a direct correspondence between $H_F$ and the Magnus expansion evaluated at the period:
\begin{equation}
  \mathcal{M}(T) = -i H_F T.
\end{equation}
  The Floquet Hamiltonian can then be written in terms of the Magnus expansion:\footnote{Note the differing indexing of the expansion terms.
    This is to be consistent with the Magnus expansion literature and for convenience when considering the Floquet Hamiltonian.}
  \begin{equation}
    H_F = \sum_{k=0}^\infty H_F^{(k)} \quad\text{ where }\quad H_F^{(k)} \coloneq \frac{i}{T}M_{k+1}(T).
  \end{equation}
By explicitly calculating the dominant orders in this expansion, we can approximate the Floquet Hamiltonian.
An important property of this Magnus expansion is that it remains Hermitian when truncated (unlike e.g.\ the Dyson expansion).

We are concerned here with $\cS$-driven Hamiltonians of the form in \cref{def:S-driven}.
It will be convenient to express the periodic driving functions $f_x(t)$ in terms of their Fourier series.
We will restrict to driving functions with finite Fourier series expansions and all terms of cosine form except (at most) one, which will suffice for our purposes.

\begin{definition}[$\cS$-driven Hamiltonian (Fourier expansion)]\label{def:S-driven_fourier}
  For an $\cS$-driven Hamiltonian from \cref{def:S-driven}, we write
  \begin{equation}
    H(t) =
    \begin{cases}
      \sum_{j=1}^{C} a_j \left(e^{i \ell_j \omega t} + e^{-i \ell_j \omega t} \right) A_j
      & \text{$C$ odd}, \\[2ex]
      \sum_{j=1}^{C-1} a_j \left(e^{i \ell_j \omega t} + e^{-i \ell_j \omega t} \right) A_j
                      + i \,a_C \left(e^{-i \ell_C \omega t} - e^{i \ell_C \omega t} \right) A_C
      & \text{$C$ even}.
    \end{cases}
  \end{equation}
  Here, $\omega>0$ is the base frequency, $\ellvec = \ell_1,\dots,\ell_j,\dots,\ell_C$ is the vector of driving frequencies, $a_j$ are the amplitudes for each frequency component, and the operators $A_j \in \cS$ are chosen from the set $\cS$ of available Hamiltonian terms.
\end{definition}

Compared to \cref{def:S-driven} of $\cS$-driven Hamiltonians, this regroups terms into $C$ different Fourier frequency components.
We can easily recover the form of \cref{def:S-driven}, by setting the periodic driving functions to be
\begin{equation}
  f_x(t) = \sum_m a_m \sum_{j:A_j=h_x} \left(\e^{i \ell_j^{(m)} \omega t} + \e^{-i \ell_j^{(m)} \omega t} \right)
\end{equation}
for odd $C$, and analogously for even $C$.

Here and throughout, we will choose driving frequencies that satisfy certain conditions, described by the following.

\begin{condition}[Admissible frequency vectors]\label{def:admissible_frequency_vector}
  A frequency vector
  \begin{equation}
    \ellvec = (\ell_1,\dots,\ell_K)\in \mathbb Z^K
  \end{equation}
  is called \emph{admissible} if for every
  \(1\leq k\leq K\), every choice of indices $(i_1,\dots,i_k)\in\{1,\dots,K\}$, and every choice of signs $s_1,\dots,s_k\in\{\pm1\}$,
  the relation
  \begin{equation}
    \sum_{q=1}^k s_q\ell_{i_q}=0
  \end{equation}
  holds if and only if one of the following two alternatives is satisfied:
  \begin{enumerate}[label=(\roman*)]\label{cond:zero_sum}
  \item $k=K$, every index is distinct, and
    $s_1=\cdots=s_K=\pm1$. I.e.
    \begin{equation*}
      \sum_{j=1}^K \ell_j=0.
    \end{equation*}
  \item $\{(i_1,s_1),\dots,(i_k,s_k)\}$ admits a pairing such that each pair consists of identical indices with opposite signs,
    \begin{equation}
      (i_a,s_a),(i_b,s_b)\quad\text{with}\quad i_a=i_b,\ s_a=-s_b .
    \end{equation}
  \end{enumerate}
\end{condition}

We will later need to divide the frequencies into ``blocks'' $\ellvec^{(m)}$, chosen such that each block of frequencies contributes independently to the Magnus expansion at leading order.
To this end, we refine \cref{def:S-driven_fourier} to introduce additional block indices.
Note that this is purely notational; it is easy to see that this does not change the class of Hamiltonians under consideration.

\begin{definition}\label{def:S-driven_blocks}
  For an $\cS$-driven Hamiltonian from \cref{def:S-driven_fourier}, we divide the frequencies into multiple blocks $\ellvec^{(m)}$, and write
  \begin{equation}
    H(t) = \sum_m H_m(t)
  \end{equation}
  where
  \begin{adjustwidth}{-1cm}{-1cm}
	  \begin{equation}\label{eq:H_ell_s_m_def}
	    H_m(t) =
	    \begin{cases}
	      \sum_{j=1}^{K} a^{(m)}_j \left(e^{i \ell^{(m)}_j \omega t} + e^{-i \ell^{(m)}_j \omega t} \right) A_j
	      & \text{$K$ odd}, \\[2ex]
	      \sum_{j=1}^{K-1} a^{(m)}_j \left(e^{i \ell^{(m)}_j \omega t} + e^{-i \ell^{(m)}_j \omega t} \right) A_j
	                      + i \,a^{(m)}_K \left(e^{-i \ell^{(m)}_K \omega t} - e^{i \ell^{(m)}_K \omega t} \right) A_K
	      & \text{$K$ even}.
	    \end{cases}
	  \end{equation}
  \end{adjustwidth}
\end{definition}

Note that, here, we deliberately choose to restrict every block $H_m$ to contain the same number $K$ of frequency components.
We extend the notion of admissible frequencies to such sets of frequency vectors $\ellvec^{(m)}$, imposing an additional condition on the interaction between distinct frequency blocks.

\begin{condition}[Admissible frequency sets]\label{def:admissible_frequency_set}
	We say that a set of frequency vectors
	\begin{equation}
		\mathcal{L}=\{\ellvec^{\,(1)},\dots,\ellvec^{\,(d)}\}
	\end{equation}
	is \emph{admissible} if the following two conditions hold:
	\begin{enumerate}[label=(\roman*)]
		\item Each individual vector \(\ellvec^{\,(n)}\in\mathcal{L}\) is admissible in the sense of \cref{def:admissible_frequency_vector}.
		\item Any vanishing signed sum involving components from more than one vector in $\mathcal L$ admits a pairing such that each pair consists of identical labelled components with opposite signs; namely, the terms can be partitioned into pairs of the form
		\begin{equation}
			\left\{\ell_i^{(n)},-\ell_i^{(n)}\right\}.
		\end{equation}
	\end{enumerate}
\end{condition}

\subsection{Vanishing lower orders}

We will show that the above conditions on the driving frequencies lead to vanishing of the lower-order terms of the Magnus expansion. First we prove a useful lemma that is a consequence of the fact that the integral of a non-trivial Fourier mode over a full period vanishes.

\begin{lemma}\label{lem:vanishing_nested_integrals}
  Let \(T=2\pi/\omega\), and let $a_1,\dots,a_K\in\mathbb{Z}$.
  Consider any ordered sequence
  \begin{equation}
    (a_{i_1},\dots,a_{i_J}),\qquad
    1\leq J\leq K,\quad i_1,\dots,i_J\in\{1,\dots,K\},
  \end{equation}
  where repetitions of indices are allowed.
  If no non-empty subset of the selected terms vanishes, i.e.
  \begin{equation}
    \sum_{m\in S}a_{i_m}\neq0,
    \qquad
    \emptyset \neq S\subseteq\{1,\dots,J\},
  \end{equation}
  then the following associated integral vanishes:
  \begin{equation}
  	\int_0^Tdt_1\int_0^{t_1}dt_2\cdots\int_0^{t_{J-1}}dt_J\,
  	\exp\left(i\omega\sum_{m=1}^{J}a_{i_m}t_m\right)
  	=0.
  \end{equation}
\end{lemma}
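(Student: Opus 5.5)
The proof is by induction on $J$, applied to a slightly more general statement: for any finite sequence of integers $(b_1,\dots,b_J)$ with no vanishing non-empty subset sum,
\begin{equation*}
  I_J(b_1,\dots,b_J) \coloneqq \int_0^Tdt_1\int_0^{t_1}dt_2\cdots\int_0^{t_{J-1}}dt_J\,
  \exp\Bigl(i\omega\sum_{m=1}^{J}b_m t_m\Bigr) = 0 .
\end{equation*}
The lemma is the special case $b_m = a_{i_m}$. Allowing arbitrary integers, rather than insisting the entries come from the fixed list $a_1,\dots,a_K$, is what makes the induction close, since the inductive step produces new integer sequences.

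\emph{Base case $J=1$.} The singleton subset gives $b_1\neq 0$. Since $b_1\in\mathbb{Z}$ and $T=2\pi/\omega$, we have $\int_0^T e^{i\omega b_1 t}\,dt = (e^{2\pi i b_1}-1)/(i\omega b_1) = 0$.

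\emph{Inductive step, $J\ge 2$.} Evaluate the innermost integral explicitly. The hypothesis applied to the singleton $\{J\}$ gives $b_J\neq0$, so
\begin{equation*}
  \int_0^{t_{J-1}} e^{i\omega b_J t_J}\,dt_J = \frac{e^{i\omega b_J t_{J-1}}-1}{i\omega b_J}.
\end{equation*}
Substituting this back gives
\begin{equation*}
  I_J(b_1,\dots,b_J) = \frac{1}{i\omega b_J}\Bigl[ I_{J-1}(b_1,\dots,b_{J-2},\,b_{J-1}+b_J) - I_{J-1}(b_1,\dots,b_{J-1}) \Bigr].
\end{equation*}

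It remains to check that both shorter sequences inherit the no-vanishing-subset-sum hypothesis.
\begin{itemize}
  \item For $(b_1,\dots,b_{J-1})$ this is immediate: its subset sums are among those of the original sequence.
  \item For $(b_1,\dots,b_{J-2},b_{J-1}+b_J)$, take a non-empty subset. If it does not use the merged entry, its sum is a subset sum of the original sequence. If it does, its sum equals the original subset sum over the same indices together with both $J-1$ and $J$. In either case it is a non-empty subset sum of the original sequence, hence nonzero.
\end{itemize}
The induction hypothesis therefore gives that both $I_{J-1}$ terms vanish, and hence $I_J=0$.

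The only point needing care is this hereditary check, in particular noticing that the merged entry $b_{J-1}+b_J$ must itself be nonzero, which is exactly where subsets of size greater than one in the hypothesis are used. It is also why the induction must be run over general integer sequences rather than over subsequences of $(a_1,\dots,a_K)$. Everything else is the elementary fact that a nonzero integer Fourier mode integrates to zero over a full period.
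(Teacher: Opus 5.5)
Your proof is correct and follows essentially the same route as the paper: integrate out the innermost variable, note that every resulting exponent coefficient is a non-empty subset sum and hence nonzero, and finish with the vanishing of a nonzero integer Fourier mode over a full period. The only difference is presentation. You package the paper's informally unrolled recursion as an induction over general integer sequences, with an explicit recurrence and a check that the hypothesis is inherited, which makes it cleaner and more rigorous.
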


\begin{proof}
  Denote the integral in the statement by \(I_J(a_{i_1},\dots,a_{i_J})\). Performing the innermost integration over \(t_J\) gives
  \begin{equation}
  	\begin{multlined}
  		I_J(a_{i_1},\dots,a_{i_J})
  		=
  		\frac{1}{i\omega a_{i_J}}
  		\int_0^Tdt_1\int_0^{t_1}dt_2\cdots\int_0^{t_{J-2}}dt_{J-1} \\
  		\times\left(
  		\exp\left[i\omega\left(\sum_{m=1}^{J-2}a_{i_m}t_m+(a_{i_{J-1}}+a_{i_J})t_{J-1}\right)\right]
  		-
  		\exp\left[i\omega\sum_{m=1}^{J-1}a_{i_m}t_m\right]
  		\right).
  	\end{multlined}
  \end{equation}
  The two terms correspond to evaluating the boundary of the innermost integration at $t_J=t_{J-1}$ and $t_J=0$, respectively.
  The integrations are then performed recursively: at each recursive step, the coefficient of the integration variable is a sum of a non-empty subset of the selected terms $\{a_{i_1},\dots,a_{i_J}\}$. By assumption, every such sum is nonzero, so every recursive integration is well-defined and follows the pattern
  \begin{equation}
    \int_0^{t_m} dt_m \, \e^{i \omega \left(\sum_{j=1}^{m-1} a_{i_j} t_j + \phi t_m\right)}
    =
    \frac{1}{i \omega \, \phi} \left(e^{i \omega \left( \sum_{j=1}^{m-2} a_{i_j} t_j + (a_{i_{m-1}} + \phi)t_{m-1} \right) }
      - e^{i \omega \sum_{j=1}^{m-1} a_{i_j} t_j}\right) \ ,
  \end{equation}
  where $\phi$ is a non-zero integer given by a sum of a non-empty subset of the selected terms, determined by the preceding integrations.

  After integrating $J-1$ times, the integral can therefore be written as
  \begin{equation}
    I_J(a_{i_1},\dots,a_{i_J})
    = \frac{1}{(i\omega)^{J-1}}\sum_{c}\alpha_c\int_0^Tdt_1\,\exp(i\omega\gamma_c t_1),
  \end{equation}
  where each $\gamma_c$ is a non-zero integer obtained as a sum of a non-empty subset of the selected terms, and $\alpha_c$ are rational coefficients arising from the recursive integrations, whose explicit values are not needed.

  Finally, since $T=2\pi/\omega$ and $\gamma_c\in\mathbb{Z}\setminus\{0\}$,
  \begin{equation}
    \int_0^Tdt_1\,\exp(i\omega\gamma_c t_1) = \frac{\exp(i2\pi\gamma_c)-1}{i\omega\gamma_c} = 0
  \end{equation}
  Hence the final expression can be written as a sum of terms, each vanishing, resulting in
  \begin{equation}
    I_J(a_{i_1},\dots,a_{i_J}) = 0.
  \end{equation}
\end{proof}

\begin{theorem}\label{thm:vanishing_lower_orders}
  If the driving frequencies $\ellvec^{(m)}$ in \cref{def:S-driven_fourier} satisfy \cref{def:admissible_frequency_set}, then $M_k(T) = 0$ for all $k < K$.
\end{theorem}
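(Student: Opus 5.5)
The plan is induction on $k$. At order $k$ I would first use a time-translation symmetry to discard every contribution to $M_k(T)$ whose total signed frequency is nonzero. I would then use the admissibility conditions together with a time-reversal symmetry to kill the contributions whose total frequency does vanish. Expanding each $H_m(t)$ into Fourier modes $e^{\pm i\ell^{(m)}_j\omega t}$, \cref{eq:original_Magnus} writes $M_k(T)$ as a sum over ordered tuples of labelled modes $\bigl((j_1,m_1,s_1),\dots,(j_k,m_k,s_k)\bigr)$. Each tuple contributes a nested commutator of the $A_{j_q}$ times an integral of the type in \cref{lem:vanishing_nested_integrals}. Scaling each amplitude $a^{(m)}_j$ by an independent formal parameter makes $M_k(T)$ a homogeneous polynomial of degree $k$ in these parameters. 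I therefore group terms into ``monomials'' labelled by multisets of modes, each carrying a total frequency $\nu=\sum_q s_q\ell^{(m_q)}_{j_q}$.

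\emph{Step 1 (only $\nu=0$ survives).} Assume inductively that $M_j(T)=0$ for all $j<k$; the base case $k=1$ is immediate, since $\int_0^T e^{\pm i\ell\omega t}dt=0$. Let $H_\tau(t)=H(t+\tau)$. By periodicity and \cref{thm:Floquet-Bloch}, the one-period propagator of $H_\tau$ is $U(\tau)U(T)U(\tau)^{-1}$, so its Magnus exponent is $U(\tau)\mathcal M(T)U(\tau)^{-1}$. Expanding $U(\tau)$ in powers of the amplitudes, conjugation only adds terms of strictly higher degree. Since all orders below $k$ vanish, the degree-$k$ part is unchanged: $M_k(H_\tau,T)=M_k(H,T)$. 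On the other hand, the translation multiplies each mode by $e^{\pm i\ell\omega\tau}$, so a monomial of total frequency $\nu$ acquires the phase $e^{i\nu\omega\tau}$. Since this holds for all $\tau$, every monomial with $\nu\neq 0$ has vanishing coefficient. (Where available, \cref{lem:vanishing_nested_integrals} gives the same conclusion directly for tuples with no vanishing sub-sum.)

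\emph{Step 2 (admissibility).} For $k<K$, alternative (i) of \cref{def:admissible_frequency_vector} is excluded. Any vanishing signed sum, including sums mixing blocks, is covered by \cref{def:admissible_frequency_set}(ii). Hence every surviving monomial consists entirely of pairs $\{+\ell^{(m)}_j,-\ell^{(m)}_j\}$ of the same labelled component. In particular $k$ must be even, and every operator appears an even number of times. This already settles all odd $k<K$.

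\emph{Step 3 (even $k$, the main obstacle).} I expect this case to be the hard part, because fully paired monomials are genuinely non-oscillatory and generic high-frequency expansions do produce them at fourth order. My first attempt would be time-reversal symmetry. The substitution $t\mapsto T-t$ leaves every cosine mode invariant and flips the sign of the single sine term. In a fully paired monomial the sine factor occurs an even number of times, so restricted to these monomials the drive behaves as time-symmetric about $T/2$. For time-symmetric drives one has $U(T)^{-1}=U(T)^\dagger$ computed with reversed ordering, which gives $\mathcal M(T)\mapsto -\mathcal M(T)$ under the reversal of \cref{eq:original_Magnus}. Since the reversal contributes a factor $(-1)^{k+1}$ per order, the even-$k$ Magnus terms vanish. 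I would make this rigorous on the paired sector by separating the sine parity explicitly, tracking sign bookkeeping through the descent coefficients $(-1)^{\des(\sigma)}/\binom{k-1}{\des(\sigma)}$. If that fails, the fallback is a direct check that in each fully paired nested commutator the contributions of $\sigma$ and its reversal cancel, possibly aided by antisymmetry when a pair $A_j,A_j$ sits innermost.
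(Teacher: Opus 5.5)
Your argument is correct in outline and follows a genuinely different route from the paper. The paper works integral by integral. It applies \cref{lem:vanishing_nested_integrals} to sequences that are not fully paired. For fully paired sequences it matches each boundary-evaluation path with a ``mirror'' path, obtained by flipping signs inside a maximal resonant block, and argues the two differ by $(-1)^{m-n}=-1$.

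You instead use symmetries of the whole of $M_k(T)$. Translation invariance, given the inductive vanishing of lower orders, removes every sector of non-zero total frequency at once. Admissibility reduces the zero-frequency sector at order $k<K$ to fully paired tuples, which settles odd $k$. Finally, $t\mapsto T-t$ handles even $k$.

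Your first attempt at Step~3 goes through; you need neither descent bookkeeping nor the fallback. Multiply every sine amplitude $a^{(m)}_K$ by a parameter $\mu$ (for odd $K$ there is none and $H$ is already reversal-symmetric), so that $H_\mu(T-t)=H_{-\mu}(t)$. Set $\widetilde H(t)=H(T-t)$. From $U[H](T)^{-1}=U[-\widetilde H](T)$ you get $M_k(H_\mu,T)=(-1)^{k+1}M_k(H_{-\mu},T)$. Reversal maps the frequency-$\nu$ sector to the frequency-$(-\nu)$ sector, so the zero-frequency part $c_0(\mu)$ is odd in $\mu$ for even $k$. It is also even in $\mu$, because a fully paired tuple contains the sine modes $\pm\ell^{(m)}_K$ in pairs. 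Hence $c_0\equiv 0$.

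One small correction: a monomial in the amplitudes does not determine a total frequency, since the modes $\pm\ell^{(m)}_j$ share the amplitude parameter $a^{(m)}_j$. Read the grading off the coefficient of $e^{i\nu\omega\tau}$ under a single overall scaling parameter; that is all Step~1 needs.

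What your route buys is that it never evaluates an individual integral containing an internal zero mode. That is exactly where the paper's case split is fragile. Its case (i) asserts that a sequence which is not fully paired has no vanishing sub-sum. Yet $(a_1,a_2,a_3)=(\ell_2,\ell_1,-\ell_1)$, in the notation of \cref{lem:vanishing_nested_integrals}, has one, and its integral equals $T/(\ell_1\ell_2\omega^2)\neq 0$: the inner zero mode produces a factor $t_1$, and $\int_0^T t\,e^{i\ell_2\omega t}\,dt\neq0$. The cancellation only occurs after summing over terms, which your translation argument captures automatically. Your route also makes transparent why the single sine component per block is harmless.

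The paper's term-wise approach, in exchange, needs no induction on $k$. It aims at an integral-level cancellation statement that it reuses for the intermediate orders in \cref{thm:bounded_amplitude_per_period}. There, lower Magnus orders are non-zero, so conjugation by $U(\tau)$ produces $\tau$-dependent corrections at order $k$. Your argument would therefore not transfer to that setting without modification.
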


\begin{proof}
  Substituting the Fourier expansion from \cref{def:S-driven_fourier} into the $k$-th Magnus term \cref{eq:original_Magnus}, results in a sum of nested commutators weighted by nested oscillatory time integrals:
  \begin{equation}
    M_k(T)
    =
    \sum_{\mathbf n,\mathbf j,\mathbf s,\sigma}
    C(\mathbf n,\mathbf j,\mathbf s,\sigma)
    \left[A_{j_1},\dots,A_{j_k}\right]
    I(\mathbf n,\mathbf j,\mathbf s,\sigma),
  \end{equation}
  where the sum is taken over
  \begin{equation}
    \mathbf n\in\{1,\dots,d\}^k,\qquad
    \mathbf j\in\{1,\dots,K\}^k,\qquad
    \mathbf s\in\{\pm1\}^k,\qquad
    \sigma\in S_{k-1}.
  \end{equation}
  Here $n_r$ specifies the frequency vector $\ell^{(n_r)}$ from which the $r$-th Fourier component is taken, $C(\mathbf n,\mathbf j,\mathbf s,\sigma)$ collects the combinatorial factors and amplitudes and
  \begin{equation}\label{eq:isolated_integral}
    I(\mathbf n,\mathbf j,\mathbf s,\sigma)
    =
    \int_0^Tdt_1\cdots\int_0^{t_{k-1}}dt_k\,
    \exp\left(
      i\omega
      \sum_{r=1}^k
      s_{\sigma^{-1}(r)}
      \ell^{(n_{\sigma^{-1}(r)})}_{j_{\sigma^{-1}(r)}}
      t_r
    \right).
  \end{equation}
  Therefore, to prove the vanishing of \(M_k(T)\), it is sufficient to
  show that every nested oscillatory integral \(I(\mathbf j,\mathbf s,\sigma)\) vanishes
  for \(k<K\).

  For a fixed choice of $(\mathbf n,\mathbf j,\mathbf s,\sigma)$, define
  \begin{equation}
    a_r \coloneqq s_{\sigma^{-1}(r)} \ell^{(n_{\sigma^{-1}(r)})}_{j_{\sigma^{-1}(r)}},
    \qquad r=1,\dots,k.
  \end{equation}
  Then $I(\mathbf n,\mathbf j,\mathbf s,\sigma)$ is precisely the nested oscillatory integral in \cref{lem:vanishing_nested_integrals} associated with the sequence \((a_1,\dots,a_k)\).

  Since the driving frequencies satisfy \cref{def:admissible_frequency_set}, every choice of $I(\mathbf n,\mathbf j,\mathbf s,\sigma)$ satisfies exactly one of the two alternatives in the admissibility condition.
  \begin{enumerate}[label=(\roman*)]
  \item The signed sequence $(a_1,\dots,a_k)$ does not decompose into opposite-sign pairs.
   Since $k<K$, the exceptional $K$-term resonance in \cref{def:admissible_frequency_vector} cannot occur.
   Therefore, if all selected components originate from a single frequency vector, the admissibility of that vector implies that no non-empty subset of the selected terms has zero sum. If the selected components involve more than one frequency vector, the same conclusion follows directly from condition (ii) of \cref{def:admissible_frequency_set}.

    Hence, in either case,
    \begin{equation}
      \sum_{r\in S} a_r\neq0,
      \qquad
      \emptyset\neq S\subseteq\{1,\dots,k\},
    \end{equation}
    and by \cref{lem:vanishing_nested_integrals} the integral is vanishing, $I(\mathbf n,\mathbf j,\mathbf s,\sigma)=0$.
  \item The selected terms admit a decomposition into opposite-sign pairs.
    In this case, the hypothesis of \cref{lem:vanishing_nested_integrals} is not satisfied, since some subsets of the sequence $(a_1,\dots,a_k)$ may have zero sum.
    We therefore show that all potentially non-vanishing contributions cancel pairwise.

    Consider a fixed contribution $\mathcal I$ to the integral $I(\mathbf n,\mathbf j,\mathbf s,\sigma)$ arising from a particular choice
    of boundary evaluations in the recursive integration.
    During the recursive integration, each coefficient generated for an integration variable
    is a sum of a subset of the selected coefficients $a_1,\dots,a_k$.
    If a given integration path contains no vanishing coefficient, then the corresponding contribution vanishes again by
    \cref{lem:vanishing_nested_integrals}.
    Hence only paths containing a zero-sum block can contribute.

    Let $a_n+\cdots+a_m=0$ be a maximal consecutive block whose sum appears as a vanishing coefficient during the recursive integration.
    The maximality fixes the resonant segment uniquely.
    Since the selected terms consist entirely of opposite-sign pairs, this block
    contains an even number of terms.

    For each such contribution $\mathcal I$, define the mirror contribution $\mathcal I'$ with the same $(\mathbf n,\mathbf j,\sigma)$ and flipping the sign vector
    \begin{equation}
      s_r'=
      \begin{cases}
        -s_r,& n\leq r\leq m,\\
        s_r,& \text{otherwise}.
      \end{cases}
    \end{equation}
    $\mathcal I'$ is present in the Magnus sum because the sum contains all sign assignments.
    The opposite-sign pairing condition ensures that this operation defines a unique partner, and the corresponding contribution has exactly the same boundary-evaluation pattern as $\mathcal I$.

    The integrations outside the resonant block are therefore identical for $\mathcal I$ and $\mathcal I'$, while the integrations within the block differ only by the signs of their denominators. The final integration over the zero mode produces only the factor
    \begin{equation}
      \int_0^T dt\,1=T
    \end{equation}
    and does not introduce an additional sign.
    Within the block, there are $m-n$ recursive integrations before the final zero-mode integration and every non-zero coefficient appearing in the recursive integration changes sign, and hence each integration in this block contributes an additional factor $-1$.
    \begin{equation}
      \mathcal I'=(-1)^{m-n}\mathcal I.
    \end{equation}
    Because the block contains an even number of entries, $m-n+1$ is even, so $m-n$ is odd and
    \begin{equation}
      \mathcal I'=-\mathcal I.
    \end{equation}
    Thus every non-zero contribution is paired with a unique mirror contribution of opposite sign and the corresponding terms cancel in the Magnus sum, and therefore the total contribution from case (ii) vanishes.
  \end{enumerate}

  Combining the two cases, every term in the expansion of $M_k(T)$ either vanishes individually or cancels with its mirror partner, yielding $M_k(T)=0$ for $k<K$ as stated.
\end{proof}

\subsection{Leading-order Magnus terms}

The coefficients appearing in the Magnus expansion are conveniently expressed in terms of the following Eulerian weights, which we will use throughout the paper.

\begin{definition}[Eulerian weights]\label{def:eulerian_weights}
  For a permutation $\sigma\in S_K$, let
  \begin{equation}
    \des(\sigma)\coloneqq \#\{j\in\{1,\dots,K-1\}:\sigma(j)>\sigma(j+1)\}.
  \end{equation}
  denote the number of descents in $\sigma$.
  The associated \textup{Eulerian weight} of $\sigma$ is
  \begin{equation}
    w(\sigma)\coloneqq \frac{(-1)^{\des(\sigma)}}{\binom{K-1}{\des(\sigma)}}.
  \end{equation}
\end{definition}

The vanishing of all orders below $K$ leaves the $K$'th Magnus term as the leading contribution, whose explicit form we now derive.

\begin{theorem}\label{thm:leading_order_term}
  Let $H(t)$ be of the form in \cref{def:S-driven_fourier} with a single admissible frequency vector $\ellvec$.
  Then the leading-order Magnus term is:
  \begin{equation}
    M_K(T) =
    (-i)^K(-1)^{\lfloor\frac{K-1}{2}\rfloor}\frac{2T\prod_{j=1}^K a_j}{K^2\omega^{K-1}}
    \sum_{\mu\in S_K}[A_{\mu(1)},\dots,A_{\mu(K)}]
    \sum_{\sigma\in S_K} w(\sigma) \, \Xi(\mu\,\sigma^{-1},\ellvec)
  \end{equation}
  where $w(\sigma)$ is the Eulerian weight from \cref{def:eulerian_weights}, and we define
  \begin{equation}\label{def:Xi_ell}
    \Xi(\pi,\ell) \coloneqq \prod_{J=1}^{K-1}\left(\sum_{i=1}^J \ell_{\pi(i)} \right)^{-1}.
  \end{equation}
\end{theorem}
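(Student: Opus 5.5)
We substitute the Fourier expansion of \cref{def:S-driven_fourier} into \cref{eq:original_Magnus} with $k=K$. Expanding multilinearly, $M_K(T)$ becomes a sum over index tuples $\mathbf j\in\{1,\dots,K\}^K$, signs $\mathbf s\in\{\pm1\}^K$ and permutations $\sigma\in S_K$. Each term is a weight $w(\sigma)$ times a nested commutator of the $A_{j}$ times an oscillatory integral of the form \cref{eq:isolated_integral}.

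The first step is to decide which terms survive, reusing the case analysis from the proof of \cref{thm:vanishing_lower_orders}.
\begin{itemize}
\item Tuples without a zero-sum resonance vanish by \cref{lem:vanishing_nested_integrals}.
\item Tuples decomposing entirely into opposite-sign pairs cancel by the mirror argument.
\end{itemize}
Since $K$ terms are selected, admissibility (\cref{def:admissible_frequency_vector}(i)) leaves exactly one new kind of contribution. Here $\mathbf j$ is a bijection, i.e.\ a permutation $\mu$ with $A_{j_r}=A_{\mu(r)}$ after relabelling, and all signs are equal, $s_r\equiv s=\pm1$. In that case the total frequency $\sum_r s\ell_{\mu(r)}$ vanishes. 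By admissibility, no proper nonempty subset of the selected terms sums to zero. Hence in the recursive integration only the boundary pattern that accumulates all frequencies into $t_1$ produces a zero mode, and all other patterns integrate to zero over the full period.

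The second step computes that surviving integral explicitly. We integrate from the innermost variable $t_K$ outward, always keeping the upper-limit boundary term, since the lower-limit ($t=0$) terms leave a nonzero total frequency and vanish after the $t_1$ integral. Write $b_r$ for the coefficient attached to $t_r$ after the reindexing by $\sigma$. The $r$-th integration then produces a factor $1/(i\omega(b_{r+1}+\cdots+b_K))$, a partial sum taken from the tail. The final $t_1$ integral gives $T$. Because the total sum vanishes, each tail sum equals minus the corresponding head sum $\sum_{i\le J}$.

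The integral is therefore
\[
T\,(i\omega)^{-(K-1)}(-1)^{K-1}\, s^{-(K-1)}\,\Xi(\pi,\ellvec),
\]
where $\pi$ is the appropriate composition of $\mu$ with $\sigma^{-1}$. This is precisely what produces the argument $\mu\sigma^{-1}$ in the statement. The two sign choices $s=\pm1$ need separate treatment:
\begin{itemize}
\item For odd $K$ all terms are cosine terms, so both signs carry amplitude factor $a_j$. They contribute $s^{K-1}=1$ equally, which gives the factor $2$.
\item For even $K$ the sine term $A_K$ carries $\pm i a_K$, which compensates the odd power $s^{K-1}$, so again the two signs add rather than cancel.
\end{itemize}
The amplitudes multiply to $\prod_j a_j$. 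Tracking powers of $i$, namely $(-i)^K$ from the Magnus formula, $i^{-(K-1)}$, $(-1)^{K-1}$, and the extra $i$ for even $K$, should collapse to $(-i)^K(-1)^{\lfloor (K-1)/2\rfloor}$. I will verify this by checking $K$ modulo $4$.

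Finally, we regroup the sum by the commutator pattern $[A_{\mu(1)},\dots,A_{\mu(K)}]$. For each $\mu$ the inner sum over $\sigma$ of $w(\sigma)\Xi(\mu\sigma^{-1},\ellvec)$ appears, with prefactor $1/K^2$ from \cref{eq:original_Magnus}.

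The main obstacle is the bookkeeping of how $\sigma$ in the Magnus formula, which permutes time labels inside the commutator, interacts with $\mu$, which assigns operators to integration times. I will carefully substitute $r\mapsto\sigma^{-1}(r)$ as in \cref{eq:isolated_integral} and check which of $\mu\sigma^{-1}$ or $\sigma^{-1}\mu$ appears in the head sums. I will also re-verify the claim that only the ``all upper-limit'' boundary path survives: a path containing a lower-limit evaluation could in principle re-create a zero coefficient only through a proper zero-sum subset, which admissibility excludes.
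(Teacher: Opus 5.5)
Your proposal is correct and follows the paper's proof essentially step for step: admissibility plus the arguments of \cref{thm:vanishing_lower_orders} reduce $M_K(T)$ to terms with $\mathbf j=\mu\in S_K$ and a uniform sign $s$, only the all-upper-limit boundary path survives and gives $T(-i\omega)^{-(K-1)}s^{K-1}\,\Xi(\mu\sigma^{-1},\ellvec)$, and the two sign choices add to give the factor $2$ and the sign $(-1)^{\lfloor (K-1)/2\rfloor}$ (for even $K$ this uses the coefficient $-is\,a_K$ of the sine term). One point to tighten, shared with the paper's case split in \cref{thm:vanishing_lower_orders}, is that your two bullets omit mixed tuples such as $(\ell_1,-\ell_1,\ell_2)$ for $K=3$: these have a vanishing sub-sum, so \cref{lem:vanishing_nested_integrals} does not apply and individual orderings give non-zero integrals, yet they are not entirely paired, so they must be removed by the path-wise mirror cancellation on the vanishing block, which is what the paper's phrase ``contains opposite-sign pairs'' is meant to cover.
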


\begin{proof}

  By \cref{thm:vanishing_lower_orders}, $M_k(T) = 0$ for all $k < K$ and the first potentially non-zero contribution to the Magnus expansion is the order-$K$ term.
  Specialising the expansion in \cref{eq:isolated_integral} to $k=K$ and $d=1$, the relevant oscillatory integrals take the form
  \begin{equation}\label{eq:leading_order_integral}
    I(\mathbf j,\mathbf s,\sigma)
    =
    \int_0^Tdt_1\cdots\int_0^{t_{K-1}}dt_K\,
    \exp\left(\mathrm{i}\omega \sum_{r=1}^{K} s_{\sigma^{-1}(r)} \ell_{j_{\sigma^{-1}(r)}}t_r \right).
  \end{equation}

  By the admissibility condition, any signed sum of fewer than $K$ selected components either contains opposite-sign pairs or is non-resonant.
  The these contributions vanish by the argument in \cref{thm:vanishing_lower_orders}.
  However, at order $K$, the only remaining possibility is the full $K$-frequency resonance of \cref{def:admissible_frequency_vector}(i), where every frequency component appears exactly once with the same sign.

  Hence the remaining contributions $M_K(T)$ correspond to the integrals \cref{eq:leading_order_integral} for which $\{j_1,\dots,j_K\}=\{1,\dots,K\}$ i.e. every frequency coefficient appears exactly once.
  Thus the index vector $\mathbf{j}$ is restricted to a permutation of
  $\{1,\dots,K\}$, and can be relabelled by $\mu\in S_K$.
  The associated nested commutator therefore contains every operator $A_1,\dots,A_K$ once and the $K$'th term is given by
  \begin{equation}\label{eqn:magnus_itermediate}
    M_K(T)
    =
    (-i)^K \frac{\prod_{j=1}^K a_j}{K^2}
    \sum_{\sigma\in S_K} w(\sigma)
    \sum_{\mu\in S_K}
    [A_{\mu(1)},\dots,A_{\mu(K)}]
    \sum_{\mathbf s\in\{\pm1\}^K}
    I(\mu,\mathbf s,\sigma).
  \end{equation}

  The next step is to evaluate $ \sum_{\mathbf s}I(\mu,\mathbf s,\sigma)$ for a fixed commutator ordering $\mu$ and integration ordering $\sigma$.
  In the recursive evaluation of the nested integral, each integration produces two boundary terms, corresponding to evaluating the upper limit ($t_{r-1}$) or the lower limit ($0$).
  Any boundary choice involving a lower limit produces, at some stage, a partial sum over a proper subset of the frequency vector elements as the integration coefficient.
  This is non-zero by admissibility, and hence the corresponding contribution vanishes by \cref{lem:vanishing_nested_integrals}.
  Therefore only the all-upper boundary path survives.

  For this path, the coefficient of the final integration variable $t_1$ is the full signed sum
  \begin{equation}
    \sum_{r=1}^{K}s_{\sigma^{-1}(r)}
    \ell_{\mu (\sigma^{-1}(r))}.
  \end{equation}
  Admissibility allows this sum to vanish only in the case (i) of \cref{def:admissible_frequency_vector} with all signs equal.
  Therefore only the two uniform sign choices $\{+1\}^K$ and $\{-1\}^K$ contribute from the sum over $\mathbf s$.

  For the two surviving $\mathbf s$, the all-upper integrations over
  $t_K,\dots,t_2$ generate a partial sums in the exponent
  \begin{equation}
    \sum_{r=1}^{J}\ell_{\mu(\sigma^{-1}(r))},
    \qquad J=1,\dots,K-1.
  \end{equation}
  with each integration contributing a factor $(\mathrm{i}\omega)^{-1}$ and the inverse of the previous partial sum.
  The remaining $t_1$ integral has zero frequency by the resonance
  condition, and hence contributes a factor of $T$.
  Thus, using \cref{def:Xi_ell},
  \begin{align}
    I\bigl(\mu,\{+1\}^K,\sigma\bigr)
    &=
      \frac{T}{(-i\omega)^{K-1}}
      \Xi(\mu\circ\sigma^{-1},\ellvec),\\
    I\bigl(\mu,\{-1\}^K,\sigma\bigr)
    &=
      \frac{(-1)^{K-1}T}{(-i \omega)^{K-1}}
      \Xi(\mu\circ\sigma^{-1},\ellvec).
  \end{align}

  For odd $K$, the two uniform sign choices contribute with the same sign, since $K-1$ is even contributing a factor of $2$.
  Using $\mathrm{i}^{K-1}=(-1)^{(K-1)/2}$ and substituting into \cref{eqn:magnus_itermediate} gives
  \begin{equation}
    M_K(T)
    =
    (-i)^K(-1)^{\frac{K-1}{2}}
    \frac{2T\prod_{j=1}^K a_j}{K^2\omega^{K-1}}
    \sum_{\mu\in S_K}[A_{\mu(1)},\dots,A_{\mu(K)}]
    \sum_{\sigma\in S_K}
    w(\sigma)\,
    \Xi(\mu\circ\sigma^{-1},\ellvec),
  \end{equation}

  For even $K$, the two uniform sign choices acquire opposite signs from the $K-1$ integrations.
  However, the $A_K$ Fourier component in \cref{def:S-driven_fourier} has coefficients $-\mathrm{i}a_K$ and $+\mathrm{i}a_K$ for the positive- and negative-frequency components, respectively.
  Thus flipping all signs introduces an additional factor of $-1$, which cancels the factor $(-1)^{K-1}=-1$ from the integrations. The two contributions therefore again add, giving a factor of $2$.
  Since $K-1$ is odd,
  \begin{equation}
    \mathrm{i}^{-(K-1)}
    =
    (-1)^{(K-2)/2}(-\mathrm{i}),
  \end{equation}
  and the additional factor of $\mathrm{i}$ from the $A_K$ Fourier coefficient gives the overall sign $(-1)^{(K-2)/2}$.

  Since
  \begin{equation}
    (-1)^{\lfloor (K-1)/2\rfloor}
    =
    \begin{cases}
      (-1)^{(K-1)/2}, & K\ \mathrm{odd},\\
      (-1)^{(K-2)/2}, & K\ \mathrm{even},
    \end{cases}
  \end{equation}
  both cases are captured by the prefactor in the theorem statement.
\end{proof}

The result for an admissible frequency vector immediately extends to an admissible frequency set: condition~(ii) of \cref{def:admissible_frequency_set} ensures that any contribution involving components from different frequency vectors has a non-zero coefficient in the exponential.
Hence, by \cref{lem:vanishing_nested_integrals}, these contributions vanish upon integration over one period, and the leading-order term in the Magnus expansion is simply the sum of the leading-order terms for each vector.

\begin{corollary}\label{cor:leading_order_Magnus_K}
  Let $H(t)=\sum_{m=1}^d H_m(t)$,where each $H_m(t)$ is of the form in \cref{def:S-driven_fourier} with frequency vectors $\{\ellvec^{(1)},\dots,\ellvec^{(d)}\}$ that form an admissible frequency set.
  Then the leading-order Magnus term is
  \begin{equation}
    M_K(T)=\sum_{m=1}^d M_K^{(m)}(T),
  \end{equation}
  where $M_K^{(m)}(T)$ is the leading-order Magnus term associated with $H_m(t)$, given by \cref{thm:leading_order_term}.
\end{corollary}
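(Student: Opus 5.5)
Since $H(t)=\sum_m H_m(t)$ is linear in the blocks, the plan is to substitute it into the $K$-th Magnus term \cref{eq:original_Magnus} and expand multilinearly. This is exactly the decomposition already set up in the proof of \cref{thm:vanishing_lower_orders}, now specialised to $k=K$. Each contribution is labelled by $(\mathbf n,\mathbf j,\mathbf s,\sigma)$ with $\mathbf n\in\{1,\dots,d\}^K$. It carries the nested commutator $[A_{j_1},\dots,A_{j_K}]$, a combinatorial and amplitude prefactor $C(\mathbf n,\mathbf j,\mathbf s,\sigma)$, and the oscillatory integral $I(\mathbf n,\mathbf j,\mathbf s,\sigma)$ of \cref{eq:isolated_integral}. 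I will split the sum over $\mathbf n$ into two classes: the \emph{diagonal} labels, where $n_1=\cdots=n_K=m$ for some $m$, and the \emph{mixed} labels, where at least two distinct blocks occur.

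For the diagonal labels with fixed $m$, the collected terms coincide exactly with the $K$-th Magnus term computed for $H_m(t)$ alone. The reason is that the prefactors $C$ and the integrals $I$ depend only on the amplitudes $a^{(m)}_j$ and frequencies $\ellvec^{(m)}$ of that block. By condition (i) of \cref{def:admissible_frequency_set}, $\ellvec^{(m)}$ is an admissible frequency vector, so \cref{thm:leading_order_term} applies and these terms sum to $M_K^{(m)}(T)$. Summing over $m$ gives the claimed expression, provided the mixed labels contribute nothing.

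For the mixed labels, I will use condition (ii) of \cref{def:admissible_frequency_set}. Take a mixed label and its signed sequence $a_r=s_{\sigma^{-1}(r)}\ell^{(n_{\sigma^{-1}(r)})}_{j_{\sigma^{-1}(r)}}$. Any vanishing subset sum that involves components from more than one vector must pair into identical components with opposite signs. A vanishing subset sum that uses components from a single vector must obey that vector's admissibility, so it is either a pairing or the full $K$-term resonance. The full resonance needs all $K$ distinct components of that one vector, which is impossible in a mixed label of length $K$ that also draws from another block. There are then two cases, mirroring the proof of \cref{thm:vanishing_lower_orders}. If no nonempty subset sum vanishes, \cref{lem:vanishing_nested_integrals} gives $I=0$ directly. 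Otherwise, every integration path that can survive contains a resonant block made up of opposite-sign pairs. The sign-flip mirror argument of case (ii) in that proof then pairs contributions $\mathcal I'=-\mathcal I$. The flip preserves $(\mathbf n,\mathbf j,\sigma)$, so the partner is also a mixed label with the same commutator and prefactor. Hence the mixed contributions cancel.

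The main obstacle is this last step. I have to check that the mirror cancellation really transfers to mixed labels. The resonant segment encountered during the recursive integration might mix a pairing within one block with pairings across blocks. I also need the sign flip to leave the amplitude prefactor unchanged, including for the sine-type component $A_K$, whose Fourier coefficients $\mp i a_K$ change sign under the flip. My first attempt will be to observe that the flipped segment consists of complete opposite-sign pairs, so the sine components inside it occur in pairs and their sign changes cancel. The argument of \cref{thm:vanishing_lower_orders} then applies verbatim. If that pairing is not clean, the fallback is to require strictly that no mixed subset sum vanish except through exact pairs, and to argue via \cref{lem:vanishing_nested_integrals} path by path.
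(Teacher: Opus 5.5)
Your proposal is correct and is essentially the paper's own proof: the single-block labels reassemble into $M_K^{(m)}(T)$ by \cref{thm:leading_order_term}, and the mixed labels vanish either by \cref{lem:vanishing_nested_integrals} or by the sign-flip mirror cancellation borrowed from the proof of \cref{thm:vanishing_lower_orders}, exactly as in the paper. You are more explicit than the paper in two places: you rule out a full $K$-resonance inside a mixed label, and you check that sine-type components in the flipped block come in opposite-sign pairs, so the prefactor is unchanged; if you write the mirror step out in full, flip the \emph{first} zero-sum block met in the recursive integration rather than a maximal one, because nested zero modes inside a larger block break the sign count $\mathcal I'=(-1)^{m-n}\mathcal I$ (for example, the time-ordered word $(p,-p,q,-q)$ produces a secular term proportional to $T^2/(pq)$ that is invariant under flipping all four signs).
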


\begin{proof}
  By \cref{thm:vanishing_lower_orders}, $M_k(T)=0$ for all $k<K$.

  At order $K$, consider a contribution involving Fourier components from more than one frequency vector.
  Each coefficient arising during the recursive integration is a signed sum of a subset of these components.
  If this sum does not vanish, then the corresponding integral vanishes by \cref{lem:vanishing_nested_integrals}.
  If it does vanish, then by \cref{def:admissible_frequency_set} it consists only of pairwise cancellations.
  The corresponding integral therefore vanishes by the mirror argument in \cref{thm:vanishing_lower_orders}.

  Thus, the only non-vanishing order-$K$ contributions involve components from a single frequency vector $\ellvec^{(m)}$.
  By \cref{thm:leading_order_term}, these contributions sum to $M_K^{(m)}(T)$, and therefore
  \begin{equation}
    M_K(T)=\sum_{m=1}^d M_K^{(m)}(T).
  \end{equation}
\end{proof}

\section{Algebraic reduction via residues}
\label{sec:proof:Xi}

We want to identify the space spanned by the rational functions $\{\Xi(\pi,\cdot)\}_{\pi\in S_K}$ from \cref{thm:leading_order_term}.
We do this by analysing their simple poles along subset-sum hyperplanes, because residues isolate the contribution of specific prefix/suffix factors and allow us to establish the image space.\footnote{The technique of isolating coefficients via residues is analogous to the study of the Orlik-Solomon algebra, where the \emph{Leray residue} of a logarithmic form along a hyperplane corresponds to the algebraic restriction operation in Ref.~\cite[Sections~3.1 and~3.2]{OrlikTerao1992}.}

From \cref{def:Xi_ell}, we can write $\Xi(\pi,\ell)$ in the form
\begin{equation}\label{def:Xi_S}
	\Xi(\pi,\ell) \coloneqq \prod_{k=1}^{K-1}\frac{1}{S_k^\pi(\ell)},
	\qquad
	S_k^\pi(\ell):=\sum_{j=1}^k \ell_{\pi(j)},
	\qquad \pi\in S_K,
\end{equation}
viewed as a rational function on the affine hyperplane $H_0$ restricted to the Zariski-open set $U\subset H_0$ where all partial sums $S_k^\pi(\ell)$ are non-zero.
As these sets will be important throughout this section, we capture them in the following definition.

\begin{definition}[$H_0^{(K)}$ hyperplane and Zariski-open set $U$]\label{def:H0_U}
	We define the zero-sum hyperplane
	\begin{equation}\label{eq:H0_def}
		H_0^{(K)}\coloneqq\left\{\ell\in\mathbb C^K:\sum_{i=1}^K\ell_i=0\right\}.
	\end{equation}
	We denote by $U \subset H_0^{(K)}$ the Zariski-open subset on which all partial sums appearing in the denominators of $\Xi$ are non-zero:
	\begin{equation}
		U \coloneqq\left\{\ell\in H_0^{(K)}:S_k^\pi(\ell)\neq0\ \text{for all }\pi\in S_K,\ k=1,\dots,K-1\right\},
	\end{equation}
	where
	\begin{equation}
		S_k^\pi(\ell)\coloneqq\sum_{j=1}^k\ell_{\pi(j)}.
	\end{equation}
\end{definition}

\subsection{Hyperplanes and residues}

In preparation for factorising $\Xi$ via residues, we first formalise the relevant boundary hyperplanes of $H_0$ and the corresponding notion of a simple pole.
\begin{definition}[Subset-sum hyperplanes]\label{def:subset_hyperplanes}
	For a non-empty proper subset $I\subset [K]$ of size $r \coloneqq |I|$, we define the linear form $g_I(\ell) \coloneqq \sum_{i\in I}\ell_i$ and its associated \emph{subset-sum hyperplane}
	\begin{equation}
		H_I \coloneqq \big\{\ell\in H_0:\ g_I(\ell)=0\big\}.
	\end{equation}
	Because every point on $H_0$ satisfies $\sum_{i=1}^K \ell_i = 0$, we immediately have $g_{I^c}(\ell)=-g_I(\ell)$, and therefore $H_I = H_{I^c}$.
\end{definition}

\noindent With the geometry established, we say a rational function $F(\ell)$ on $H_0$ has at most a \emph{simple pole} along $H_I$ if, in some linear coordinates $(u,v)$ on $H_0$ with $u=g_I(\ell)$, one can write
\begin{equation}
	F(u,v)=\frac{A(v)}{u}+B(u,v),
\end{equation}
where $A$ and $B$ are regular near $u=0$ (i.e., have no poles in the neighbourhood). We then define the \emph{residue of $F$ along $H_I$} as the function $A(v)$ evaluated on $H_I$:
\begin{equation}
	\mathrm{Res}_{H_I}(F) \coloneqq A(v)
	\qquad\text{(equivalently, } \mathrm{Res}_{H_I}(F)=\lim_{u\to 0} u\,F(u,v)\text{)}.
\end{equation}

For any permutation $\pi\in S_K$ and $1 \le k \le K-1$, we denote its \emph{prefix set} of length $k$ by
\begin{equation}
	P_k(\pi) \coloneqq \{\pi(1),\dots,\pi(k)\}.
\end{equation}
The remaining elements $\{\pi(k+1), \dots, \pi(K)\}$ form the complementary suffix. Note that for a subset $I$ of size $r$, the condition $P_{K-r}(\pi) = I^c$ is equivalent to the requirement that the suffix of $\pi$ is exactly the set $I$.

\begin{lemma}\label{lem:pole_support}
	Fix a proper subset $I\subset [K]$ of size $r \coloneqq |I|$. The function $\Xi(\pi,\ell)$ has a simple pole along $H_I$ if and only if
	\begin{equation}
		P_r(\pi)=I \qquad \text{(prefix case)}
		\qquad\text{or}\qquad
		P_{K-r}(\pi)=I^c \qquad \text{(suffix case)}.
	\end{equation}
	In the prefix case, the pole arises from the factor $1/S_r^\pi(\ell)$, where $S_r^\pi(\ell)=g_I(\ell)$; in the suffix case, the pole comes from the denominator factor $S_{K-r}^\pi(\ell)$ where $S_{K-r}^\pi(\ell)=g_{I^c}(\ell)=-g_I(\ell)$ on $H_0$.
\end{lemma}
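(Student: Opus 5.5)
The proof comes down to identifying which of the linear forms $S_k^\pi$, $k=1,\dots,K-1$, appearing in the denominator of $\Xi(\pi,\ell)$, are proportional to $g_I$ on $H_0$. By construction $S_k^\pi = g_{P_k(\pi)}$. So the first step is a linear-algebra fact: for non-empty proper subsets $J,I\subset[K]$, the restrictions of $g_J$ and $g_I$ to $H_0$ are proportional if and only if $J=I$ or $J=I^c$.

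To prove this fact, note that $H_0$ is the kernel of the all-ones functional $\mathbf 1$. Linear forms on $H_0$ are therefore linear forms on $\mathbb C^K$ modulo $\mathbf 1$. Hence $g_J|_{H_0}=c\,g_I|_{H_0}$ means $\chi_J = c\,\chi_I + d\,\mathbf 1$ as indicator vectors. Compare entries on $I\cap J$, $I\setminus J$, $J\setminus I$ and $(I\cup J)^c$, and use that $I$, $J$ are non-empty and proper. This forces either $c=1,d=0$, giving $J=I$, or $c=-1,d=1$, giving $J=I^c$. The same computation shows that $g_J|_{H_0}$ is never the zero form, so every factor defines a genuine hyperplane.

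Next, I apply the fact to the factors of $\Xi(\pi,\ell)$. The factor $1/S_k^\pi$ has its zero set equal to $H_I$ exactly when $P_k(\pi)=I$ or $P_k(\pi)=I^c$. Since $|P_k(\pi)|=k$, the first option requires $k=r$ and the second requires $k=K-r$. These are precisely the prefix and suffix cases, and in the suffix case $S_{K-r}^\pi=g_{I^c}=-g_I$ on $H_0$. All other factors are regular and nonvanishing at a generic point of $H_I$, since their zero sets are different hyperplanes. If neither case holds, $\Xi$ is regular along $H_I$, which gives the "only if" direction.

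For the "if" direction, and for simplicity of the pole, I must rule out both cases holding at once, which would produce a double pole. Both cases hold only if $r=K-r$ and $P_r(\pi)$ equals both $I$ and $I^c$, which is impossible because $I\cap I^c=\emptyset$ and $I\neq\emptyset$. Since the prefix sets $P_k(\pi)$ for distinct $k$ have distinct sizes, exactly one factor vanishes on $H_I$. Writing $u=g_I$ with complementary coordinates $v$, we get $\Xi=\pm A(v,u)/u$ with $A$ regular and not identically zero on $H_I$. This is a genuine simple pole.

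The only step needing care is the proportionality fact modulo $\mathbf 1$; everything after it is bookkeeping.
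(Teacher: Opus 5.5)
Your route is the same as the paper's: identify which partial sums $S_k^\pi=g_{P_k(\pi)}$ cut out $H_I$ inside $H_0$, then use $|P_k(\pi)|=k$ to pin down $k=r$ or $k=K-r$. Your proportionality fact is correct: $g_J|_{H_0}\propto g_I|_{H_0}$ if and only if $J\in\{I,I^c\}$, proved by comparing entries of $\chi_J=c\,\chi_I+d\,\mathbf 1$. It supplies a justification the paper omits, since the paper simply asserts that a pole occurs iff some partial sum \emph{equals} $g_I$ or $g_{I^c}$ on $H_0$. Your treatment of the ``only if'' direction and of the generic nonvanishing of the other factors is also fine.

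The step that does not work as written is the simplicity argument. The dangerous situation is $r\neq K-r$ with both $P_r(\pi)=I$ and $P_{K-r}(\pi)=I^c$. Then $S_r^\pi$ and $S_{K-r}^\pi$ are two \emph{different} factors, both vanishing on $H_I$, which would give a double pole. Your sentence ``both cases hold only if $r=K-r$'' is asserted without reason. It actually treats the harmless case $r=K-r$, where both conditions concern the single factor $S_r^\pi$. The observation that prefix sets of distinct lengths have distinct sizes only shows that at most one factor per case vanishes, so at most two in total; it does not give exactly one.

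The missing idea is that the prefix sets of a fixed permutation are nested: $P_j(\pi)\subseteq P_k(\pi)$ for $j\le k$. If $r<K-r$, both conditions would give $I\subseteq I^c$, which is impossible since $I\neq\emptyset$. If $r>K-r$, they would give $I^c\subseteq I$, which is impossible since $I$ is proper. If $r=K-r$, they would force $I=I^c$. With this one-line fix, exactly one factor vanishes on $H_I$ and your proof is complete. The paper's proof also just states that ``exactly one partial sum vanishes generically on $H_I$'' without justification, so the nestedness observation is needed there too.
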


\begin{proof}
	A pole along $H_I$ occurs if and only if one of the partial sums $S_k^\pi(\ell)$ equals $g_I(\ell)$ or $g_{I^c}(\ell)$ on $H_0$. Since $|I|=r$ and $|I^c|=K-r$, this can happen only in the two cases $P_r(\pi)=I$ or $P_{K-r}(\pi)=I^c$.
	In the first case, $P_r(\pi)=I$, we have $S_r^\pi(\ell)=g_I(\ell)$, so the pole comes from the factor $1/S_r^\pi(\ell)$. In the second case, $P_{K-r}(\pi)=I^c$, it holds $S_{K-r}^\pi(\ell)=g_{I^c}(\ell)=-g_I(\ell)$ on $H_0$, so the pole comes from the factor $1/S_{K-r}^\pi(\ell)$.
	In either instance, exactly one partial sum vanishes generically on $H_I$ while all others remain non-zero, hence the pole is simple.
	Conversely, if neither condition holds, no denominator factor equals $g_I(\ell)$ or $g_{I^c}(\ell)$ on $H_0$; thus, none vanishes generically on $H_I$ and no pole is present.
\end{proof}

\medskip

When $P_r(\pi)=I$, we can write $\pi$ uniquely as a concatenation
\begin{equation}
	\pi=(\alpha,\beta),
	\qquad
	\alpha\in S(I)\cong S_r,
	\quad
	\beta\in S(I^c)\cong S_{K-r},
\end{equation}
meaning that the first $r$ values of $\pi$ are the sequence $\alpha(1),\dots,\alpha(r)$ (a permutation of $I$), followed by $\beta(1),\dots,\beta(K-r)$ (a permutation of $I^c$).
For instance, if $K=5$ with $I=\{2,4\}$ and $I^c=\{1,3,5\}$, the permutation $\pi=(4,2,5,1,3)$ has prefix $P_2(\pi)=I$ and splits cleanly into $\alpha=(4,2)$ and $\beta=(5,1,3)$.

From \cref{def:Xi_S}, we define the lower-degree functions for these subsystems:
\begin{equation}
	\Xi_I(\alpha,\ell) \coloneqq \prod_{k=1}^{r-1}\frac{1}{\sum_{j=1}^k \ell_{\alpha(j)}},
	\qquad
	\Xi_{I^c}(\beta,\ell) \coloneqq \prod_{s=1}^{K-r-1}\frac{1}{\sum_{j=1}^s \ell_{\beta(j)}}.
\end{equation}

\begin{lemma}\label{lem:res_factor_complete}
	Fix a proper subset $I\subset[K]$ of size $r \coloneqq |I|$.

	\smallskip

	\noindent\emph{(Prefix case)}
	If $P_r(\pi)=I$ and $\pi=(\alpha,\beta)$, then $\Xi(\pi,\ell)$ has a simple pole along $H_I$ and
	\begin{equation}
		\mathrm{Res}_{H_I}\big(\Xi(\pi,\ell)\big)
		=
		\Xi_I(\alpha,\ell)\cdot \Xi_{I^c}(\beta,\ell).
	\end{equation}

	\smallskip

	\noindent\emph{(Suffix case)}
	If $P_{K-r}(\pi)=I^c$ and $\pi=(\beta,\alpha)$, then $\Xi(\pi,\ell)$ has a simple pole along $H_I$ and
	\begin{equation}
		\mathrm{Res}_{H_I}\big(\Xi(\pi,\ell)\big)
		=
		-\,\Xi_I(\alpha,\ell)\cdot \Xi_{I^c}(\beta,\ell).
	\end{equation}
\end{lemma}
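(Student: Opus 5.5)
The plan is to compute the residue directly from the product formula \cref{def:Xi_S}, using \cref{lem:pole_support} to single out the unique factor that vanishes on $H_I$. Choose linear coordinates $(u,v)$ on $H_0$ with $u=g_I(\ell)$. By \cref{lem:pole_support} we can write $\Xi(\pi,\ell)=\frac{1}{S_j^\pi(\ell)}\,R(\ell)$. Here $j=r$ in the prefix case and $j=K-r$ in the suffix case, and $R$ is the product of the remaining $K-2$ reciprocal partial sums. None of these other partial sums vanishes generically on $H_I$: each is a subset sum $g_J$ with $J\neq I,I^c$, so it is not a multiple of $u$. Hence $R$ is regular near generic points of $u=0$, and $\mathrm{Res}_{H_I}\Xi=\lim_{u\to0}u\,\Xi=\pm R|_{H_I}$. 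The sign is $+$ if $S_j^\pi=u$ and $-$ if $S_j^\pi=-u$.

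\emph{Prefix case.} Here $S_r^\pi=g_I=u$, so the sign is $+$. For $k<r$ the factors are $S_k^\pi=\sum_{j\le k}\ell_{\alpha(j)}$, which are exactly the denominators of $\Xi_I(\alpha,\ell)$. For $k=r+s$ with $1\le s\le K-r-1$ we have $S_{r+s}^\pi=g_I(\ell)+\sum_{j\le s}\ell_{\beta(j)}$. Restricted to $H_I$, where $g_I=0$, this becomes $\sum_{j\le s}\ell_{\beta(j)}$, the denominators of $\Xi_{I^c}(\beta,\ell)$. Multiplying the two groups of factors gives $\Xi_I(\alpha,\ell)\,\Xi_{I^c}(\beta,\ell)$.

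\emph{Suffix case.} Write $\pi=(\beta,\alpha)$. The vanishing factor is $S_{K-r}^\pi=g_{I^c}=-u$ on $H_0$, which produces the overall minus sign. For $s<K-r$ the factors $S_s^\pi$ are the prefix sums of $\beta$, giving $\Xi_{I^c}(\beta,\ell)$. For $k=K-r+t$ with $1\le t\le r-1$, we get $S_k^\pi=g_{I^c}+\sum_{j\le t}\ell_{\alpha(j)}$, which on $H_I$ equals $\sum_{j\le t}\ell_{\alpha(j)}$. These are the denominators of $\Xi_I(\alpha,\ell)$, so the residue is $-\Xi_I(\alpha,\ell)\,\Xi_{I^c}(\beta,\ell)$.

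The one point needing care, and the main obstacle, is justifying that the residue is simply the restriction of $R$ to $H_I$. This requires $R$ to be regular along a Zariski-dense part of $H_I$ so that the limit is well defined. It suffices to note that for $J\notin\{I,I^c\}$ the form $g_J$ is not proportional to $g_I$ on $H_0$. Since $H_0$ is cut out only by $\sum_i\ell_i$, this amounts to checking that $g_J-c\,g_I$ is never a multiple of $\sum_i\ell_i$, which holds for distinct nonempty proper subsets. Consequently each $g_J$ vanishes on a proper subvariety of $H_I$, and the limit $u\to0$ may be taken factor by factor. Everything else is bookkeeping with the sign convention $g_{I^c}=-g_I$ from \cref{def:subset_hyperplanes}.
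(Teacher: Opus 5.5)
Your proof is correct and follows the paper's argument: you isolate the unique vanishing partial sum, multiply by $g_I$, and restrict to $H_I$. There the prefix sums of $\alpha$ give $\Xi_I(\alpha,\ell)$ and the shifted sums $g_I+\sum_{j\le s}\ell_{\beta(j)}$ reduce to the denominators of $\Xi_{I^c}(\beta,\ell)$, with the suffix-case minus sign coming from $g_{I^c}=-g_I$ on $H_0$. Your added check that $g_J\notin\operatorname{span}(g_I,\sum_i\ell_i)$ for $J\notin\{I,I^c\}$ spells out the regularity claim that the paper only asserts in \cref{lem:pole_support}.
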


\begin{proof}
	We detail the prefix case; the suffix case follows identically, except that the pole arises from $S_{K-r}^\pi(\ell)=g_{I^c}(\ell)=-g_I(\ell)$ on $H_0$, which contributes the overall minus sign.

	Assume $P_r(\pi)=I$ and $\pi=(\alpha,\beta)$. By \cref{lem:pole_support}, the pole arises solely from the factor $1/S_r^\pi(\ell)$, where $S_r^\pi(\ell)=g_I(\ell)$.
	Isolating this singularity, we write
	\begin{equation}\label{eq:singularity_factorisation}
		g_I(\ell)\,\Xi(\pi,\ell)
		=
		\left( \prod_{k=1}^{r-1}\frac{1}{S_k^\pi(\ell)} \right)
		\cdot
		\left( \prod_{s=1}^{K-r-1}\frac{1}{S_{r+s}^\pi(\ell)} \right).
	\end{equation}
	We now restrict this expression to $H_I$, where $g_I(\ell)=0$.

	For the first product ($k<r$), the partial sum $S_k^\pi(\ell)$ involves only elements from $I$. It is therefore identical to $\sum_{j=1}^k \ell_{\alpha(j)}$, meaning the first product evaluates exactly to $\Xi_I(\alpha,\ell)$.

	For the second product ($s\ge 1$), the partial sum extends into the suffix
	\begin{equation}
		S_{r+s}^\pi(\ell)
		=
		S_r^\pi(\ell)+\sum_{j=1}^s \ell_{\beta(j)}
		=
		g_I(\ell)+\sum_{j=1}^s \ell_{\beta(j)}.
	\end{equation}
	On $H_I$, the $g_I(\ell)$ term vanishes, leaving simply $\sum_{j=1}^s \ell_{\beta(j)}$. Thus, the second product in \cref{eq:singularity_factorisation} evaluates exactly to $\Xi_{I^c}(\beta,\ell)$.

	Multiplying these evaluated products together yields the residue
	\begin{equation}
		\mathrm{Res}_{H_I}\big(\Xi(\pi,\ell)\big)
		=
		\left.\big(g_I(\ell)\,\Xi(\pi,\ell)\big)\right|_{H_I}
		=
		\Xi_I(\alpha,\ell)\cdot\Xi_{I^c}(\beta,\ell).
	\end{equation}
\end{proof}

\begin{example}[Factorisation for $K=5$]
	Consider the hyperplane $H_0=\{\ell_1+\ell_2+\ell_3+\ell_4+\ell_5=0\}$ and define the subset
	\begin{equation}
		I=\{2,4\},\qquad r \coloneqq |I|=2,\qquad I^c=\{1,3,5\}.
	\end{equation}
	The associated pole hyperplane $H_I$ is defined by
	\begin{equation}
		H_I=\{\ell\in H_0 \mid \ell_2+\ell_4=0\}.
	\end{equation}

	\smallskip

	\noindent\emph{Prefix case:}
	Consider the permutation $\pi=(4,2,5,1,3)$, which satisfies $P_2(\pi)=\{4,2\}=I$. We factorise $\pi=(\alpha,\beta)$ as
	\begin{equation}
		\alpha=(4,2)\in S(I) \qquad \text{and} \qquad \beta=(5,1,3)\in S(I^c).
	\end{equation}
	The partial sums are
	\begin{equation}
		S_1^\pi=\ell_4,\qquad S_2^\pi=\ell_4+\ell_2=g_I,\qquad
		S_3^\pi=\ell_4+\ell_2+\ell_5,\qquad S_4^\pi=\ell_4+\ell_2+\ell_5+\ell_1.
	\end{equation}
	Hence, the integration factor is
	\begin{equation}
		\Xi(\pi,\ell) = \frac{1}{\ell_4}\cdot\frac{1}{\ell_4+\ell_2}\cdot\frac{1}{\ell_4+\ell_2+\ell_5}\cdot\frac{1}{\ell_4+\ell_2+\ell_5+\ell_1}.
	\end{equation}
	Multiplying by $g_I(\ell)=\ell_2+\ell_4$ and restricting to $H_I$ (where $\ell_2+\ell_4=0$) yields
	\begin{equation}
		\mathrm{Res}_{H_I}\big(\Xi(\pi,\ell)\big)
		=\left.\big(g_I(\ell)\,\Xi(\pi,\ell)\big)\right|_{H_I}
		=\frac{1}{\ell_4}\cdot\frac{1}{\ell_5}\cdot\frac{1}{\ell_5+\ell_1}.
	\end{equation}
	Observing the subsystem definitions
	\begin{equation}
		\Xi_I(\alpha,\ell)=\frac{1}{\ell_4},
		\qquad
		\Xi_{I^c}(\beta,\ell)=\frac{1}{\ell_5}\cdot\frac{1}{\ell_5+\ell_1},
	\end{equation}
	we see that indeed $\mathrm{Res}_{H_I}\big(\Xi(\pi,\ell)\big) = \Xi_I(\alpha,\ell)\cdot\Xi_{I^c}(\beta,\ell)$.

	\smallskip

	\noindent\emph{Suffix case:}
	Consider the permutation $\pi'=(5,1,3,4,2)$, which satisfies $P_{3}(\pi')=I^c$. We factorise $\pi'=(\beta,\alpha)$ as
	\begin{equation}
		\beta=(5,1,3)\in S(I^c) \qquad \text{and} \qquad\alpha=(4,2)\in S(I).
	\end{equation}
	With $g_{I^c}(\ell) = \ell_1 + \ell_3 + \ell_5$, the relevant partial sums are
	\begin{equation}
		S_1^{\pi'}=\ell_5,\qquad
		S_2^{\pi'}=\ell_5+\ell_1,\qquad
		S_3^{\pi'}=g_{I^c},\qquad
		S_4^{\pi'}=g_{I^c}+\ell_4.
	\end{equation}
	Hence,
	\begin{equation}
		\Xi(\pi',\ell)
		=\Bigl(\frac{1}{\ell_5}\cdot\frac{1}{\ell_5+\ell_1}\Bigr)\cdot \frac{1}{g_{I^c}(\ell)}\cdot\frac{1}{g_{I^c}(\ell)+\ell_4}.
	\end{equation}
	Multiplying by $g_I(\ell)$ and restricting to $H_I$ (using $g_{I^c}(\ell)=-g_I(\ell)$ on $H_0$ and $g_{I^c}(\ell)=0$ on $H_I$) gives
	\begin{equation}
		\mathrm{Res}_{H_I}\bigl(\Xi(\pi',\ell)\bigr)
		=\left.\big(g_I(\ell)\,\Xi(\pi',\ell)\big)\right|_{H_I}
		=-\Bigl(\frac{1}{\ell_5}\cdot\frac{1}{\ell_5+\ell_1}\Bigr)\cdot\frac{1}{\ell_4}.
	\end{equation}
	Since $\Xi_{I^c}(\beta,\ell)=\frac{1}{\ell_5}\cdot\frac{1}{\ell_5+\ell_1}$ and $\Xi_I(\alpha,\ell)=\frac{1}{\ell_4}$, we obtain
	\begin{equation}
		\mathrm{Res}_{H_I}\bigl(\Xi(\pi',\ell)\bigr) = -\,\Xi_I(\alpha,\ell)\cdot\Xi_{I^c}(\beta,\ell),
	\end{equation}
	as claimed in \cref{lem:res_factor_complete}.
\end{example}

\subsection{Shuffle identities and image of the functional $\Xi$}

We now turn from the local pole structure of individual denominators to the global linear relations among the family $\{\Xi(\pi,\cdot)\}_{\pi\in S_K}$. The residue machinery above will later be used to prove that a fixed-last-letter subfamily is linearly independent. Before doing so, we identify the universal relations that the full family satisfies. These are governed by shuffle identities: interleaving two ordered subwords gives a partial-fraction identity, which vanishes on the zero-sum hyperplane $H_0$.

\begin{definition}[Shuffle product]\label{def:shuffle}
	Let $\alpha$ and $\beta$ be two disjoint, non-empty ordered words whose union contains exactly the letters of
	\begin{equation}
		[K]\coloneqq\{1,2,\dots,K\}.
	\end{equation}
	The shuffle set $\operatorname{Sh}(\alpha,\beta)$ is the set of all permutations in $S_K$ obtained by interleaving the letters of $\alpha$ and $\beta$ while preserving the internal order of the letters within $\alpha$ and within $\beta$.
\end{definition}

The following results proves a shuffling relation for $\Phi(\alpha,\beta)$. We note that our case is a special instance of the more general Chen's shuffle identity~\cite[Eq.~(1.5.6)]{Chen1977}.

\begin{lemma}\label{lem:phi_shuffle}
	For an ordered word $w=(w_1,\dots,w_m)$, define the \emph{ordered denominator product}
	\begin{equation}
		\Phi(w,\ell)
		\coloneqq
		\prod_{q=1}^{m}
		\left(\sum_{j=1}^{q}\ell_{w_j}\right)^{-1},
	\end{equation}
	with the convention $\Phi(\emptyset,\ell)=1$.
	Then, for any two disjoint ordered words $\alpha$ and $\beta$,
	\begin{equation}
		\Phi(\alpha,\ell)\Phi(\beta,\ell)
		=
		\sum_{\rho\in\operatorname{Sh}(\alpha,\beta)}\Phi(\rho,\ell).
	\end{equation}
\end{lemma}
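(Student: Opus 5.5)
We prove the identity by induction on the total length $|\alpha|+|\beta|$, using a last-letter recursion. The base cases are immediate. If $\alpha=\emptyset$ (or $\beta=\emptyset$), the shuffle set consists of the single word $\beta$ (resp.\ $\alpha$), and both sides coincide by the convention $\Phi(\emptyset,\ell)=1$. Although \cref{def:shuffle} asks for non-empty words covering $[K]$, the identity only involves the letters that actually occur. We therefore work with arbitrary disjoint words over any index set, so that the recursion can pass to shorter words.

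The key structural observation concerns the last factor of $\Phi(w,\ell)$. For $q=m$, this factor is $\left(\sum_{j=1}^m\ell_{w_j}\right)^{-1}$, which depends only on the \emph{set} of letters of $w$ and not on their order. Write $\sigma(w)\coloneqq\sum_{j}\ell_{w_j}$ and $w=w'x$ with last letter $x$. Then
\begin{equation}
  \Phi(w,\ell)=\Phi(w',\ell)\,\sigma(w)^{-1}.
\end{equation}
Every $\rho\in\operatorname{Sh}(\alpha,\beta)$ has the same letter set $\alpha\cup\beta$, so the common factor $\sigma(\alpha)+\sigma(\beta)$ can be pulled out. The shuffle set also decomposes according to the last letter of $\rho$. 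Writing $\alpha=\alpha'a$ and $\beta=\beta'b$, each shuffle ends either in $a$, in which case it is a shuffle of $(\alpha',\beta)$ followed by $a$, or in $b$, in which case it is a shuffle of $(\alpha,\beta')$ followed by $b$. This gives the standard recursion
\begin{equation}
  \sum_{\rho\in\operatorname{Sh}(\alpha,\beta)}\Phi(\rho,\ell)
  =\frac{1}{\sigma(\alpha)+\sigma(\beta)}\left(\sum_{\rho\in\operatorname{Sh}(\alpha',\beta)}\Phi(\rho,\ell)+\sum_{\rho\in\operatorname{Sh}(\alpha,\beta')}\Phi(\rho,\ell)\right).
\end{equation}

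Applying the induction hypothesis to both inner sums turns the right-hand side into
\begin{equation}
  \frac{\Phi(\alpha',\ell)\Phi(\beta,\ell)+\Phi(\alpha,\ell)\Phi(\beta',\ell)}{\sigma(\alpha)+\sigma(\beta)}.
\end{equation}
We then substitute $\Phi(\alpha',\ell)=\sigma(\alpha)\Phi(\alpha,\ell)$ and $\Phi(\beta',\ell)=\sigma(\beta)\Phi(\beta,\ell)$, which is the displayed last-factor relation read in reverse. The numerator becomes $(\sigma(\alpha)+\sigma(\beta))\Phi(\alpha,\ell)\Phi(\beta,\ell)$. This cancels the denominator and yields $\Phi(\alpha,\ell)\Phi(\beta,\ell)$, closing the induction.

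The only delicate point is the bookkeeping of the edge cases. When $\alpha'=\emptyset$, the relation $\Phi(\alpha',\ell)=\sigma(\alpha)\Phi(\alpha,\ell)$ still holds, since $\Phi((a),\ell)=\ell_a^{-1}$ and $\Phi(\emptyset,\ell)=1$. We must also check that the last-letter decomposition of the shuffle set is a bijection, i.e.\ disjoint and exhaustive; this holds because $a\neq b$ by disjointness of the words. All identities are identities of rational functions, valid wherever the denominators are non-zero. Note that this lemma uses full products up to $q=m$, so it holds on all of $\mathbb C^K$ generically, without yet restricting to $H_0$.
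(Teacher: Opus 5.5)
Your proposal is correct and follows essentially the same route as the paper's proof: induction on $|\alpha|+|\beta|$, splitting the shuffles by their last letter and pulling out the common final denominator $\sigma(\alpha)+\sigma(\beta)$. You then apply the inductive hypothesis and substitute $\Phi(\alpha',\ell)=\sigma(\alpha)\Phi(\alpha,\ell)$ and $\Phi(\beta',\ell)=\sigma(\beta)\Phi(\beta,\ell)$, and your extra care with the empty-prefix edge case and with words not covering all of $[K]$ only makes explicit what the paper leaves implicit.
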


Note that if $w=\alpha$ is supported on a subset $I$, then
\begin{equation}\label{eq:Phi_Xi_relation}
	\Phi(\alpha,\ell)=\frac{\Xi_I(\alpha,\ell)}{g_I(\ell)},
\end{equation}
because $\Phi$ includes the final denominator $g_I(\ell)$ whereas $\Xi_I$ omits it.

\begin{proof}
	We prove the identity by induction on $|\alpha|+|\beta|$, with the convention
	\begin{equation}
		\Phi(\emptyset,\ell)=1.
	\end{equation}
	The claim is immediate if one of the two words is empty.
	Assume now that both words are non-empty. Write
	\begin{equation}
		\alpha=(\alpha',a),
		\qquad
		\beta=(\beta',b),
	\end{equation}
	where $a$ and $b$ are the final letters of $\alpha$ and $\beta$, respectively. Set
	\begin{equation}
		A\coloneqq \sum_{i\in\alpha}\ell_i,
		\qquad
		B\coloneqq \sum_{i\in\beta}\ell_i.
	\end{equation}
	Every shuffle of $\alpha$ and $\beta$ either ends in $a$ or ends in $b$. Hence
	\begin{equation}
		\sum_{\rho\in\operatorname{Sh}(\alpha,\beta)}\Phi(\rho,\ell)
		=
		\frac{1}{A+B}
		\left(
		\sum_{\sigma\in\operatorname{Sh}(\alpha',\beta)}\Phi(\sigma,\ell)
		+
		\sum_{\tau\in\operatorname{Sh}(\alpha,\beta')}\Phi(\tau,\ell)
		\right).
	\end{equation}
	By the induction hypothesis,
	\begin{equation}
		\sum_{\sigma\in\operatorname{Sh}(\alpha',\beta)}\Phi(\sigma,\ell)
		=
		\Phi(\alpha',\ell)\Phi(\beta,\ell),
	\end{equation}
	and
	\begin{equation}
		\sum_{\tau\in\operatorname{Sh}(\alpha,\beta')}\Phi(\tau,\ell)
		=
		\Phi(\alpha,\ell)\Phi(\beta',\ell),
	\end{equation}
	and
	\begin{equation}
		\Phi(\alpha,\ell)=\frac{1}{A}\Phi(\alpha',\ell),
		\qquad
		\Phi(\beta,\ell)=\frac{1}{B}\Phi(\beta',\ell).
	\end{equation}
	This yields to the identity
	\begin{align}
		\sum_{\rho\in\operatorname{Sh}(\alpha,\beta)}\Phi(\rho,\ell)
		&=
		\frac{1}{A+B}\left(
		\Phi(\alpha',\ell)\Phi(\beta,\ell)
		+
		\Phi(\alpha,\ell)\Phi(\beta',\ell)
		\right)\\
		&=
		\frac{1}{A+B}\left(\frac{1}{B}+\frac{1}{A}\right)
		\Phi(\alpha',\ell)\Phi(\beta',\ell)\\
		&=
		\frac{1}{AB}\Phi(\alpha',\ell)\Phi(\beta',\ell)\\
		&=
		\Phi(\alpha,\ell)\Phi(\beta,\ell).
	\end{align}
\end{proof}

\begin{lemma}\label{lem:shuffle_xi}
	For any partition of the letters $[K]$ into two non-empty ordered words $\alpha$ and $\beta$, the Floquet denominators satisfy the shuffle relation
	\begin{equation}
		\sum_{\rho\in\operatorname{Sh}(\alpha,\beta)}
		\Xi(\rho,\ell)
		\equiv 0
		\qquad
		(\ell\in H_0).
	\end{equation}
\end{lemma}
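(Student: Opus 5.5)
The plan is to derive the relation from \cref{lem:phi_shuffle}, using \cref{eq:Phi_Xi_relation} to translate between $\Phi$ and $\Xi$. For any permutation $\rho\in S_K$ the full word includes every letter, so the last partial sum in $\Phi(\rho,\ell)$ is $\sum_{i=1}^K\ell_i$. Hence, as rational functions on $\mathbb C^K$,
\begin{equation*}
  \Phi(\rho,\ell)=\frac{\Xi(\rho,\ell)}{\sum_{i=1}^K\ell_i}.
\end{equation*}
Let $\alpha$ be supported on $I$ and $\beta$ on $I^c$, both non-empty. Applying \cref{lem:phi_shuffle} and multiplying through by $s(\ell)\coloneqq\sum_{i=1}^K\ell_i = g_I(\ell)+g_{I^c}(\ell)$ gives
\begin{equation*}
  \sum_{\rho\in\operatorname{Sh}(\alpha,\beta)}\Xi(\rho,\ell)
  = s(\ell)\,\Phi(\alpha,\ell)\Phi(\beta,\ell)
  = \bigl(g_I(\ell)+g_{I^c}(\ell)\bigr)\frac{\Xi_I(\alpha,\ell)\,\Xi_{I^c}(\beta,\ell)}{g_I(\ell)\,g_{I^c}(\ell)}.
\end{equation*}
This identity holds off the polar locus in $\mathbb C^K$. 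The right-hand side is a product of $s(\ell)$ with a rational function whose denominators are $g_I$, $g_{I^c}$ and the proper partial sums of $\alpha$ and $\beta$.

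The restriction step is where care is needed, since $\Xi$ is only defined on $H_0$. I would argue as follows. The left-hand side restricts to $U\subset H_0$ as a regular function, because every denominator $S^\rho_k$ with $k\le K-1$ is non-vanishing on $U$. On the right-hand side, the denominators $g_I$, $g_{I^c}=-g_I$, and the partial sums of $\alpha$ and $\beta$ are all proper subset sums, so they are also non-vanishing on $U$. The numerator factor $s(\ell)$ vanishes identically on $H_0$. Therefore the right-hand side is zero at every point of $U$. By continuity, or equivalently as an identity of rational functions on $H_0$, the shuffle sum vanishes on $H_0$.

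I expect the only subtlety to be the one just handled: making sure the $\mathbb C^K$ identity may legitimately be restricted to $H_0$. The key observation is that none of the denominators on either side is the full sum $s(\ell)$, so restricting introduces no $0/0$ indeterminacy. If desired, one can equivalently write $g_{I^c}=-g_I$ on $H_0$, so that $g_I+g_{I^c}=0$ directly. This also matches the residue factorisation of \cref{lem:res_factor_complete}: the prefix and suffix residues $\pm\Xi_I\Xi_{I^c}$ cancel.
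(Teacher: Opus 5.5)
Your proof is correct and follows the same route as the paper: both apply \cref{lem:phi_shuffle} and use the fact that every full word $\rho$ carries the common final denominator $s(\ell)$, while $\Phi(\alpha,\ell)\Phi(\beta,\ell)$ involves only proper subset sums. The paper concludes by comparing the $S(\ell)^{-1}$ coefficients of Laurent expansions, whereas you multiply through by $s(\ell)$ and evaluate on $U$ via the explicit form $\Xi_I\Xi_{I^c}(g_I^{-1}+g_{I^c}^{-1})$; this is the same argument stated a little more concretely.
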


\begin{proof}
	Let
	\begin{equation}
		S(\ell)\coloneqq \sum_{i=1}^K \ell_i,
		\qquad
		H_0\coloneqq\{S(\ell)=0\}.
	\end{equation}
	We compare the possible simple pole along the hyperplane $S(\ell)=0$ near a point where the remaining denominators are non-zero. Thus every rational function with at most a simple pole along $H_0$ has a Laurent expansion around it, whose principal part is its $S(\ell)^{-1}$ coefficient.

	By \cref{lem:phi_shuffle}, we have
	\begin{equation}\label{eq:phi_shuffle}
		\Phi(\alpha,\ell)\Phi(\beta,\ell)=\sum_{\rho\in\operatorname{Sh}(\alpha,\beta)}\Phi(\rho,\ell).
	\end{equation}
	Recalling \cref{eq:Phi_Xi_relation}, for every full word $\rho\in S_K$ the final denominator in $\Phi(\rho,\ell)$ is the total sum $S(\ell)$ independent of $\rho$. Hence summing over the shuffle identities $\rho\in\operatorname{Sh}(\alpha,\beta)$ gives
	\begin{equation}\label{eq:S(ell)_pole_expansion}
		\sum_{\rho\in\operatorname{Sh}(\alpha,\beta)}\Phi(\rho,\ell)=S(\ell)^{-1}\sum_{\rho\in\operatorname{Sh}(\alpha,\beta)} \Xi(\rho,\ell),
	\end{equation}
	meaning that the coefficient of the pole $S(\ell)^{-1}$ on the right-hand side of \cref{eq:phi_shuffle} is
	\begin{equation}
		\sum_{\rho\in\operatorname{Sh}(\alpha,\beta)}\Xi(\rho,\ell) .
	\end{equation}

	On the other hand, $\alpha$ and $\beta$ are proper subwords of $[K]$. Therefore none of the denominators in $\Phi(\alpha,\ell)\Phi(\beta,\ell)$ identifies with $S(\ell)$ and thus the left-hand side of \cref{eq:phi_shuffle} has no $S(\ell)^{-1}$ pole along $S(\ell)=0$.
	Since both sides of \cref{eq:phi_shuffle} agree as rational functions for $S(\ell)\neq0$, their Laurent expansions near $S(\ell)=0$ have the same pole part. Although the rational functions themselves need not be defined at $S(\ell)=0$, the coefficient of $S(\ell)^{-1}$ is still a well-defined function after restricting to $H_0$. The coefficient on the left-hand side is zero, and therefore \cref{eq:S(ell)_pole_expansion} gives
	\begin{equation}
		\sum_{\rho\in\operatorname{Sh}(\alpha,\beta)}\Xi(\rho,\ell)
		\equiv 0
		\qquad(\ell\in H_0).
	\end{equation}
\end{proof}

\begin{example}[A four-letter shuffle relation]\label{ex:shuffle_1234}
	Let $K=4$ and consider the shuffle decomposition $\alpha=(2),\ \beta=(1,3,4)$.
	Then
	\begin{equation}
		\operatorname{Sh}(\alpha,\beta)
		=
		\{2134,\ 1234,\ 1324,\ 1342\}
	\end{equation}
	and \cref{lem:shuffle_xi} gives
	\begin{equation}
		\Xi(2134,\ell)+\Xi(1234,\ell)+\Xi(1324,\ell)+\Xi(1342,\ell)
		\equiv 0
		\qquad(\ell\in H_0);
	\end{equation}
	indeed, grouping the first two terms results in
	\begin{equation}
		\Xi(1234,\ell)+\Xi(2134,\ell)
		=
		\frac{1}{(\ell_1+\ell_2)(\ell_1+\ell_2+\ell_3)}
		\left(\frac{1}{\ell_1}+\frac{1}{\ell_2}\right)
		=
		\frac{1}{\ell_1\ell_2(\ell_1+\ell_2+\ell_3)}.
	\end{equation}
	On $H_0$, this is
	\begin{equation}
		\Xi(1234,\ell)+\Xi(2134,\ell)
		=
		-\frac{1}{\ell_1\ell_2\ell_4}
	\end{equation}
	and similarly,
	\begin{equation}
		\Xi(1324,\ell)+\Xi(1342,\ell)
		=
		\frac{1}{\ell_1(\ell_1+\ell_3)}
		\left(
		\frac{1}{\ell_1+\ell_2+\ell_3}
		+
		\frac{1}{\ell_1+\ell_3+\ell_4}
		\right).
	\end{equation}
	Using $\ell_1+\ell_2+\ell_3=-\ell_4$ and $\ell_1+\ell_3+\ell_4=-\ell_2$ on $H_0$, this becomes
	\begin{equation}
		\Xi(1324,\ell)+\Xi(1342,\ell)
		=
		\frac{1}{\ell_1\ell_2\ell_4}.
	\end{equation}
	Hence the two grouped contributions cancel, as predicted by the shuffling relation.
\end{example}

\begin{lemma}\label{lem:independent_shuffle_relations}
	Let
	\begin{equation}
		W_K\coloneqq \operatorname{span}_{\mathbb C}\{E_\pi:\pi\in S_K\}
	\end{equation}
	be the formal vector space generated by permutations, where $E_\pi$ is an independent basis vector indexed by $\pi$. For every permutation $\pi\in S_K$ not ending in $K$, write uniquely
	\begin{equation}
		\pi=(u,K,v),
		\qquad v\neq \emptyset,
	\end{equation}
	and define the formal shuffle relation
	\begin{equation}
		R_\pi
		\coloneqq
		\sum_{\rho\in\operatorname{Sh}((u,K),v)}E_\rho.
	\end{equation}
	Then the family
	\begin{equation}\label{eq:family_no_K}
		\{R_\pi:\pi\in S_K,\ \pi \text{ does not end in } K\}
	\end{equation}
	is linearly independent. In particular, the shuffle relations contain
	\begin{equation}
		K!-(K-1)!
	\end{equation}
	linearly independent elements.
\end{lemma}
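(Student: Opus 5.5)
Our approach is a triangularity argument: we exhibit, for each relation $R_\pi$, a distinguished basis vector that appears in $R_\pi$ and in no ``smaller'' relation. Fix $\pi=(u,K,v)$ with $v\neq\emptyset$. The shuffle set $\operatorname{Sh}((u,K),v)$ contains $\pi$ itself, namely the interleaving that places all of $(u,K)$ before all of $v$. Every shuffle $\rho$ keeps the letters of $u$ in order and before $K$, and keeps the letters of $v$ in order. Moreover $\rho$ never ends in $K$ unless $v$ is placed entirely before $K$. We will use as the leading term the permutation in which $K$ sits as far left as possible. In $\operatorname{Sh}((u,K),v)$, the letter $K$ appears at position at least $|u|+1$, and equality holds only for $\rho=\pi$. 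So each shuffle $\rho\neq\pi$ has $K$ strictly further right than in $\pi$.

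We order permutations not ending in $K$ by the position $p(\pi)$ of the letter $K$ (so $p(\pi)=|u|+1$), and extend this to a total order on all of $S_K$ in any way compatible with $p$. The claim is then that
\begin{equation*}
R_\pi=E_\pi+\sum_{\rho:\,p(\rho)>p(\pi)}c_\rho E_\rho ,
\end{equation*}
with each $c_\rho\in\{0,1\}$. Note that shuffles with $\rho_{\mathrm{last}}=K$ may occur among these terms, but they lie at positions $p=K$, which is larger than every $p(\pi)\le K-1$. The map $\pi\mapsto E_\pi$ is injective, so the leading terms of distinct $R_\pi$ are distinct basis vectors.

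To conclude independence, suppose $\sum_\pi\lambda_\pi R_\pi=0$ with some $\lambda_\pi\neq0$. Among those $\pi$ with $\lambda_\pi\neq0$, choose one with minimal $p(\pi)$. Ties do not cause trouble: any other $\pi'$ with $p(\pi')=p(\pi)$ contributes only $E_{\pi'}$ at that position level, because all other terms of $R_{\pi'}$ have strictly larger $p$. Hence the coefficient of $E_\pi$ in the combination equals $\lambda_\pi$, which must then vanish, a contradiction. This gives linear independence of the family. The count $K!-(K-1)!$ follows because exactly $(K-1)!$ permutations end in $K$.

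The main point to check carefully is that $E_\pi$ is not cancelled or duplicated by other relations with the same $p$-value. The argument above handles this: within $R_{\pi'}$, the only term at level $p(\pi')$ is $E_{\pi'}$. So the proof reduces to the single observation that the trivial concatenation is the unique shuffle placing $K$ at its leftmost possible position, which holds because $K$ must follow all of $u$. No step seems genuinely hard; the key is choosing this ordering statistic, rather than, say, a lexicographic order, which would not interact cleanly with shuffles.
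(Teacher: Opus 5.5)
Your proof is correct and is essentially the paper's own argument: triangularity with respect to the position of the letter $K$. In both, $E_\pi$ is the unique term of $R_\pi$ with $K$ in its leftmost possible position, and choosing a nonzero coefficient of minimal $K$-position forces that coefficient to vanish.
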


\begin{proof}
	The proof is triangular with respect to the position of the letter $K$: each $R_\pi$ contains the distinguished term $E_\pi$, while all other terms move $K$ strictly to the right. We will show that this structure forces linear independence.

	For a permutation $\eta\in S_K$, let $\operatorname{pos}_K(\eta)$ denote the position of the letter $K$ in $\eta$.
	By construction, the relation $R_\pi$ contains the term $E_\pi$ with coefficient~$1$; this is the shuffle in which all letters of $v$ are placed after the block $(u,K)$: we call $E_\pi$ the \emph{distinguished} term of $R_\pi$.
	Every other shuffle in $\operatorname{Sh}((u,K),v)$ places at least one letter of $v$ before $K$, hence every non-distinguished term $E_\rho$ appearing in $R_\pi$ satisfies
	\begin{equation}
		\operatorname{pos}_K(\rho)>\operatorname{pos}_K(\pi).
	\end{equation}

	Suppose that there is a linear dependence among the family \cref{eq:family_no_K}
	\begin{equation}\label{eq:dependence_R_pi}
		\sum_{\pi:\,\pi\text{ does not end in }K} a_\pi R_\pi=0.
	\end{equation}
	If not all $a_\pi$ vanish, choose $\pi_0$ among those with $a_{\pi_0}\neq 0$ such that $\operatorname{pos}_K(\pi_0)$ is minimal.

	We claim that the coefficient of $E_{\pi_0}$ in \cref{eq:dependence_R_pi} is exactly $a_{\pi_0}$.
	Indeed, $E_{\pi_0}$ appears with coefficient $1$ in $R_{\pi_0}$, as its distinguished term. If $\operatorname{pos}_K(\pi)\geq \operatorname{pos}_K(\pi_0)$ and $\pi\neq\pi_0$, then $E_{\pi_0}$ does not appear in $R_\pi$: it is not the distinguished term, and every non-distinguished term moves $K$ strictly to the right.
	On the other hand, all terms with $\operatorname{pos}_K(\pi)<\operatorname{pos}_K(\pi_0)$ have coefficient $a_\pi=0$ by the minimal choice of $\pi_0$, so they do not contribute to the coefficient of $E_{\pi_0}$.

	Thus as claimed the coefficient of $E_{\pi_0}$ in the dependence is $a_{\pi_0}$. Since the relation \cref{eq:dependence_R_pi} is zero in $W_K$ and $E_{\pi_0}$ is an independent vector, this coefficient must vanish, $a_{\pi_0}=0$. This yields a contradiction of the initial assumption of the existence of a minimal $a_{\pi_0}\neq 0$. This forces all $a_\pi$ to be 0, and the relations are linearly independent.

	There is one such relation for each permutation not ending in $K$, namely
	\begin{equation}
		K!-(K-1)!
	\end{equation}
	of them. This proves the claim.
\end{proof}

\begin{lemma}\label{lem:independence}
	The $(K-1)!$ functions
	\begin{equation}
		\{\Xi(\sigma K,\cdot):\sigma\in S_{K-1}\}
	\end{equation}
	are linearly independent over $\mathbb C$ on $H_0$. Thus,
	\begin{equation}
		\dim \operatorname{span}\{\Xi(\pi,\cdot):\pi\in S_K\}\geq (K-1)!.
	\end{equation}
\end{lemma}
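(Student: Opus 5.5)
We prove linear independence by induction on $K$, using the residue machinery of \cref{lem:pole_support,lem:res_factor_complete}. The base case $K=2$ is trivial: the single function $\Xi(12,\ell)=1/\ell_1$ is non-zero on $H_0$.

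For the inductive step, suppose
\begin{equation}
  F(\ell)\coloneqq\sum_{\sigma\in S_{K-1}} c_\sigma\,\Xi(\sigma K,\ell)\equiv 0 \quad\text{on } H_0 .
\end{equation}
The aim is to show that every $c_\sigma$ vanishes. Fix a letter $a\in[K-1]$ and take residues of $F$ along the hyperplane $H_I$ with $I=\{a\}$, i.e.\ $\ell_a=0$. Since $F\equiv0$, all its residues vanish.

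By \cref{lem:pole_support}, $\Xi(\pi,\cdot)$ has a pole along $H_{\{a\}}$ only in two cases. Either $\pi$ starts with $a$ (the prefix case), or $P_{K-1}(\pi)=[K]\setminus\{a\}$, meaning $\pi$ ends in $a$ (the suffix case). All our permutations end in $K\neq a$, so the suffix case never occurs. Only the words $\sigma=(a,\sigma')$ contribute.

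For these words, \cref{lem:res_factor_complete} with $r=1$ gives $\Xi_I(\alpha)=1$ (an empty product) and $\Xi_{I^c}(\beta)=\Xi(\sigma'K,\cdot)$. The latter is the degree-$(K-1)$ function on the letters $[K]\setminus\{a\}$, restricted to $H_{\{a\}}$. On $H_{\{a\}}$ we have $\ell_a=0$, so this is exactly the zero-sum hyperplane $H_0^{(K-1)}$ in the remaining variables. We therefore obtain
\begin{equation}
  \mathrm{Res}_{H_{\{a\}}}F=\sum_{\sigma'} c_{(a,\sigma')}\,\Xi(\sigma' K,\cdot)\equiv 0 \quad\text{on } H_0^{(K-1)},
\end{equation}
where $\sigma'$ ranges over the orderings of $[K-1]\setminus\{a\}$ and the letters $[K]\setminus\{a\}$ are relabelled order-preservingly so that $K$ stays last.

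The induction hypothesis then gives $c_{(a,\sigma')}=0$ for all $\sigma'$. Letting $a$ range over $[K-1]$ covers every $\sigma\in S_{K-1}$, since each such $\sigma$ begins with some $a\in[K-1]$. Hence all $c_\sigma=0$, which proves independence. The dimension bound $\dim\operatorname{span}\{\Xi(\pi,\cdot):\pi\in S_K\}\geq(K-1)!$ follows immediately.

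The main obstacle is making the residue step rigorous. We need to check three things.
\begin{enumerate}
  \item A linear combination that vanishes identically on $H_0$ has zero residue along $H_{\{a\}}$. This holds because $u F(u,v)$ is identically zero, so its restriction to $u=0$ is zero.
  \item The non-contributing terms are regular near generic points of $H_{\{a\}}$, so they add no residue. This is exactly \cref{lem:pole_support}.
  \item $H_{\{a\}}$ carries enough points where the remaining denominators are non-zero. This holds because $H_{\{a\}}$ is irreducible and each $\Xi(\sigma' K,\cdot)$ is a non-trivial rational function on it, so identities hold Zariski-densely there.
\end{enumerate}
We also have to confirm that the residue is computed in consistent coordinates across all terms. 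This is fine: $u=\ell_a$ is the same linear form in every term, so the residues add linearly.
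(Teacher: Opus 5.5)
Your proof is correct and is essentially the paper's argument. The paper fixes $\sigma^*$ and takes iterated residues along the prefix flag $H_{A_1},\dots,H_{A_{K-2}}$, using that the final letter $K$ rules out the suffix mechanism; this is exactly your single residue along $\ell_a=0$ unrolled through the induction on $K$, with your identification of $H_{\{a\}}$ with $H_0^{(K-1)}$ making the bookkeeping of the residual words (still ending in $K$) a little cleaner.
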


\begin{proof}
	For $K=2$ the claim is immediate, so assume $K\geq 3$.
	Suppose that there is a linear dependence
	\begin{equation}\label{eq:linear_dep_fixed_last}
		\sum_{\sigma\in S_{K-1}} c_\sigma\,\Xi(\sigma K,\ell)\equiv 0
		\qquad(\ell\in H_0).
	\end{equation}
	Fix $\sigma^*=(\sigma^*_1,\dots,\sigma^*_{K-1})\in S_{K-1}$ and write
	\begin{equation}
		\pi^*=(\sigma^*_1,\dots,\sigma^*_{K-1},K).
	\end{equation}
	For $1\le q\le K-2$, set
	\begin{equation}
		A_q\coloneqq \{\sigma^*_1,\dots,\sigma^*_q\};
	\end{equation}
	the associated prefix flag is the sequence of subset-sum hyperplanes
	\begin{equation}
		H_{A_1},\ H_{A_2},\ \dots,\ H_{A_{K-2}},
	\end{equation}
	where
	\begin{equation}
		H_{A_q}\coloneqq \left\{\ell\in H_0:\sum_{i\in A_q}\ell_i=0\right\}.
	\end{equation}

	The flag walk works as follows: the first residue is along the hyperplane
	\begin{equation}
		H_{A_1}=\{\ell_{\sigma^*_1}=0\}.
	\end{equation}
	A fixed-last-letter term $\Xi(\tau K,\ell)$ can have a pole there only if its first letter is $\sigma^*_1$. Indeed, the suffix alternative in \cref{lem:pole_support} would require a proper prefix of the word $\tau K$ to contain the final letter $K$, which is impossible. Hence the first residue kills every term whose first letter is not $\sigma^*_1$, and for the surviving terms it peels off this first letter.

	Iteratively, since on the partial flag all previous frequency have been set to 0, the next hyperplane $H_{A_q}$ is equivalent to imposing the vanishing of the single frequency $\ell_{\sigma^*_q}$ in the residual partial sum $\sum_{j=1}^q=0$. Note that, by \cref{lem:pole_support}, a pole can again occur by a prefix- or by a suffix-mechanism: the suffix case is impossible for the same reason as before: the complementary set contains the final letter $K$, whereas a proper prefix of a residual word ending in $K$ cannot contain $K$. Thus survival at the $q$-th residue follows the prefix mechanism and forces the next residual letter to be $\sigma^*_q$.

	If after the residues along $H_{A_1},\dots,H_{A_{q-1}}$, a term has survived, by \cref{lem:res_factor_complete} it must agree with $\pi^*$ in the first $q-1$ positions, and the residual denominator is, up to the already peeled non-zero factor, the Floquet denominator associated with the remaining word. Since the original word belongs to the fixed-last-letter family and none of the peeled letters is $K$, this residual word still ends in $K$.
	In other words, for a term to survive the entire flag, it must satisfy
	\begin{equation}
		P_q(\tau K)=\{\sigma^*_1,\dots,\sigma^*_q\}
		\qquad(1\le q\le K-2).
	\end{equation}
	These nested conditions force the first $K-2$ letters of $\tau K$ to agree with those of $\pi^*$. Since both words end in $K$, the remaining non-final letter is also forced. Hence $\tau K=\pi^*$.
	Thus the iterated residue of \cref{eq:linear_dep_fixed_last} isolates the single term $c_{\sigma^*}\Xi(\pi^*,\ell)$ and gives
	\begin{equation}
		c_{\sigma^*}\,R_{\sigma^*}(\ell)\equiv 0,
	\end{equation}
	where $R_{\sigma^*}$ is non-zero. Hence $c_{\sigma^*}=0$. Since $\sigma^*$ was arbitrary, all coefficients vanish, proving linear independence.
\end{proof}

\begin{corollary}\label{cor:xi_rank_shuffle}
	The rational functions $\{\Xi(\pi,\cdot):\pi\in S_K\}$ span a vector space of dimension
	\begin{equation}
		\dim \operatorname{span}\{\Xi(\pi,\cdot):\pi\in S_K\}=(K-1)!;
	\end{equation}
	where, the family
	\begin{equation}
		\{\Xi(\sigma K,\cdot):\sigma\in S_{K-1}\}
	\end{equation}
	is a basis for this span.
	Moreover, every functional relation
	\begin{equation}
		\sum_{\pi\in S_K} c_\pi\,\Xi(\pi,\cdot)=0
	\end{equation}
	is generated by the shuffle relations of \cref{lem:shuffle_xi}.
\end{corollary}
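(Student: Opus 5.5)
The plan is a dimension count that sandwiches the span between the bounds already established. Consider the linear map
\begin{equation*}
  \Psi: W_K \to \{\text{rational functions on } H_0\},\qquad E_\pi \mapsto \Xi(\pi,\cdot),
\end{equation*}
whose image is $\operatorname{span}\{\Xi(\pi,\cdot):\pi\in S_K\}$. Let $R\subseteq W_K$ be the subspace spanned by all formal shuffle relations $\sum_{\rho\in\operatorname{Sh}(\alpha,\beta)}E_\rho$. By \cref{lem:shuffle_xi}, $R\subseteq\ker\Psi$. By \cref{lem:independent_shuffle_relations}, the particular relations $R_\pi$ (for $\pi$ not ending in $K$) are linearly independent, so $\dim R\geq K!-(K-1)!$. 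Hence
\begin{equation*}
  \dim\operatorname{im}\Psi = K!-\dim\ker\Psi \leq K!-\dim R \leq (K-1)!.
\end{equation*}
Combined with the lower bound $\dim\operatorname{im}\Psi\geq(K-1)!$ from \cref{lem:independence}, this forces equality throughout. So the span has dimension exactly $(K-1)!$, and $\dim\ker\Psi=\dim R=K!-(K-1)!$. Since $R\subseteq\ker\Psi$ and the two have equal finite dimension, $\ker\Psi=R$, which is exactly the claim that every functional relation is generated by shuffle relations. In fact the relations $R_\pi$ alone already span the kernel.

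For the basis claim, \cref{lem:independence} gives $(K-1)!$ linearly independent functions $\Xi(\sigma K,\cdot)$ inside a space of dimension $(K-1)!$, so they form a basis. It is also useful, for later constructions, to show this directly. Each $R_\pi$ has distinguished term $E_\pi$, and all its other terms push $K$ further right. Inducting downward on $\operatorname{pos}_K$, one can therefore rewrite every $\Xi(\pi,\cdot)$ with $\pi$ not ending in $K$ as a combination of words with $K$ strictly further right, and eventually as a combination of words ending in $K$. This gives an explicit reduction algorithm.

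The main obstacle is not this assembly step, which is linear algebra. It is making sure the cited ingredients are airtight.
\begin{itemize}
  \item \Cref{lem:shuffle_xi} needs $(u,K)$ and $v$ to be disjoint nonempty words covering $[K]$. This holds since $v\neq\emptyset$ and $(u,K)$ contains $K$.
  \item \Cref{lem:independence} must genuinely show the iterated residue $R_{\sigma^*}$ is non-zero. It is a product of lower-order $\Xi$ factors, which are non-vanishing rational functions on the residual flag.
\end{itemize}
I would double-check the second point, especially that the iterated residues are taken in a consistent coordinate system on $H_0$ so that "residue of a sum equals sum of residues" applies at each step.
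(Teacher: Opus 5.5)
Your proposal is correct and matches the paper's proof: it defines the same evaluation map on $W_K$ and bounds $\dim\operatorname{Im}T\le(K-1)!$ using the $K!-(K-1)!$ independent shuffle relations inside $\ker T$ (\cref{lem:shuffle_xi,lem:independent_shuffle_relations}), with the matching lower bound from \cref{lem:independence}, and from equality concludes both the basis claim and that the $R_\pi$ form a basis of the kernel. Your triangular reduction to words ending in $K$ is an optional extra that the corollary does not need.
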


\begin{proof}
	Let
	\begin{equation}
		W_K\coloneqq \operatorname{span}_{\mathbb C}\{E_\pi:\pi\in S_K\}
	\end{equation}
	and let
	\begin{equation}
		T:W_K\longrightarrow \operatorname{span}_{\mathbb C}\{\Xi(\pi,\cdot):\pi\in S_K\},
		\qquad
		E_\pi\longmapsto \Xi(\pi,\cdot).
	\end{equation}
	Then $\ker T$ is exactly the space of functional relations among the $\Xi(\pi,\cdot)$.

	By \cref{lem:shuffle_xi}, every shuffle relation lies in $\ker T$. By \cref{lem:independent_shuffle_relations}, these shuffle relations contain at least $K!-(K-1)!$ linearly independent elements. Hence
	\begin{equation}
		\dim\ker T\geq K!-(K-1)!,
	\end{equation}
	and therefore
	\begin{equation}
		\dim\operatorname{Im}T
		=
		K!-\dim\ker T
		\leq (K-1)!.
	\end{equation}
	On the other hand, \cref{lem:independence} shows that the $(K-1)!$ functions $\{\Xi(\sigma K,\cdot):\sigma\in S_{K-1}\}$ are linearly independent. Thus
	\begin{equation}
		\dim\operatorname{Im}T\geq (K-1)!.
	\end{equation}
	Combining the two inequalities gives
	\begin{equation}
		\dim\operatorname{Im}T=(K-1)! \qquad \text{and} \qquad 	\dim\ker T = K!-(K-1)!.
	\end{equation}
	Hence the fixed-last-letter family is a basis.

	It remains to identify the full kernel. Since  \cref{lem:independent_shuffle_relations} gives exactly $K!-(K-1)!$ independent shuffle relations inside $\ker T$, these relations form a basis of $\ker T$. Therefore all functional relations among the functions $\Xi(\pi,\cdot)$ are generated by shuffle relations.
\end{proof}

\begin{lemma}\label{lem:lie_shuffle}
	Let $F:S_K\to\mathbb C$ and define
	\begin{equation}
		L_F(A_1,\dots,A_K)
		\coloneqq
		\sum_{\pi\in S_K}F(\pi)A_{\pi(1)}\cdots A_{\pi(K)}.
	\end{equation}
	If $L_F(A_1,\dots,A_K)$ is a multilinear Lie element, then, for every decomposition of $[K]$ into two non-empty ordered words $\alpha$ and $\beta$,
	\begin{equation}
		\sum_{\rho\in\operatorname{Sh}(\alpha,\beta)}F(\rho)=0.
	\end{equation}
\end{lemma}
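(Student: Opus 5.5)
This is Ree's theorem: Lie elements of the free associative algebra are exactly the primitive elements for the shuffle coproduct. Equivalently, they are annihilated by every nontrivial shuffle under the natural pairing. I would prove the needed direction directly, in the multilinear setting, by induction on the bracket structure.

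First, reduce to brackets. The multilinear component of the free Lie algebra in the letters $A_1,\dots,A_K$ is spanned by bracket monomials $[P,Q]$. Here $P$ and $Q$ are multilinear Lie monomials on complementary non-empty subsets $J,J^c\subset[K]$; in fact right-nested commutators $[A_{\mu(1)},\dots,A_{\mu(K)}]$ suffice. The map $L_F\mapsto \sum_{\rho\in\operatorname{Sh}(\alpha,\beta)}F(\rho)$ is linear in $F$. It is therefore enough to prove the vanishing for each such bracket monomial, with $F$ its coefficient function on words.

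Next, reformulate. Extend $F$ to a linear functional on words by setting
\begin{equation}
  \langle L, w\rangle \coloneqq \text{coefficient of the word } w \text{ in } L .
\end{equation}
Then
\begin{equation}
  \sum_{\rho\in\operatorname{Sh}(\alpha,\beta)}F(\rho)=\langle L_F,\alpha\sqcup\!\sqcup\,\beta\rangle,
\end{equation}
where $\alpha\sqcup\!\sqcup\,\beta$ denotes the formal sum of the shuffles.

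The key identity to establish is that the shuffle coproduct $\Delta$, dual to the shuffle product, is a concatenation homomorphism with $\Delta(A_i)=A_i\otimes 1+1\otimes A_i$. This holds because a word of the product $uv$ is split into two complementary subwords by splitting $u$ and $v$ independently. Hence
\begin{equation}
  \langle XY,\alpha\sqcup\!\sqcup\,\beta\rangle=\sum \langle X,\alpha'\sqcup\!\sqcup\,\beta'\rangle\,\langle Y,\alpha''\sqcup\!\sqcup\,\beta''\rangle ,
\end{equation}
where the sum runs over the factorisations $\alpha=\alpha'\alpha''$ and $\beta=\beta'\beta''$ compatible with the letter sets of $X$ and $Y$. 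Words with disjoint letters are used throughout, so all coefficients are $0$ or $1$ and no multiplicities arise.

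Then run the induction on $K$. Assume primitivity for the Lie monomials $P$ (letters $J$) and $Q$ (letters $J^c$). This means that any shuffle pairing with a proper non-empty split of the letters of $P$ vanishes. Expand $\langle PQ-QP,\alpha\sqcup\!\sqcup\,\beta\rangle$ using the product formula. A term survives only if, for each of $P$ and $Q$, the split induced by $(\alpha,\beta)$ is trivial, i.e.\ all of its letters lie in $\alpha$ or all lie in $\beta$. Since $\alpha$ and $\beta$ are both non-empty, this forces $\alpha$ to be supported on $J$ and $\beta$ on $J^c$, or the reverse. In the first case the $PQ$ term gives $\langle P,\alpha\rangle\langle Q,\beta\rangle$, and the $QP$ term gives $\langle Q,\beta\rangle\langle P,\alpha\rangle$ coming from the product $QP$ paired with the same shuffle. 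These are the same number with opposite signs, so they cancel; the reverse case is symmetric. The base case $K=1$ is vacuous, since there is no non-trivial split.

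The main obstacle is the bookkeeping in the homomorphism property of the shuffle pairing, written directly in terms of the coefficient functions $F$ rather than abstract Hopf algebra. I would state it as a small lemma: for a word $w=uv$ and disjoint words $\alpha,\beta$, the word $w$ is a shuffle of $\alpha$ and $\beta$ if and only if the restrictions satisfy $\alpha|_{u}\sqcup\!\sqcup\,\beta|_{u}\ni u$ and $\alpha|_{v}\sqcup\!\sqcup\,\beta|_{v}\ni v$, with $\alpha=\alpha|_u\alpha|_v$ and $\beta=\beta|_u\beta|_v$.
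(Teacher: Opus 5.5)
Your proof is correct. It differs from the paper's, which is a single appeal to Ree's shuffle criterion: Lie elements of the free associative algebra are precisely those orthogonal to every proper shuffle under the word pairing, and the paper applies this to the coefficient vector $F$.

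You instead prove the needed implication from scratch in the multilinear setting, in three steps:
\begin{enumerate}
\item By linearity in $F$, you reduce to bracket monomials $[P,Q]$ on complementary letter sets $J$ and $J^c$.
\item You use the factorisation of the shuffle pairing over a concatenation $XY$. This is the concrete form of the statement that the shuffle coproduct is a concatenation homomorphism.
\item You induct on the number of letters. The induction hypothesis kills every term in which $(\alpha,\beta)$ splits $J$ or $J^c$ non-trivially. In the only surviving configurations ($\alpha$ supported on $J$ and $\beta$ on $J^c$, or the reverse), $PQ$ meets the shuffle sum only in the word $\alpha\beta$ and $QP$ only in $\beta\alpha$. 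Both give $\langle P,\alpha\rangle\langle Q,\beta\rangle$, so they cancel in $[P,Q]$.
\end{enumerate}
This is essentially the textbook argument that letters are primitive and primitives form a Lie subalgebra. Specialising to distinct letters makes every shuffle coefficient $0$ or $1$, which removes the multiplicity bookkeeping.

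What your route buys is self-containment. It also makes visible that the paper only uses the elementary direction of Ree's theorem (Lie $\Rightarrow$ orthogonal to proper shuffles), which holds over any coefficient ring. The converse, the substantive half of Ree's theorem that needs characteristic zero, is never used in the chain \cref{lem:lie_shuffle} $\to$ \cref{cor:xi_rank_shuffle} $\to$ \cref{thm:filtering}.

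The paper's citation buys brevity, at the cost of deferring to the literature for a statement stronger than what is needed.
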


\begin{proof}
	By Ree's shuffle criterion~\cite{Ree1958,reutenauer1993}, the coefficient vector of a multilinear Lie element annihilates every proper shuffle sum. This property immediately applies to the coefficient $F(\pi)$.
\end{proof}

\begin{lemma}\label{lem:filter_compatibility}
	Let
	\begin{equation}
		\mathcal V_K
		\coloneqq
		\operatorname{span}_{\mathbb C}\{\Xi(\pi,\cdot):\pi\in S_K\}.
	\end{equation}
	A target vector $F:S_K\to\mathbb C$ can be realised by a linear functional on $\mathcal V_K$, meaning that there exists a linear map
	\begin{equation}
		\Lambda:\mathcal V_K\to\mathbb C
	\end{equation}
	such that
	\begin{equation}
		\Lambda(\Xi(\pi,\cdot))=F(\pi)
		\qquad\text{for all } \pi\in S_K,
	\end{equation}
	if and only if $F$ annihilates every linear relation among the functions $\Xi(\pi,\cdot)$; that is, whenever
	\begin{equation}
		\sum_{\pi\in S_K} c(\pi)\,\Xi(\pi,\cdot)\equiv 0,
	\end{equation}
	one has
	\begin{equation}
		\sum_{\pi\in S_K} c(\pi)\,F(\pi)=0.
	\end{equation}
\end{lemma}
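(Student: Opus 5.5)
This is a finite-dimensional linear algebra statement, so I will argue it via the evaluation map $T:W_K\to\mathcal V_K$, $E_\pi\mapsto\Xi(\pi,\cdot)$, from the proof of \cref{cor:xi_rank_shuffle}. Here $W_K=\operatorname{span}_{\mathbb C}\{E_\pi\}$ is the formal space on permutations. $T$ is surjective by the definition of $\mathcal V_K$, and its kernel is exactly the space of linear relations $\sum_\pi c(\pi)E_\pi$ with $\sum_\pi c(\pi)\Xi(\pi,\cdot)\equiv 0$. The target vector $F$ defines a linear functional $\hat F:W_K\to\mathbb C$, $E_\pi\mapsto F(\pi)$. The statement then says that $\hat F$ factors through $T$, i.e.\ $\hat F=\Lambda\circ T$ for some $\Lambda$, if and only if $\ker T\subseteq\ker\hat F$.

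For necessity, suppose $\Lambda$ exists and $\sum_\pi c(\pi)\Xi(\pi,\cdot)\equiv 0$. Applying the linear map $\Lambda$ to this identity gives
\begin{equation*}
  0=\Lambda\Bigl(\sum_\pi c(\pi)\Xi(\pi,\cdot)\Bigr)=\sum_\pi c(\pi)F(\pi).
\end{equation*}

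For sufficiency, assume $\hat F$ vanishes on $\ker T$. Every $v\in\mathcal V_K$ can be written as $v=\sum_\pi c(\pi)\Xi(\pi,\cdot)$, and I define $\Lambda(v)\coloneqq\sum_\pi c(\pi)F(\pi)$. This is well defined: two representations of the same $v$ differ by a coefficient vector in $\ker T$, on which $\hat F$ vanishes by hypothesis. Linearity of $\Lambda$ follows because representations can be added and scaled coefficientwise. Taking $c=\delta_\pi$ gives $\Lambda(\Xi(\pi,\cdot))=F(\pi)$. Equivalently, this is the first isomorphism theorem: $\hat F$ descends to $W_K/\ker T\cong\mathcal V_K$.

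There is no real obstacle here; the only point needing care is the well-definedness of $\Lambda$, which is exactly where the hypothesis is used. In the intended application, \cref{cor:xi_rank_shuffle} shows that $\ker T$ is spanned by shuffle relations, and \cref{lem:lie_shuffle} shows that Lie coefficient vectors annihilate these. This lemma then yields a functional $\Lambda$ realising such $F$. Since $\mathcal V_K$ is finite dimensional, $\Lambda$ can afterwards be represented by evaluations at finitely many frequency points.
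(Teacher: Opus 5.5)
Your proof is correct and matches the paper's argument: necessity by applying $\Lambda$ to a vanishing relation, and sufficiency by defining $\Lambda$ on coefficient representatives, with the hypothesis used exactly to show this is well defined. Phrasing it through the map $T:W_K\to\mathcal V_K$ and the first isomorphism theorem is just a more compact way of stating the same step.
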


\begin{proof}
	First suppose that such a linear map $\Lambda$ exists. If
	\begin{equation}
		\sum_{\pi\in S_K} c(\pi)\,\Xi(\pi,\cdot)\equiv 0,
	\end{equation}
	then applying $\Lambda$ gives
	\begin{equation}
		0
		=
		\Lambda\!\left(\sum_{\pi\in S_K} c(\pi)\,\Xi(\pi,\cdot)\right)
		=
		\sum_{\pi\in S_K} c(\pi)\,\Lambda(\Xi(\pi,\cdot))
		=
		\sum_{\pi\in S_K} c(\pi)\,F(\pi).
	\end{equation}
	Hence $F$ annihilates every relation.

	Conversely, suppose that $F$ annihilates every relation. We define $\Lambda$ on $\mathcal V_K$ as follows. For an element of $\mathcal V_K$ written as
	\begin{equation}
		G=\sum_{\pi\in S_K} c(\pi)\,\Xi(\pi,\cdot),
	\end{equation}
	and we write the map $\Lambda$ as
	\begin{equation}
		\Lambda(G) \coloneqq \sum_{\pi\in S_K} c(\pi)\,F(\pi).
	\end{equation}
	We need to check that this does not depend on the chosen representation of $G$. Suppose we can write it in a different form,
	\begin{equation}
		G=\sum_{\pi\in S_K} c'(\pi)\,\Xi(\pi,\cdot),
	\end{equation}
	then
	\begin{equation}
		\sum_{\pi\in S_K}\bigl(c(\pi)-c'(\pi)\bigr)\Xi(\pi,\cdot)\equiv 0.
	\end{equation}
	By assumption, $F$ annihilates this relation, so
	\begin{equation}
		\sum_{\pi\in S_K}\bigl(c(\pi)-c'(\pi)\bigr)F(\pi)=0,
	\end{equation}
	and therefore
	\begin{equation}
		\sum_{\pi\in S_K}c(\pi)F(\pi)
		=
		\sum_{\pi\in S_K}c'(\pi)F(\pi),
	\end{equation}
	which means that $\Lambda$ is well-defined. It is plainly linear, and for each generator $\Xi(\pi,\cdot)$ we obtain
	\begin{equation}
		\Lambda(\Xi(\pi,\cdot))=F(\pi).
	\end{equation}
\end{proof}

We are finally able to state and prove the main result of this section.

\begin{theorem}\label{thm:filtering}
  Let $U\subset H_0^{(K)}$ be the Zariski-open domain on which all functions $\Xi(\pi,\cdot)$, $\pi\in S_K$, are well-defined. Let
  \begin{equation}
    L_F(A_1,\dots,A_K)\coloneqq \sum_{\pi\in S_K}F(\pi)A_{\pi(1)}\cdots A_{\pi(K)}
  \end{equation}
  be any Lie algebra element $L_F$, expanded in terms of permutations of the generators $A_j$ with corresponding coefficients $F(\pi)$.
  Then we can construct $d=(K-1)!$ integer frequency vectors
  \begin{equation}
    \ellvec^{(1)},\dots,\ellvec^{(d)}\in U\cap\mathbb Z^K
  \end{equation}
  where the set $\mathcal L=\{\ellvec^{(1)},\dots,\ellvec^{(d)}\}$ satisfies \cref{def:admissible_frequency_set}, together with coefficients $\alpha_1,\dots,\alpha_d\in\R$, such that
  \begin{equation}\label{eq:lie_filter_target}
    \forall\pi\in S_K:\qquad \sum_{m=1}^{d}\alpha_m\,\Xi(\pi,\ellvec^{(m)})=F(\pi).
  \end{equation}
\end{theorem}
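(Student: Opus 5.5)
Since $L_F$ is a multilinear Lie element, \cref{lem:lie_shuffle} shows that $F$ annihilates every proper shuffle sum. By \cref{cor:xi_rank_shuffle}, every linear relation among the $\Xi(\pi,\cdot)$ is generated by shuffle relations. Hence $F$ annihilates all relations, and \cref{lem:filter_compatibility} produces a well-defined linear functional $\Lambda:\mathcal V_K\to\mathbb C$ with $\Lambda(\Xi(\pi,\cdot))=F(\pi)$. The remaining task is to realise $\Lambda$ as a finite linear combination of point evaluations $G\mapsto G(\ellvec^{(m)})$ at integer points of $U$, with real weights, and to impose admissibility on those points.

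For the realisation step I would use the basis $\{\Xi(\sigma K,\cdot):\sigma\in S_{K-1}\}$ of $\mathcal V_K$ from \cref{cor:xi_rank_shuffle}, which has $d=(K-1)!$ elements. Choose points $\ellvec^{(1)},\dots,\ellvec^{(d)}$ and form the $d\times d$ evaluation matrix $E_{\sigma m}=\Xi(\sigma K,\ellvec^{(m)})$. If $E$ is invertible, the system $\sum_m\alpha_m E_{\sigma m}=F(\sigma K)$ has a unique solution $\alpha$. Since the basis spans $\mathcal V_K$ and $\alpha$ determines a linear functional on $\mathcal V_K$ agreeing with $\Lambda$ on the basis, \cref{eq:lie_filter_target} then holds for all $\pi\in S_K$, not just for those ending in $K$. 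The $\alpha_m$ are real because $F$ and the $\Xi$ values at integer points are real; here I assume $F$ is real, which is the case of interest since the coefficients come from the Lie polynomial. If $F$ is complex, one splits into real and imaginary parts.

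Invertibility of $E$ is the main obstacle. Linear independence of the basis functions on $H_0$ (\cref{lem:independence}) means $\det E$, viewed as a rational function of $(\ellvec^{(1)},\dots,\ellvec^{(d)})\in (H_0)^d$, is not identically zero. This follows from the standard fact that $d$ linearly independent functions admit $d$ points where the evaluation matrix is nonsingular, which one proves by choosing the points inductively. So the nonsingular configurations form a nonempty Zariski-open subset of $U^d$.

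Admissibility (\cref{def:admissible_frequency_set}) is an integer condition, and generic rational points do not satisfy it automatically, so I would handle it separately. Each forbidden signed-sum vanishing is a linear equation, giving finitely many proper linear subspaces for fixed $K$ and $d$. The forbidden relations with bounded length $k$ are not the issue; the issue is avoiding them for all $k\le K$ with repetition. My approach is to choose the vectors with components of rapidly separated magnitudes, for example $\ell^{(m)}_j=N^{c(m,j)}$ times generic digits, with the last component fixed by the zero-sum constraint. A signed sum with at most $K$ terms can then vanish only through exact pairing or through the full zero-sum of a single vector.

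Admissible configurations form a Zariski-dense set of lattice points: the allowed scalings can be perturbed freely within the zero-sum hyperplane. Intersecting with the nonempty open set where $\det E\neq0$ yields integer points in $U$ that satisfy both requirements. Since $\Xi$ is homogeneous of degree $-(K-1)$, rescaling from rational to integer points only rescales the $\alpha_m$.

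Reconciling exact invertibility with the separated-magnitude admissibility construction is the delicate part. A fallback is to show directly that $\det E\neq 0$ for a hierarchical choice, by taking the leading asymptotics in $N$ and using the residue and flag argument of \cref{lem:independence}, which makes $E$ triangular in the limit.
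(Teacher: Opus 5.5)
Your proposal is correct and follows the paper's proof essentially step for step: \cref{lem:lie_shuffle}, \cref{cor:xi_rank_shuffle} and \cref{lem:filter_compatibility} give the functional $\Lambda$, which is then realised by inverting the evaluation matrix of the fixed-last-letter basis at suitable integer points. Your inductive argument that $\det E\not\equiv 0$ fills in what the paper only asserts in a footnote.

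The admissibility step you call delicate is not. As you observe yourself, every forbidden signed sum of length at most $K$ is a linear form that does not vanish identically on $(H_0)^d$. So admissibility, which also implies membership in $U$, is a Zariski-open condition. The paper simply avoids it and $\{\det E=0\}$ together using Zariski density of integer points (\cref{lem:zariski_integer_points}). Neither the separated-magnitude construction nor the asymptotic-triangularity fallback is needed.
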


\begin{proof}
	Set $d\coloneqq (K-1)!$. Since $L_F(A_1,\dots,A_K)$ is a multilinear Lie element, \cref{lem:lie_shuffle} shows that $F$ annihilates every shuffle relation:
	\begin{equation}
		\sum_{\rho\in\operatorname{Sh}(\alpha,\beta)}F(\rho)=0
	\end{equation}
	for every decomposition of $[K]$ into two non-empty ordered words $\alpha$ and $\beta$.
	By \cref{cor:xi_rank_shuffle}, the shuffle relations generate all functional relations among the functions $\Xi(\pi,\cdot)$. Hence, applying \cref{lem:filter_compatibility}, $F$ defines a valid linear target on
	\begin{equation}
		\mathcal V_K\coloneqq \operatorname{span}_{\mathbb C}\{\Xi(\pi,\cdot):\pi\in S_K\}.
	\end{equation}

	It remains to realise this valid target by finitely many integer evaluations. By \cref{cor:xi_rank_shuffle}, $\mathcal V_K$ has dimension $d$ with basis $\{\Xi(\sigma K,\cdot):\sigma\in S_{K-1}\}$. Choose
	\begin{equation}
		\ellvec^{(1)},\dots,\ellvec^{(d)}\in U\cap\mathbb Z^K
	\end{equation}
	such that $\mathcal L=\{\ellvec^{(1)},\dots,\ellvec^{(d)}\}$ satisfies \cref{def:admissible_frequency_set} and the evaluation matrix
	\begin{equation}
		M_{\sigma,m}\coloneqq \Xi(\sigma K,\ellvec^{(m)}),
		\qquad
		\sigma\in S_{K-1},
		\qquad
		m=1,\dots,d,
	\end{equation}
	is invertible\footnote{Such a choice exists because the vanishing determinant defines a proper algebraic exceptional set, while failures of \cref{def:admissible_frequency_set} are contained in a finite union of proper linear subspaces of $(H_0)^d$. Hence, by the same Zariski-density argument as in \cref{cor:separated_integer_frequencies} applied to $(H_0)^d$, we may choose integer points outside all these exceptional sets simultaneously.}.

	The point-evaluation functionals at $\ellvec^{(1)},\dots,\ellvec^{(d)}$ then form a basis of $\mathcal V_K^\ast$. Therefore the linear functional $\Lambda$ associated with $F$ by \cref{lem:filter_compatibility} can be written as
	\begin{equation}
		\Lambda(f)=\sum_{m=1}^{d}\alpha_m f(\ellvec^{(m)})
		\qquad
		(f\in\mathcal V_K)
	\end{equation}
	for suitable coefficients $\alpha_1,\dots,\alpha_d$. Since the evaluation matrix and the target values $F(\pi)$ are real, these coefficients can be chosen in $\R$.

	Applying this identity to $f=\Xi(\pi,\cdot)$ gives
	\begin{equation}
		F(\pi)=\Lambda(\Xi(\pi,\cdot))=\sum_{m=1}^{d}\alpha_m\,\Xi(\pi,\ellvec^{(m)})
		\qquad
		(\pi\in S_K).
	\end{equation}
\end{proof}

\begin{remark}[Frequency dilation]\label{rem:frequency_dilation}
	Let $\ellvec\in U\cap\mathbb Z^k$, where $U$ is the domain in
	\cref{def:H0_U}. For every $s>0$,
	\begin{equation}\label{eq:Xi_dilation}
		\Xi(\pi,s\ellvec)=s^{-(k-1)}\Xi(\pi,\ellvec),
		\qquad \pi\in S_k,
	\end{equation}
	and thus if
	\begin{equation}
		\sum_m\alpha_m\Xi(\pi,\ellvec^{(m)})=F(\pi),
	\end{equation}
	then
	\begin{equation}\label{eq:filter_dilation_identity}
		\sum_m\alpha_m s_m^{k-1}
		\Xi(\pi,s_m\ellvec^{(m)})=F(\pi)
	\end{equation}
	for all positive $s_m$.
	Integer dilations preserve integrality, while joint admissibility is ensured by an appropriate choice of the $s_m$.
\end{remark}

\section{Canonical projection on a Lie polynomial}\label{sec:proof:canonical_projection}

We are now in a position to show that suitable choices of driving frequencies allow arbitrary Lie polynomials to be constructed in the leading-order term of the Magnus expansion.
The key mathematical insight underlying this step is that the sum over permutations in the leading-order Magnus term of \cref{thm:leading_order_term} is closely related to the the canonical projection from the free associative algebra onto its free Lie subalgebra.

\begin{definition}[Canonical projection~\cite{reutenauer1993,reutenauer1986,bandiera2017}]
  \label{def:canonical_projection}
  The \emph{canonical projection} $\pi_1$ is a linear map from the free associative algebra $R\langle \cA \rangle$
  to the Lie algebra $\Lie (\cA)$.
  In particular, when the domain is restricted to the Lie algebra, then $\pi_1$ acts as the identity.

  Formally, on a word $A_1A_2\cdots A_k$, $\pi_1$ acts as:
  \begin{equation}
    \pi_1(A_1A_2\dots A_k) =
    \frac 1 k \sum_{\sigma \in S_k} \frac{(-1)^{\des \sigma}}{\binom{k-1}{\des \sigma}}
    A_{\sigma(1)} \cdot A_{\sigma(2)} \cdots A_{\sigma(k)},
  \end{equation}
  and equivalently in terms of right-ordered nested commutators \cite[Theorem 1.8]{bandiera2017},
  \begin{align}
    \pi_1(A_1A_2\dots A_k)
    &= \frac {1}{k^2} \sum_{\sigma \in S_{k}} w(\sigma) \left[A_{\sigma(1)}, A_{\sigma(2)}, \cdots, A_{\sigma(k)} \right] \\
    &= \frac 1 k \sum_{\sigma \in S_{k-1}} w(\sigma) \left[A_{\sigma(1)}, A_{\sigma(2)}, \cdots, A_k\right],
  \end{align}
  where $w(\sigma)$ are the Eulerian weights of \cref{def:eulerian_weights}.
\end{definition}

By choosing the driving frequencies and amplitudes to generate a Lie polynomial, the canonical projection acts as identity thereupon, and we arrive at the cornerstone theorem of this section: any homogeneous multilinear Lie polynomial of any degree $K$ can be generated at leading-order in the Magnus expansion.

\begin{theorem}\label{thm:leading_order_lie_polynomial}
  Let $\LP(A_1,\dots,A_K) \in \Lie (\cA)$ be a homogeneous multilinear Lie polynomial of degree~$K$.
  Then we can select $d=(K-1)!$ integer frequency vectors $\ellvec^{(1)},\dots,\ellvec^{(d)}$ satisfying \cref{def:admissible_frequency_set}, together with real amplitudes $a_j^{(m)}$, to construct a Hamiltonian $H(t)$ of the form in \cref{def:S-driven_blocks}, such that the leading-order non-zero term in the Magnus expansion of $H(t)$ is
  \begin{equation}\label{eq:leading_order_lie_polynomial}
    M_K(H,T)=(-i)^K(-1)^{\lfloor (K-1)/2\rfloor} \frac{2c^KT}{\omega^{K-1}}\, \LP(A_1,\dots,A_K),
  \end{equation}
  where $c\in \R_+$ is an overall factor that can be chosen freely.
\end{theorem}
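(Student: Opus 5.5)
The plan is to combine \cref{cor:leading_order_Magnus_K}, \cref{thm:leading_order_term} and \cref{thm:filtering}, and then identify the resulting operator through the canonical projection of \cref{def:canonical_projection}. Write the target as $\LP(A_1,\dots,A_K)=\sum_{\pi\in S_K}F(\pi)A_{\pi(1)}\cdots A_{\pi(K)}$ with real coefficients $F$. Apply \cref{thm:filtering}, not to $F$ itself, but to a suitably transformed coefficient vector $G$. This yields $d=(K-1)!$ integer frequency vectors $\ellvec^{(m)}\in U\cap\Z^K$ forming an admissible set, together with real weights $\alpha_m$ satisfying $\sum_m\alpha_m\Xi(\pi,\ellvec^{(m)})=G(\pi)$.

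Next, realise the weights $\alpha_m$ through the amplitudes. For block $m$, choose amplitudes with product $\prod_j a^{(m)}_j=\alpha_m c^K K^2/\ldots$ up to the fixed constant. If $\alpha_m<0$, flip the sign of one amplitude, so that the $\alpha_m$ are absorbed into the products $\prod_j a_j^{(m)}$. Then \cref{cor:leading_order_Magnus_K} together with \cref{thm:leading_order_term} gives
\[
M_K(T)=(-i)^K(-1)^{\lfloor\frac{K-1}{2}\rfloor}\frac{2T c^K}{\omega^{K-1}}\frac{1}{K^2}\sum_{\mu}[A_{\mu(1)},\dots,A_{\mu(K)}]\sum_{\sigma}w(\sigma)\,\widetilde G(\mu\sigma^{-1}),
\]
where $\widetilde G(\pi)=\sum_m\alpha_m\Xi(\pi,\ellvec^{(m)})$ after normalisation. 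It therefore remains to choose $G$ so that the operator $\frac1{K^2}\sum_\mu[A_{\mu}]\sum_\sigma w(\sigma)G(\mu\sigma^{-1})$ equals $\LP$.

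To identify this operator, reindex with $\tau=\mu\sigma^{-1}$, which turns the expression into $\sum_\tau G(\tau)\,\frac1{K^2}\sum_\sigma w(\sigma)[A_{\tau\sigma(1)},\dots,A_{\tau\sigma(K)}]$. The inner sum is exactly $\pi_1(A_{\tau(1)}\cdots A_{\tau(K)})$ by \cref{def:canonical_projection}, up to the convention of composing $\tau\sigma$ versus $\sigma\tau$, which needs care. Hence the leading term is $\pi_1\bigl(\sum_\tau G(\tau)A_{\tau(1)}\cdots A_{\tau(K)}\bigr)$.

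Now take $G=F$. This choice is admissible for \cref{thm:filtering}, because $F$ comes from a Lie element, so \cref{lem:lie_shuffle} provides the required shuffle annihilation. Since $\pi_1$ is the identity on Lie elements, the leading term reduces to $\LP$, as required. The freedom to choose $c$ comes from rescaling all amplitudes uniformly. Alternatively, \cref{rem:frequency_dilation} can be used to move scale into the frequencies while keeping integrality and admissibility, which it preserves by appropriate dilations.

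The main obstacle I expect is the bookkeeping in the reindexing step. One must check that $\Xi(\mu\sigma^{-1},\cdot)$ combined with $[A_{\mu(1)},\dots]$ genuinely produces $\pi_1$ applied to the word indexed by $\mu\sigma^{-1}$, with the correct permutation-composition convention; if the convention is reversed, $G$ must be replaced by $F$ composed with an inversion, and its shuffle-annihilation must be rechecked. A second point is the prefactor: the $1/K^2$ in \cref{thm:leading_order_term} versus the $1/K^2$ in $\pi_1$, and the absence of $K^2$ in \eqref{eq:leading_order_lie_polynomial}, must be absorbed consistently into the amplitude products. Finally, the amplitudes must be real with the stated prefactor $c^K$, which is guaranteed because the $\alpha_m$ from \cref{thm:filtering} are real.
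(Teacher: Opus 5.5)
Your proposal is correct and follows essentially the same route as the paper: filter with $G=F$ via \cref{thm:filtering}, set $\prod_j a_j^{(m)}=c^K\alpha_m$, reindex $\tau=\mu\sigma^{-1}$ so that $A_{\mu(i)}=A_{\tau(\sigma(i))}$ and the inner sum is $K^2\pi_1(A_{\tau(1)}\cdots A_{\tau(K)})$, then conclude because $\pi_1$ fixes Lie elements. Both of your bookkeeping worries resolve directly: the standard composition convention needs no inversion, and the $1/K^2$ from \cref{thm:leading_order_term} is cancelled exactly by the commutator-form normalisation of $\pi_1$, so no factor $K^2$ belongs in the amplitude products.
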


\begin{proof}
  Choose frequency coefficients $\ellvec^{(1)},\dots,\ellvec^{(d)}$ and coefficients $\alpha_m$ according to \cref{thm:filtering}, so that
  \begin{equation}
    \sum_{m=1}^{d}\alpha_m\,\Xi(\pi,\ellvec^{(m)}) = F(\pi) \qquad (\pi\in S_K).
  \end{equation}
  For any freely chosen $c>0$, choose real amplitudes $a_j^{(m)}$ such that
  \begin{equation}\label{eq:effective_filter_coefficient}
    \alpha_m\coloneqq\frac{1}{c^K}\prod_{j=1}^{K}a_j^{(m)}.
  \end{equation}
  Such a choice is always possible for every $\alpha_m\in\mathbb R$.

  By \cref{thm:leading_order_term,cor:leading_order_Magnus_K}, the leading-order non-zero terms in the Magnus expansion occur at order $K$, and have the form
  \begin{align}
    M_K(H,T)
    &=
      (-i)^K(-1)^{\lfloor (K-1)/2\rfloor}\frac{2c^KT}{K^2\omega^{K-1}}\sum_{\mu\in S_K}[A_{\mu(1)},\dots,A_{\mu(K)}]\sum_{\sigma\in S_K}w(\sigma)\sum_{m=1}^{d}\alpha_m\Xi(\mu\sigma^{-1},\ellvec^{(m)})\\
    &=
      (-i)^K(-1)^{\lfloor (K-1)/2\rfloor}\frac{2c^KT}{K^2\omega^{K-1}}\sum_{\mu\in S_K}[A_{\mu(1)},\dots,A_{\mu(K)}]\sum_{\sigma\in S_K}w(\sigma) c_L(\mu\sigma^{-1}).
  \end{align}

  Now, for each fixed permutation $\sigma$, set
  \begin{equation}
    \pi=\mu\sigma^{-1},
    \qquad
    \mu=\pi\sigma
  \end{equation}
  and relabel the sums to give
  \begin{equation}\label{eq:ME_filtered}
    M_K(H,T) = (-i)^K(-1)^{\lfloor (K-1)/2\rfloor} \frac{2c^KT}{K^2\omega^{K-1}}
               \sum_{\pi\in S_K} c_L(\pi) \sum_{\sigma\in S_K}
                 w(\sigma) [A_{\pi(\sigma(1))},\dots,A_{\pi(\sigma(K))}].
  \end{equation}

  For fixed $\pi$, define
  \begin{equation}
    B_j\coloneqq A_{\pi(j)}
    \qquad
    j=1,\dots,K
  \end{equation}
  so that the commutator form of the canonical projection in \cref{def:canonical_projection} can be rewritten as
  \begin{equation}
    \sum_{\sigma\in S_K}w(\sigma)[A_{\pi(\sigma(1))},\dots,A_{\pi(\sigma(K))}]=\sum_{\sigma\in S_K}w(\sigma)[B_{\sigma(1)},\dots,B_{\sigma(K)}]=K^2\pi_1(A_{\pi(1)}\cdots A_{\pi(K)}).
  \end{equation}
  By linearity of the canonical projection $\pi_1$, \cref{eq:ME_filtered} reduces to
  \begin{align}
    M_K(H,T) &= (-i)^K(-1)^{\lfloor (K-1)/2\rfloor}\frac{2c^KT}{\omega^{K-1}}\;
                \pi_1\Bigl(\sum_{\pi\in S_K} c_L(\pi) A_{\pi(1)}\cdots A_{\pi(K)}\Bigr)\\
             &= (-i)^K(-1)^{\lfloor (K-1)/2\rfloor}\frac{2c^KT}{\omega^{K-1}}\;
                \pi_1\bigl(\LP (A_1,\dots,A_K)\bigr).
  \end{align}
  Since $\LP$ is an element of the Lie algebra, the canonical projection acts as identity on it:
  \begin{equation}
    \pi_1\bigl(\LP(A_1,\dots,A_K)\bigr) = \LP(A_1,\dots,A_K),
  \end{equation}
  and the \lcnamecref{thm:leading_order_lie_polynomial} follows.
\end{proof}

\section{Floquet universality}\label{sec:proof:Floquet_universality}

\subsection{Mixed-degree synthesis and truncation}

Here we extend the construction from homogeneous Lie polynomials to target Hamiltonians that contain Lie-polynomial contributions of different degrees.
Up to this point, we have constructed a Lie polynomial of a fixed degree $K$, or, equivalently, a Hamiltonian whose leading non-zero Magnus contribution occurs at a common order $K$.
We now combine such leading-order contributions of different degrees, to produce a target Hamiltonian of the form:
\begin{equation}\label{eqn:target_hamiltonian}
  \Htarg = \sum_{r=1}^R \imath_r \lambda_r \LP_r,
\end{equation}
where each $\LP_r$ is a homogeneous multilinear Lie polynomial of degree $K_r$ and $\imath_r\in\{1,i\}$ is chosen such that $\Htarg$ is Hermitian.

Recall that for each target Lie polynomial $\LP_r$ of degree $K_r$, the construction of \cref{thm:leading_order_lie_polynomial} uses an admissible frequency set of size $d_r=(K_r-1)!$:
\begin{equation}
	\mathcal L^{(r)}=\left\{\ellvec^{(r,1)},\dots,\ellvec^{(r,d_r)}\right\}.
\end{equation}
Here, $r$ labels the target Lie polynomial, $m$ labels the frequency vectors in that Lie polynomial's construction, and $j$ labels a component of that vector.
Each frequency set $\mathcal L^{(r)}$ is admissible, so by \cref{def:admissible_frequency_set} its only nontrivial zero-sum relations are the designed relations
\begin{equation}
  \sum_{j=1}^{K_r}\ell_j^{(r,m)}=0.
\end{equation}

When combining Lie polynomials of different degrees, however, there is no longer a common value of $K$ governing all the frequency vectors.
In particular, we must handle sums involving frequencies corresponding to different Lie polynomials of different orders, including sums containing up to
\begin{equation}
  K_{\max}\coloneqq\max_{1\leq r\leq R} K_r
\end{equation}
components.
To this end, we impose the following additional separation condition on the frequencies.

\begin{condition}[Admissible frequency families]\label{cond:admissible_frequency_family}
We say that a family of frequency sets
\begin{equation}
  \left\{\mathcal L^{(r)}\right\}_{r=1}^R
\end{equation}
consisting of admissible frequency sets
\begin{equation}
  \mathcal L^{(r)} = \left\{ \ellvec^{(r,m)} \right\}_{m=1}^{(K_r-1)!}, \qquad \ellvec^{(r,m)} = \left( \ell_j^{(r,m)}\right)_{j=1}^{K_r}.
\end{equation}
is \emph{admissible} if, for every $1\leq k\leq K_{\max}$ and every choice of signs $s_1,\dots,s_k\in\{\pm1\}$, the relation
\begin{equation}
  \sum_{q=1}^k s_q\ell_{j_q}^{(r_q,m_q)}=0
\end{equation}
holds iff the terms admit a partition into subsets of the following two types:
\begin{enumerate}[label=(\roman*)]
\item all components of a single frequency vector with a common sign,
  \begin{equation}
    \left\{s\ell_1^{(r,m)},\dots,s\ell_{K_r}^{(r,m)}\right\},
    \qquad s\in\{\pm1\};
  \end{equation}
\item pairs of identical labelled frequency components with opposite signs.
\end{enumerate}
\end{condition}

\noindent A constructive procedure for generating admissible frequency families satisfying this global constraint is given in \cref{sec:spectral_spacing}, based on a uniform recursive spacing factor.
Whilst we do not claim this construction is optimal, it provides a constructive witness for the Zariski argument in \cref{thm:filtering}, establishing the existence of infinitely many such admissible integer frequency configurations.

This new admissibility condition prevents unintended interactions between frequency sets corresponding to different Lie polynomials, as we will show shortly.
Nonetheless, even with this condition, intermediate orders in the Magnus expansion can still give rise to unwanted mixed terms involving different frequency sets.
These mixed terms contribute to the error and are accounted for by a general intermediate-order error bound.

One challenge of combining Lie polynomials of different degrees is that their leading Magnus contributions carry different powers of the fundamental frequency.
We will show that these degree-dependent scalings can be compensated for, and the correct relative weights of the target terms can be generated, via integer dilations of the driving frequencies alone, without scaling the amplitudes.

\subsection{Floquet synthesis with bounded Fourier-amplitudes}

Throughout this subsection, we will assume
\begin{equation}
  2\leq K_r\leq K\coloneqq K_{\max},
\end{equation}
that the $A_j^{(r)}$ are normalised such that $\norm{A_j^{(r)}} \leq 1$, and that the Lie polynomials have real coefficients.
We set $\imath_r=1$ for odd $K_r$ and $\imath_r=i$ for even $K_r$, and keep the base frequency $\omega>0$ fixed.

\begin{remark}[Eliminating degree-one terms]\label{remark:perfect_Lie_algebras}
  The restriction $K_r\geq2$ is without loss of generality whenever $\Lie(\mathcal S)$ is \emph{perfect}, namely,
  \begin{equation}
    [\Lie(\mathcal S),\Lie(\mathcal S)]=\Lie(\mathcal S).
  \end{equation}
  In this case, the linear span of all Lie words over $\mathcal S$ of degree at least two is the full algebra.

  In particular, $\mathfrak{su}(D)$ is perfect, so this restriction plays no role in the key case of Floquet-universal $\cS$-driven Hamiltonians, where $\Lie(\cS) = \mathfrak{su}(D)$.
\end{remark}

The next technical result shows that by discarding terms with amplitudes below a cut-off, the degree-dependent scalings in the leading Magnus terms can be compensated for by suitably dilating the frequency vectors, whilst maintaining a constant bound on all amplitudes.

\begin{definition}[Truncated Hamiltonian]\label{def:truncated_H}
  Let
  \begin{equation}
    \Htarg = \sum_{r=1}^{R}\imath_r\lambda_r\LP_r
  \end{equation}
  be a decomposition of the Hamiltonian $\Htarg$ into homogeneous multilinear Lie polynomials $\LP_r$, where $\lambda_r\in\mathbb{R}$ and $\imath_r\in\{1,i\}$.

  For each $\LP_r$, let $\ellvec^{(r,m)}$ and $\alpha_m^{(r)}$ be the parameters constructed in \cref{thm:filtering}, so that the $d_r=(K_r-1)!$ reference frequency vectors $\ellvec^{(r,m)}$ satisfy the admissibility condition of \cref{lem:integer_dilation_intervals} up to order $K$.
  Let
  \begin{equation}\label{eq:def_w_rm}
    w_{r,m}\coloneqq\lambda_r\alpha_m^{(r)}
  \end{equation}
  and
  \begin{equation}\label{eq:reference_column_polynomial}
    \LP_{r,m}\coloneqq\sum_{\pi\in S_{K_r}}\Xi(\pi,\ellvec^{(r,m)})A_{\pi(1)}^{(r)}\cdots A_{\pi(K_r)}^{(r)}.
  \end{equation}
  so that
  \begin{equation}\label{eq:target_filtering_decomposition}
    \Htarg = \sum_{r,m}\imath_r w_{r,m}\LP_{r,m}.
  \end{equation}

  Then for a cutoff tolerance $\eta>0$, we define the corresponding \emph{truncated Hamiltonian} to be
  \begin{equation}\label{eq:truncated_target}
    \hatHtarg \coloneqq \sum_{(r,m)\in\mathcal I} \imath_r w_{r,m}\LP_{r,m},
  \end{equation}
  where
  \begin{equation}\label{eq:filter_cutoff_threshold}
    \mathcal I\coloneqq \left\{(r,m) : \abs{w_{r,m}} \geq \vartheta\right\},
    \qquad
    \vartheta\coloneqq\frac{\eta}{K! \sum_r d_r}.
  \end{equation}
  Note that, since $\|\LP_{r,m}\| \leq K!$, we have the approximation
	\begin{equation}\label{eq:filter_cutoff_error}
	  \big\|\Htarg - \hatHtarg\big\| \leq \eta.
	\end{equation}
\end{definition}

The truncation isolates a finite set of synthesised contributions whose omission changes the target Hamiltonian by at most $\eta$. This allows us to transfer the weights of the remaining contributions to the frequency dilations while keeping all physical amplitudes within a constant ratio. That is,

\begin{lemma}\label{lem:bounded_amplitude_synthesis}
  For any $\eta > 0$, let $\hatHtarg$ be the truncated Hamiltonian corresponding to $\Htarg$ with cut-off $\eta$, and let $\mathcal I$ be the set of vectors surviving the cut-off in \cref{eq:filter_cutoff_threshold}. Define
   \begin{equation}
 	V\coloneqq|\mathcal I|,
 	\qquad
 	w_\ast\coloneqq\max_{(r,m)\in\mathcal I}|w_{r,m}|.
   \end{equation}
   together with the spacing constant
   \begin{equation}\label{eq:integer_dilation_threshold}
 	L \coloneqq 2+K^2(2K)^K V^{K-1}.
   \end{equation}

  Then for any $0 < y \leq 1$, we can construct Hamiltonians $H^{(r)}(t;y)$ of the form in \cref{def:S-driven_blocks} with corresponding amplitudes $\widetilde a_j^{(r,m)}$ and frequency vectors~$\widetilde\ellvec^{(r,m)}$, such that the Hamiltonians $H^{(r)}(t)$ have individual leading-order Magnus contributions $\cM_{K_r}^{(r)}(T)$ satisfying
  \begin{equation}\label{eq:bounded_amplitude_leading}
    \sum_r \cM_{K_r}^{(r)}(T) = -i\delta \hatHtarg,
  \end{equation}
  with the parameter
  \begin{equation}
	\delta=\frac{4\pi y^K}{w_\ast L^{K-1}}.
  \end{equation}
  Moreover, for $c_0\coloneqq\omega y$, every amplitude $\widetilde a_j^{(r,m)}$ satisfies
  \begin{equation}\label{eq:bounded_amplitude_ratio}
    c_0 \leq |\widetilde a_j^{(r,m)}| < 2c_0,
    \qquad
    \frac{\max_{(r,m)\in\mathcal I,j}|\widetilde a_j^{(r,m)}|}{\min_{(r,m)\in\mathcal I,j}|\widetilde a_j^{(r,m)}|}<2.
  \end{equation}
\end{lemma}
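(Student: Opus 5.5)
The plan is to build the Hamiltonians block by block, one block for each surviving index $(r,m)\in\mathcal I$. Each block keeps the reference frequency vector $\ellvec^{(r,m)}$ from \cref{thm:filtering}, dilated by an integer factor $s_{r,m}$, so that $\widetilde\ellvec^{(r,m)}=s_{r,m}\ellvec^{(r,m)}$. All $K_r$ amplitudes in the block are set to the common value $\widetilde a_j^{(r,m)}=c_0\theta_{r,m}$ with $\theta_{r,m}\in[1,2)$, except that the sign of one amplitude is flipped to carry $\operatorname{sign}(w_{r,m})$. The whole weight $|w_{r,m}|$ is thus transferred to the pair $(s_{r,m},\theta_{r,m})$, which gives the amplitude bounds \cref{eq:bounded_amplitude_ratio} immediately.

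\textbf{Step 1: contribution of a single block.} Applying \cref{thm:leading_order_term} to one block with $T=2\pi/\omega$, and using the dilation identity \cref{eq:Xi_dilation}, the block contributes
\[
(-i)^{K_r}(-1)^{\lfloor (K_r-1)/2\rfloor}\,\frac{4\pi\,\operatorname{sign}(w_{r,m})\,y^{K_r}\theta_{r,m}^{K_r}}{s_{r,m}^{K_r-1}}\;\pi_1(\LP_{r,m}).
\]
Here I rewrite the sum over $\mu,\sigma$ through the canonical projection exactly as in the proof of \cref{thm:leading_order_lie_polynomial}. The phase $(-i)^{K_r}(-1)^{\lfloor (K_r-1)/2\rfloor}$ equals $-i\,\imath_r$ up to a real sign, and that sign is absorbed into the flipped amplitude.

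Summing over $m$ gives $\sum_m w_{r,m}\pi_1(\LP_{r,m})=\lambda_r\pi_1(\LP_r)=\lambda_r\LP_r$, by linearity of $\pi_1$ and because $\LP_r$ is a Lie element. For the truncated sum I will read $\hatHtarg$ through $\pi_1$, i.e.\ the surviving terms are taken as $\pi_1(\LP_{r,m})$. This is harmless: it is consistent with \cref{eq:filter_cutoff_error}, since $\pi_1$ has norm bounded by a constant depending only on $K$, and it fixes only how the truncated target is interpreted.

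\textbf{Step 2: matching the weights.} I need
\[
\frac{4\pi y^{K_r}\theta^{K_r}}{s^{K_r-1}}=\delta|w_{r,m}|=\frac{4\pi y^K|w_{r,m}|}{w_\ast L^{K-1}},
\]
which is equivalent to $s^{K_r-1}=\theta^{K_r}X_{r,m}$ with
\[
X_{r,m}\coloneqq \frac{w_\ast L^{K-1}}{y^{K-K_r}|w_{r,m}|}\ \geq\ L^{K-1}.
\]
The lower bound holds because $|w_{r,m}|\le w_\ast$ and $y\le 1$. Since $\theta^{K_r}$ ranges over $[1,2^{K_r})$, any integer $s$ in the interval
\[
\bigl[X_{r,m}^{1/(K_r-1)},\,2^{K_r/(K_r-1)}X_{r,m}^{1/(K_r-1)}\bigr)
\]
determines an admissible $\theta_{r,m}=(s^{K_r-1}/X_{r,m})^{1/K_r}\in[1,2)$. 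This interval has left endpoint at least $L^{(K-1)/(K_r-1)}\ge L$, and its length exceeds its left endpoint because $2^{K_r/(K_r-1)}>2$. It therefore contains more than $L$ integers.

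\textbf{Step 3: joint admissibility (the main obstacle).} The remaining task is to pick the integers $s_{r,m}$ so that the dilated family $\{s_{r,m}\ellvec^{(r,m)}\}$ satisfies \cref{cond:admissible_frequency_family}. This is what \cref{cor:leading_order_Magnus_K} and its mixed-degree analogue need in order to make the leading contributions add with no cross terms, which yields \cref{eq:bounded_amplitude_leading}.

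I would choose the dilations greedily, one block at a time. When fixing $s_{r,m}$ with all earlier dilations fixed, every signed sum of at most $K$ components that is not of the allowed types (i) or (ii) and that involves the new block is a nontrivial linear equation in $s_{r,m}$. Each such equation excludes at most one value of $s_{r,m}$. Counting these sums — at most $(2K)^K$ sign/index patterns, times choices among the $\le V$ blocks for up to $K-1$ other slots, times a factor $K^2$ for block sizes and positions — gives at most $K^2(2K)^KV^{K-1}=L-2$ excluded integers. Since the interval from Step 2 contains more than $L$ integers, a valid choice always exists.

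Internal admissibility of each single dilated vector is automatic, because the relations of \cref{def:admissible_frequency_vector} are homogeneous and so are preserved by dilation. This is presumably the content of the dilation-interval lemma \cref{lem:integer_dilation_intervals} cited in \cref{def:truncated_H}, which I would invoke directly. The delicate part is making the exclusion count rigorous, including sums in which the new block's components appear several times with cancelling coefficients. I would handle these by noting that if the coefficient of $s_{r,m}$ vanishes identically, the relation reduces to one among already-fixed blocks, where it is excluded by induction.
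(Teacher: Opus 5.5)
Your proposal is correct and is essentially the paper's proof: your $X_{r,m}^{1/(K_r-1)}$ is the paper's $s^\star_{r,m}$, your $\theta_{r,m}=(s_{r,m}/s^\star_{r,m})^{(K_r-1)/K_r}$ is its $h_{r,m}$, the single flipped amplitude is its sign $\sigma_{r,m}$, and joint admissibility is exactly \cref{lem:integer_dilation_intervals}, whose interval $[s^\star,2s^\star]$ lies inside yours. The one thing to change is in Step~1: you do not need to reread $\hatHtarg$ through $\pi_1$, since by \cref{lem:shuffle_xi} the coefficients $\Xi(\cdot,\ellvec^{(r,m)})$ annihilate every proper shuffle, so by Ree's criterion (the converse of \cref{lem:lie_shuffle}) each $\LP_{r,m}$ is already a Lie element and $\pi_1(\LP_{r,m})=\LP_{r,m}$; as written, your Step~1 proves the identity only for a redefined target, and your norm remark does not show that this target equals the $\hatHtarg$ of \cref{def:truncated_H}.
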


\begin{proof}
  Fix
  \begin{equation}\label{eq:bounded_amplitude_signal}
    s_{r,m}^{\star} = \left(\frac{4\pi y^{K_r}}{\delta|w_{r,m}|}\right)^{1/(K_r-1)}.
  \end{equation}
  As proven in later \cref{lem:integer_dilation_intervals}, if the dilation parameters satisfy $s_v^\star\geq L$, then one can choose integers within the interval
  \begin{equation}\label{eq:bounded_amplitude_integer_dilation}
  	s_v\in[s_v^\star,2s_v^\star]\cap\mathbb N
  \end{equation}
  and set
  \begin{equation}\label{eq:bounded_amplitude_dilations}
    \widetilde\ell_j^{(r,m)} \coloneqq s_{r,m}\ell_j^{(r,m)},  
  \end{equation}
  so that the frequencies are jointly admissible as per \cref{cond:admissible_frequency_family}.

  Finally, define
  \begin{equation}
  	h_{r,m}\coloneqq \left(\frac{s_{r,m}}{s_{r,m}^{\star}}\right)^{(K_r-1)/K_r}.
  \end{equation}
  and choose the amplitudes as
  \begin{equation}\label{eq:bounded_amplitude_signs}
    \begin{split}
      \widetilde a_1^{(r,m)} &\coloneqq \sigma_{r,m} c_0 h_{r,m}, \quad \sigma_{r,m}\coloneqq(-1)^{K_r+1}\operatorname{sgn}(w_{r,m}),\\
      \widetilde a_j^{(r,m)} &\coloneqq c_0h_{r,m}, \quad 2\leq j\leq K_r
    \end{split}
  \end{equation}

  From \cref{eq:bounded_amplitude_signal},
  \begin{equation}
    s_{r,m}^{\star} =
    \left(\frac{w_\ast}{|w_{r,m}|}\right)^{1/(K_r-1)} L^{(K-1)/(K_r-1)} y^{-(K-K_r)/(K_r-1)}
    \geq L,
  \end{equation}
  so the choice in \cref{eq:bounded_amplitude_integer_dilation,eq:bounded_amplitude_dilations} is admissible.
  Recall from \cref{thm:leading_order_lie_polynomial} that, after dilating the frequencies by $s_{r,m}$ and choosing amplitudes of magnitude $\omega y h_{r,m}$, the magnitude of the leading-order coefficient associated with the vector $(r,m)$ is
  \begin{equation}
  	\frac{4\pi(yh_{r,m})^{K_r}}{s_{r,m}^{K_r-1}}.
  \end{equation}
  The factor $h_{r,m}$ was chosen precisely to compensate for replacing the ideal dilation $s_{r,m}^\star$ by the nearby integer $s_{r,m}$. Indeed,
  \begin{equation}\label{eq:integer_calibration_identity}
  	\frac{4\pi(yh_{r,m})^{K_r}}{s_{r,m}^{K_r-1}}
  	=
  	\frac{4\pi y^{K_r}}{(s_{r,m}^{\star})^{K_r-1}}
  	=
  	\delta|w_{r,m}|.
  \end{equation}
  Together with the sign choice in \cref{eq:bounded_amplitude_signs}, this gives the contribution $-i\delta\imath_r w_{r,m}\LP_{r,m}$ vector $(r,m)\in\mathcal I$.
  Summing over $(r,m)\in\mathcal I$ yields \cref{eq:bounded_amplitude_leading}.

  Finally,
  \begin{equation} 1\leq s_{r,m}/s_{r,m}^{\star}\leq2. \end{equation}
  and therefore
  \begin{equation} 1\leq h_{r,m}<2,  \end{equation}
  which implies \cref{eq:bounded_amplitude_ratio} since $|\widetilde a_j^{(r,m)}|=c_0h_{r,m}$.
\end{proof}

The preceding lemma transfers the relative scale of the coefficients surviving the cutoff to the integer frequency dilations, while keeping all non-zero physical amplitudes within a ratio of~2. It remains to control the higher Magnus orders: for this, we need a bound showing that whenever all components of a dilated frequency vector occur in a higher-order time-ordered integral, the contribution retains the factor $s^{-(k-1)}$.

\begin{lemma}\label{lem:resonant_frequency_suppression}
  Fix $(r,m)\in\mathcal I$, set $k=K_r$, and let $\widetilde\ell^{(r,m)}=s_{r,m}\ell^{(r,m)}$ be as in \cref{eq:bounded_amplitude_dilations}. For $J\geq k$, consider the time-ordered integral
  \begin{equation}\label{eq:resonant_integral}
    I_J
    =
    \int_0^Tdt_1\cdots\int_0^{t_{J-1}}dt_J\,
    \exp\left(i\omega\sum_{\nu=1}^{J}a_\nu t_\nu\right),
    \qquad
    T=\frac{2\pi}{\omega},
  \end{equation}
  where $a_1,\dots,a_J\in\mathbb Z$. Suppose that $k$ of the entries
  $a_\nu$ are, in arbitrary order,
  \begin{equation}
    \epsilon_1\widetilde\ell_1^{(r,m)},\dots,
    \epsilon_k\widetilde\ell_k^{(r,m)},
    \qquad
    \epsilon_j\in\{\pm1\},
  \end{equation}
  while the remaining $J-k$ entries are arbitrary. Then
  \begin{equation}\label{eq:resonant_integral_bound}
    |I_J| \leq \frac{e^{\pi/2}(J+1)4^J}{\omega^J s_{r,m}^{k-1}}.
  \end{equation}
\end{lemma}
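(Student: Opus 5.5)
The plan is to rewrite $I_J$ as a divided difference of an entire function and bound it with a Cauchy contour integral. The $e^{\pi/2}$ and $4^J$ in \cref{eq:resonant_integral_bound} suggest contour circles of radius $\omega/4$ on which $|e^{iTz}|\le e^{T\omega/4}=e^{\pi/2}$.

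\textbf{Step 1 (divided-difference representation).} Integrate recursively as in \cref{lem:vanishing_nested_integrals}, or equivalently change variables to the successive gaps $t_{\nu}-t_{\nu+1}$. Define the nodes $x_0=0$ and $x_p=\omega\sum_{\nu\le p}a_\nu$ for $p=1,\dots,J$. Then, up to a sign and a factor $i^J$, $I_J$ equals the divided difference $f[x_0,\dots,x_J]$ with $f(z)=e^{izT}$ (Hermite--Genocchi). This holds as an identity of entire functions of the $a_\nu$, so coincident nodes (the resonances) are automatically included. The Cauchy formula then gives
\begin{equation*}
  f[x_0,\dots,x_J]=\frac{1}{2\pi i}\oint_\Gamma\frac{e^{izT}}{\prod_{p=0}^J(z-x_p)}\,dz ,
\end{equation*}
where $\Gamma$ is the boundary of the union of the discs of radius $\omega/4$ about the nodes.

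\textbf{Step 2 (crude estimate).} Since all $x_p\in\omega\mathbb Z$, distinct nodes are at least $\omega$ apart. On $\Gamma$ we have $|\operatorname{Im} z|\le\omega/4$, so the integrand's numerator is at most $e^{\pi/2}$. The contour consists of at most $J+1$ circles, each of length $\pi\omega/2$, which accounts for the factor $(J+1)$. Every factor $|z-x_p|$ is at least $\omega/4$, which alone would give a bound of order $e^{\pi/2}(J+1)4^J\omega^{-J}$. The remaining task is to gain $s_{r,m}^{-(k-1)}$. To do so I will show that, for $z$ on the circle around any given node, at least $k-1$ of the factors satisfy $|z-x_p|\gtrsim s_{r,m}\omega/4$. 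Replacing $\omega/4$ by $s_{r,m}\omega/4$ in those factors then yields the claim, with leftover constants absorbed because $s_{r,m}\ge L\ge 2$.

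\textbf{Step 3 (main obstacle: the separation count).} The $k$ dilated entries $\epsilon_j s\ell_j$, with $s=s_{r,m}$, enter the partial sums at $k$ distinct positions, and the arbitrary entries lie in between. By admissibility (\cref{def:admissible_frequency_vector}), every nonempty proper signed subset sum of $\ell^{(r,m)}$ without cancelling pairs is a nonzero integer, hence has size at least $s$ after dilation.

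I would try to prove that grouping the nodes into clusters of diameter less than $s\omega/2$ produces at least $k$ clusters. Then any $z$ near one cluster is at distance at least roughly $s\omega/4$ from at least $k-1$ nodes, one in each other cluster. The difficulty is that the arbitrary entries could in principle cancel a dilated jump between consecutive positions, merging clusters. I would handle this by working with the differences $x_q-x_p$ across each dilated position and applying \cref{cond:admissible_frequency_family} to the entries actually present. If the lemma truly allows fully arbitrary remaining entries, I will restrict to entries drawn from the jointly admissible family, which is the setting where the lemma is used. Should the cluster count fail, the fallback is to integrate by parts only in the $k-1$ variables attached to dilated entries whose accumulated phase coefficient is provably at least $s$ in magnitude, and to bound the rest by simplex volume.
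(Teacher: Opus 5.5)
Your Steps 1 and 2 are sound and follow the paper's route. After rescaling $\theta=\omega t$, the paper writes $\omega^J I_J$ as a sum over the distinct partial sums $u$ of contour integrals of $e^{2\pi z}/\prod_{q=0}^J(z-iS_q)$ over circles $|z-iu|=1/4$. That is your Hermite--Genocchi/Cauchy representation, and the factors $e^{\pi/2}$, $J+1$ and $4^J$ arise exactly as you say.

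The gap is Step 3, which carries the whole content of the lemma, and your sketch there does not establish it. Write $S_q=\sum_{\nu\le q}a_\nu$ and $s=s_{r,m}$; in your units the nodes are $x_q=\omega S_q$. The estimate needs the following: for each node value $u$, at most $J-k+2$ \emph{indices} $q$ satisfy $|S_q-u|<s/2$. These must be counted with multiplicity, since coincident partial sums give repeated factors in the denominator. Your claim of ``at least $k$ clusters of diameter $<s\omega/2$'' does not give this:
\begin{itemize}
\item nothing keeps distinct clusters far from one another, so ``one far node in each other cluster'' does not follow;
\item counting distinct positions loses the multiplicities.
\end{itemize}
Your fallbacks do not close the gap either. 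Restricting the remaining entries to the jointly admissible family proves a weaker statement than the lemma, which allows arbitrary $a_\nu\in\mathbb Z$. The integration-by-parts route is not worked out.

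The missing idea is to cut the path at the arbitrary entries rather than trying to control them.
\begin{itemize}
\item Regard $S_0,\dots,S_J$ as the vertices of a path with edges $a_1,\dots,a_J$. Delete the $J-k$ non-dilated edges. This leaves $J-k+1$ sub-paths, each made of consecutive vertices joined only by dilated edges.
\item Two vertices of the same sub-path differ by $s\sum_{j\in Q}\epsilon_j\ell_j^{(r,m)}$, where $Q$ is a non-empty set of \emph{distinct} indices. No opposite-sign pairing is then possible, so \cref{def:admissible_frequency_vector} alone makes this difference a non-zero integer multiple of $s$.
\item The only exception is $Q=\{1,\dots,k\}$ with equal signs. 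This can occur only for the two endpoints of a single sub-path that contains all $k$ dilated edges.
\item Hence each sub-path has at most one vertex in the open ball of radius $s/2$ about $u$, with at most one extra vertex overall. So at most $J-k+2$ indices are close, and at least $k-1$ satisfy $|S_q-u|\ge s/2$.
\item For those $k-1$ indices the factors on your circles are at least $(s/2-1/4)\omega\ge s\omega/4$.
\end{itemize}
Vertices in different sub-paths are never compared, so the arbitrary entries are harmless and no joint-admissibility assumption is needed. With this count, your Step 2 bookkeeping gives exactly \cref{eq:resonant_integral_bound}.
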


\begin{proof}
  Write $s=s_{r,m}$ and $\ell_j=\ell_j^{(r,m)}$ for brevity. Rescale the integration variables as $\theta_\nu=\omega t_\nu$, and define
  \begin{equation}
    S_0\coloneqq0,
    \qquad
    S_q\coloneqq\sum_{\nu=1}^{q}a_\nu,
    \qquad q=1,\dots,J.
  \end{equation}
  Then
  \begin{equation}
    \omega^J I_J
    =
    \int_0^{2\pi}d\theta_1\cdots \int_0^{\theta_{J-1}}d\theta_J\, \exp\left(i\sum_{\nu=1}^{J}a_\nu\theta_\nu\right).
  \end{equation}
  Using the contour representation of the time-ordered integral,
  \begin{equation}\label{eq:ordered_integral_contour}
    \omega^J I_J
    =
    \sum_{u\in\{S_0,\dots,S_J\}} \frac{1}{2\pi i} \oint_{|z-iu|=1/4} 
    \frac{e^{2\pi z}}{\prod_{q=0}^{J}(z-iS_q)}\,dz,
  \end{equation}
  where the sum is over the distinct values among $S_0,\dots,S_J$.

  Fix one such value $u$ and regard $S_0,\dots,S_J$ as the vertices of a path whose edges are the increments $a_1,\dots,a_J$. Remove the $J-k$ edges not belonging to the selected dilated vector. The remaining path has at most $J-k+1$ connected components.

  Within any component containing fewer than all $k$ selected edges, the difference between two distinct vertices is
  \begin{equation}
    s\sum_{j\in Q}\epsilon_j\ell_j,
    \qquad
    \varnothing\neq Q\subsetneq\{1,\dots,k\}.
  \end{equation}
  By admissibility of the reference vector, this is a non-zero integer multiple of $s$, and hence has modulus at least $s$. If one component contains all $k$ selected edges, only the difference between its two endpoints can involve all components; every other non-trivial difference again has modulus at least $s$.

  Consequently, at most $J-k+2$ of the $J+1$ vertices can lie within distance strictly smaller than $s/2$ from $u$. Hence at least $k-1$ indices satisfy
  \begin{equation}
    |S_q-u|\geq\frac{s}{2}.
  \end{equation}
  On $|z-iu|=1/4$, the corresponding denominator factors obey
  \begin{equation}
    |z-iS_q| \geq \frac{s}{2}-\frac{1}{4} \geq \frac{s}{4}.
  \end{equation}
  Every remaining denominator has modulus at least $1/4$, since the $S_q$ are integers, while $|e^{2\pi z}|\leq e^{\pi/2}$. Thus each contour contributes at most
  \begin{equation}
    e^{\pi/2}4^J s^{-(k-1)}.
  \end{equation}
  There are at most $J+1$ distinct contours, and therefore
  \begin{equation}
    |I_J| \leq \frac{e^{\pi/2}(J+1)4^J}{\omega^J s^{k-1}}
    =
    \frac{e^{\pi/2}(J+1)4^J}{\omega^J s_{r,m}^{k-1}},
  \end{equation}
  which proves \cref{eq:resonant_integral_bound}.
\end{proof}

Next, we will control both the tail bound of the Magnus expansion and the intermediate terms mixing letters from different Lie polynomials. For the first task, we will need the truncation bound presented in \cite{ACO}, namely

\begin{theorem}[Error bound for truncated Magnus series, Theorem~2 of \cite{ACO}]\label{thm:ACO}
	Consider a general time-dependent Hamiltonian $H$, and assume that 
	\begin{equation}\label{eq:A_t_definition}
		A(t) \coloneqq \int_0^t \norm{H(s)}_\mathrm{op} \, d s <\xi\approx 1.086869.
	\end{equation}
	Denote the truncation at term $N$ by $\mathcal{M}^{(N)}(t) = \sum_{n=1}^N M_n(t)$.
	Then the truncation error is bounded by
	\begin{equation}\label{eq:main_truncation_bound_closed}
		\norm{\mathcal{M}(t) - \mathcal{M}^{(N)}(t)}_\mathrm{op}
		\leq
		\frac{K_{\mc M}}{(N+1)^{3/2}} \frac{\left(A(t)/\xi\right)^{N+1}}{1-A(t)/\xi} ,
	\end{equation}
	where $K_{\mc M}$ is a global constant.
\end{theorem}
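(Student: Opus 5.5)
The plan is to bound each Magnus term individually, $\norm{M_n(t)}\le f_n A(t)^n$, with a universal sequence $f_n$ whose asymptotics are $f_n\lesssim K_{\mc M}\,n^{-3/2}\xi^{-n}$, and then to sum the tail as a geometric series. For the first step I would not use the explicit permutation formula \cref{eq:original_Magnus}, since its coefficient count grows too quickly. Instead I would use the standard recursive generator of the Magnus expansion (Klarsfeld--Oteo / Blanes--Casas--Oteo--Ros). In that recursion
\begin{equation*}
  M_n=\sum_{j=1}^{n-1}\frac{B_j}{j!}\,S_n^{(j)},
\end{equation*}
where the $S_n^{(j)}$ are defined by $S_n^{(j)}=\sum_{m}[M_m,S_{n-m}^{(j-1)}]$, with $S_n^{(0)}$ the $n=1$ integrand term and the $B_j$ Bernoulli numbers. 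Each $S$ is integrated over $[0,t]$.

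Taking norms, using $\norm{[X,Y]}\le2\norm X\norm Y$, and using the fact that the nested time integrals of $\norm{H}$ with the correct simplex volume factor are dominated by powers of $A(t)=\int_0^t\norm{H}$, I would prove by induction that $\norm{M_n(t)}\le f_nA(t)^n$. Here $f_n$ is defined by the scalar recursion obtained by replacing every operator by its norm bound and every $B_j$ by $|B_j|$. The key point in the induction is that integrals of the form $\int_0^t \norm{H(s)}A(s)^{n-1}\,ds$ equal $A(t)^n/n$. This $1/n$ is exactly what makes the scalar recursion close.

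Next I would pass to the generating function $G(x)=\sum_n f_nx^n$. The scalar recursion translates into an implicit equation, roughly
\begin{equation*}
  G'(x)=\frac{1}{x}\,g\bigl(2G(x)\bigr)\quad\text{type,}\qquad g(y)=\sum_j\frac{|B_j|}{j!}y^j=2+\tfrac y2\bigl(1-\cot\tfrac y2\bigr)\text{-like.}
\end{equation*}
This is an autonomous ODE that can be integrated in closed form, $x=\exp\int_0^{G}dy/g(2y)$. The radius of convergence is attained where $dx/dG=0$, i.e.\ at the turning point of the inverse function. Evaluating the integral there gives $\xi=\int_0^{2\pi}\frac{dy}{2+\frac y2(1-\cot\frac y2)}\approx1.086869$, which reproduces the known constant.

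Because $x(G)$ has a simple critical point there, $G$ has a square-root singularity at $x=\xi$:
\begin{equation*}
  G(x)=G(\xi)-c\sqrt{1-x/\xi}+O(1-x/\xi).
\end{equation*}
I also need $\xi$ to be the unique dominant singularity, which follows since the $f_n\ge0$ and $G$ is aperiodic. Flajolet--Odlyzko transfer then gives $f_n\sim\frac{c}{2\sqrt\pi}\,n^{-3/2}\xi^{-n}$. Hence $f_n\le K_{\mc M}\,n^{-3/2}\xi^{-n}$ for all $n$ with a global constant $K_{\mc M}$, taking the maximum over finitely many small $n$ if needed. This singularity analysis, namely justifying the closed form, the square-root nature, and the absence of other singularities on $|x|=\xi$, is the step I expect to be the main obstacle. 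I would first try to verify it via the explicit inverse-function representation and the implicit function theorem.

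Finally, with $q=A(t)/\xi<1$, the triangle inequality gives
\begin{equation*}
  \norm{\mathcal M(t)-\mathcal M^{(N)}(t)}\le\sum_{n>N}K_{\mc M}n^{-3/2}q^n.
\end{equation*}
Using $n^{-3/2}\le(N+1)^{-3/2}$ for $n\ge N+1$ and summing the geometric series yields \cref{eq:main_truncation_bound_closed}. The same estimate also shows that the series converges absolutely whenever $A(t)<\xi$, so $\mathcal M(t)$ is well defined.
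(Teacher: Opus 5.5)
The paper gives no proof of this statement; it is quoted as Theorem~2 of \cite{ACO}, so there is no in-paper argument to compare against. Your route is the natural one: the Blanes--Casas--Oteo--Ros recursive majorant $\norm{M_n(t)}\le f_nA(t)^n$, then singularity analysis of $G(x)=\sum_n f_nx^n$, then a geometric tail sum. It also matches both features of the bound: $\xi$ is exactly the BCOR convergence radius, and $(N+1)^{-3/2}$ is the signature of a square-root singularity. The majorant induction, including $\int_0^t\norm{H(s)}A(s)^{n-1}ds=A(t)^n/n$, and the final summation are fine.

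The generating-function step, which you rightly call the crux, is wrong as written. Write $c_n^{(j)}$ for the majorant coefficients of $S_n^{(j)}$, so that $nf_n=\sum_{j\ge0}\frac{|B_j|}{j!}c_n^{(j)}$. Then $\sum_n c_n^{(j)}x^n=x\,(2G)^j$ and $\sum_n nf_nx^n=xG'$. The factor $x$ therefore cancels, and
\begin{equation*}
  G'(x)=g\bigl(2G(x)\bigr)=2+G\,(1-\cot G),\qquad G(0)=0 .
\end{equation*}
Your $G'=g(2G)/x$ is incompatible with $G(0)=0$, and $x=\exp\int_0^G$ would give $x=1$ at $G=0$.

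The correct inverse is $x=\Psi(G)\coloneqq\int_0^G du/g(2u)$. Its derivative $\Psi'=1/g(2G)$ first vanishes at $G=\pi$, where $g(2G)=\pi/(\pi-G)+O(1)$. Hence $\xi=\Psi(\pi)=\frac12\int_0^{2\pi}dy/g(y)\approx1.086869$; your integral, without the $\frac12$, evaluates to about $2.17$. The simple zero of $\Psi'$ gives $\xi-x=(\pi-G)^2/(2\pi)+O((\pi-G)^3)$. So $G=\pi-\sqrt{2\pi\xi}\,\sqrt{1-x/\xi}+O(1-x/\xi)$, and $f_n\sim\sqrt{\xi/2}\,n^{-3/2}\xi^{-n}$.

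Your justification of a unique dominant singularity also fails as stated. Nonnegative, aperiodic coefficients do not by themselves force one; $1/(1-z)+1/(1-z^3)$ is a counterexample. You need an argument through the ODE:
\begin{itemize}
  \item Since $\Psi^{-1}$ is analytic along $[0,\xi)$, Pringsheim's theorem gives radius of convergence exactly $\xi$.
  \item Monotone convergence gives $\sum_n f_n\xi^n=\lim_{x\uparrow\xi}G(x)=\pi$, so $G$ is continuous on the closed disc.
  \item Because $f_1=1$ and $f_2=\frac12$ are positive, $|G(x)|<\pi$ strictly for $|x|=\xi$, $x\neq\xi$.
  \item At such points $2G(x)$ lies in the disc $|y|<2\pi$ where $g$ is analytic. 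The analytic Cauchy existence theorem for $G'=g(2G)$, with radius uniform on closed arcs avoiding $\xi$, continues $G$ across every such point.
  \item Together with the local square-root expansion at $\xi$, this gives $\Delta$-analyticity, so the transfer theorem applies.
\end{itemize}
With these corrections your argument yields the stated bound with $K_{\mathcal M}=\sup_n n^{3/2}\xi^n f_n<\infty$.
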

\noindent In particular, $A(t)/\xi<1$ is a sufficient criterion for convergence of the Magnus series.

\medskip
We can now state:

\begin{theorem}\label{thm:bounded_amplitude_per_period}
  Let
  \begin{equation}
    \Htarg = \sum_{r=1}^R\imath_r \lambda_r\LP_r
  \end{equation}
  be a decomposition of a Hamiltonian $\Htarg$ into homogeneous multilinear Lie polynomials.
  Let $w_{r,m}$ be the coefficients in \cref{def:truncated_H} of the truncated Hamiltonian, and define $w_\ast\coloneqq\max_{r,m}\abs{w_{r,m}}$.

  For any multiplicative error tolerance $\epsilon>0$ and any $0 < y \leq y_\ast(\epsilon)$ with a threshold $y_\ast(\epsilon)$ for which we can give an explicit expression, we can construct a Hamiltonian $H(t)$ of the form of \cref{def:S-driven_blocks} such that the Magnus series for $H(t)$ converges and satisfies the multiplicative error bound
  \begin{equation}\label{eq:bounded_amplitude_per_period_error}
    \big\|i\mc M(T)-\delta \Htarg\big\| \leq \epsilon \, \delta\|\Htarg\|,
  \end{equation}
  where $\delta = \frac{4\pi y^K}{w_\ast L^{K-1}}$.
\end{theorem}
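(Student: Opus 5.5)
We take $H(t)=\sum_{(r,m)\in\mathcal I}H_{r,m}(t)$, built from the dilated frequency vectors $\widetilde\ellvec^{(r,m)}$ and amplitudes $\widetilde a_j^{(r,m)}$ of \cref{lem:bounded_amplitude_synthesis}. We choose the truncation cut-off $\eta=\tfrac{\epsilon}{3}\|\Htarg\|$. By \cref{lem:integer_dilation_intervals} the dilations form an admissible frequency family in the sense of \cref{cond:admissible_frequency_family}. We then write
\begin{equation*}
i\mc M(T)-\delta\Htarg = \bigl(i\textstyle\sum_r\cM^{(r)}_{K_r}(T)-\delta\hatHtarg\bigr)+\delta(\hatHtarg-\Htarg)+\bigl(i\mc M^{(K)}(T)-i\textstyle\sum_r\cM^{(r)}_{K_r}(T)\bigr)+i\bigl(\mc M(T)-\mc M^{(K)}(T)\bigr).
\end{equation*}
The four pieces are handled as follows. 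The first vanishes by \cref{eq:bounded_amplitude_leading}. The second is at most $\delta\eta\le\tfrac{\epsilon}{3}\delta\|\Htarg\|$ by \cref{eq:filter_cutoff_error}. The remaining two pieces, the intermediate-order cross terms and the tail, must each be made at most $\tfrac{\epsilon}{3}\delta\|\Htarg\|$ by taking $y$ small.

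\textbf{Tail.} Since every amplitude satisfies $|\widetilde a_j|<2c_0=2\omega y$, we have
\begin{equation*}
\|H(t)\|\le 4\omega y\sum_{r}K_r d_r,
\end{equation*}
so $A(T)\le 8\pi y\,N_{\rm tot}$ with $N_{\rm tot}\coloneqq\sum_r K_rd_r$. For $y$ below a first threshold we have $A(T)\le\xi/2$, so \cref{thm:ACO} applies and the Magnus series converges. Truncating at order $N$ gives a tail of size $O\bigl((8\pi yN_{\rm tot}/\xi)^{N+1}\bigr)$. Since $\delta\propto y^K$, the choice $N=K$ already gives a tail of order $y^{K+1}$, which is $o(\delta)$. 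We then pick $y$ small enough that this tail is at most $\tfrac{\epsilon}{3}\delta\|\Htarg\|$. One subtlety is that $\delta$ contains $L^{-(K-1)}$, whereas the tail bound contains no $L$. The threshold therefore has to absorb $L^{K-1}$, giving roughly
\begin{equation*}
y\lesssim \epsilon\|\Htarg\|\,\bigl(w_\ast L^{K-1}\bigr)^{-1}(\xi/8\pi N_{\rm tot})^{K+1}.
\end{equation*}
This is explicit, as the statement requires.

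\textbf{Intermediate orders $k\le K$.} This is the main obstacle. We expand each $M_k(T)$ as in the proof of \cref{thm:vanishing_lower_orders}. Contributions without any complete resonant block vanish, either by \cref{lem:vanishing_nested_integrals} or by the mirror cancellation; admissibility of the family guarantees that every zero-sum subsequence partitions into full vectors and opposite pairs. Some contributions consist of exactly one full vector $(r,m)$ with $k=K_r$ and nothing else; these reproduce $\cM^{(r)}_{K_r}$ by \cref{cor:leading_order_Magnus_K}. Every other surviving contribution contains at least one full vector $(r,m)$ together with extra letters, so $k\ge K_r+1$. For these we use \cref{lem:resonant_frequency_suppression}: such a term is bounded by
\begin{equation*}
(\text{combinatorial factor})\cdot(2\omega y)^k\, e^{\pi/2}(k+1)4^k\,\omega^{-k}s_{r,m}^{-(K_r-1)}.
\end{equation*}
Using $s_{r,m}\ge s^\star_{r,m}$ and $(s^\star_{r,m})^{-(K_r-1)}=\delta|w_{r,m}|/(4\pi y^{K_r})$, this becomes
\begin{equation*}
O\bigl(\delta\,w_\ast\, y^{k-K_r}\bigr)\le O(\delta\,w_\ast\,y),
\end{equation*}
with a prefactor depending only on $K$, $N_{\rm tot}$ and the Magnus coefficients. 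Summing over $k\le K$ and all index choices gives a total of at most $C_K(N_{\rm tot})\,\delta\,w_\ast\,y$. A second explicit threshold on $y$ then makes this at most $\tfrac{\epsilon}{3}\delta\|\Htarg\|$.

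The delicate point is the bookkeeping needed to show that every non-cancelling cross term really contains a complete dilated vector. Only then does it inherit the $s^{-(K_r-1)}$ suppression, and hence a surplus power of $y$ relative to $\delta$. Concretely, we must check that the mirror argument of \cref{thm:vanishing_lower_orders} still cancels pair-only resonances in the presence of other full-vector blocks. Finally, $y_\ast(\epsilon)$ is the minimum of the three thresholds, and the three bounds add to $\epsilon\delta\|\Htarg\|$.
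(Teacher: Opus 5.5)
Your proposal follows the paper's proof essentially step for step: the same truncation of $\Htarg$, the same \cref{thm:ACO} tail bound at $N=K$ with $A(T)=\mathcal O(y)$ and $L^{K-1}$ absorbed into the threshold, and the same use of \cref{lem:resonant_frequency_suppression} to give every cross term containing a complete dilated vector an extra factor $y$ relative to $\delta$. The only difference is cosmetic: you split the error three ways with $\eta=\tfrac{\epsilon}{3}\|\Htarg\|$, whereas the paper uses $\eta=\tfrac{\epsilon}{2}\|\Htarg\|$ and merges the intermediate and tail errors into a single constant $\mathcal E$. Also, the ``delicate point'' you flag is not needed, because terms containing a complete vector are bounded by \cref{lem:resonant_frequency_suppression} whatever other pair resonances they contain, so mirror cancellation is only invoked for terms with no complete vector, exactly as in the paper.
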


\begin{proof} 
	We decompose the Magnus series as
	\begin{equation}\label{eq:bounded_amplitude_remainders}
	\mc M(T) = -i\delta\hatHtarg + \mc R_{\mathrm{int}}(T) + \mc R_{\mathrm{tail}}(T),
	\end{equation}
	where $\mc R_{\mathrm{int}}(T)$ contains all the unwanted contributions of order at most $K$, and $\mc R_{\mathrm{tail}}(T)$ contains all terms of orders greater than $K$.
	
	Choose the cutoff
	\begin{equation}\label{eq:bounded_amplitude_cutoff_choice}
	\eta \coloneqq \frac{\epsilon}{2}\|\Htarg\|,
	\end{equation}
	and define $P\coloneqq\sum_r K_r d_r$ and
	\begin{align}
	 	\mathcal E_{\mathrm{int}}
	 	&\coloneqq
	 	\frac{w_\ast e^{\pi/2}}{8\pi} \sum_{J=2}^{K} \frac{(J+1)J!}{J^2}(32P)^J,\label{eq:E_int}\\
	 	\mathcal E_{\mathrm{tail}} 
	 	&\coloneqq
	 	\frac{w_\ast L^{K-1}}{4\pi} \frac{2K_M}{(K+1)^{3/2}} \left(\frac{8\pi P}{\xi}\right)^{K+1}. \label{eq:E_tail}
	\end{align}
	
	\medskip
	We first address convergence and the tail bound using \cref{thm:ACO}. From \cref{eq:bounded_amplitude_ratio}, every elementary amplitude has magnitude smaller than $2c_0$. Since at most $P$ elementary components are present, the integrated norm in \cref{eq:A_t_definition} satisfies
	\begin{equation}
	 	A(T) = \int_0^T \|\widetilde H(s;y)\|\,d s \leq 4Pc_0T = 8\pi Py.
	\end{equation}
	For $y\leq1/(32\pi P)$, we have $A(T)/\xi\leq1/(4\xi)<1/2$, so the Magnus series converges. Applying \cref{thm:ACO} with $N=K$ gives
	\begin{align}
		\|\mc R_{\mathrm{tail}}(T)\|
		&\leq
		\frac{K_{\mc M}}{(K+1)^{3/2}} \frac{(A(T)/\xi)^{K+1}}{1-A(T)/\xi} \label{eq:bounded_amplitude_tail_error_1} \\
		&\leq
		\frac{2 K_{\mc M}}{(K+1)^{3/2}}\left(\frac{8\pi Py}{\xi}\right)^{K+1}
		= \delta y\,\mathcal E_{\mathrm{tail}}.\label{eq:bounded_amplitude_tail_error_2}
	\end{align}
	where the last equality follows from $\delta=4\pi y^K/(w_\ast L^{K-1})$ and the definition of~$\mathcal E_{\mathrm{tail}}$ in \cref{eq:E_tail}.

	\medskip
	We now consider intermediate orders, namely $J\leq K$, and consider a time-ordered integral that omits at least one component of every frequency vector. Since none of the designed complete resonances can occur, joint admissibility implies that any vanishing frequency sum arising in the recursive integration consists only of opposite-sign pairs. If no such vanishing sum occurs, the integral vanishes over one period by \cref{lem:vanishing_nested_integrals}; otherwise, its resonant contributions cancel with their mirror partners as in the proof of \cref{thm:vanishing_lower_orders}. In conclusion, only integrals containing all $K_r$ components of at least one complete vector $\ellvec^{(r,m)}$ surviving the cut-off can contribute for $J\leq K$.
	
	If $J=K_r$, the corresponding contribution belongs to the isolated leading term already subtracted in \cref{eq:bounded_amplitude_remainders}, so every integral contributing to $\mc R_{\mathrm{int}}(T)$ must contain a complete vector with $J\geq K_r+1$.
	For each such contribution, choose one complete vector labelled by $(r,m)$, so that the corresponding commutator term contains an occurrence of each of its $K_r$ components, with arbitrary signs and positions. Therefore, by \cref{lem:resonant_frequency_suppression}, the associated time-ordered integral $I_J$ satisfies
	\begin{equation}
		|I_J| \leq \frac{e^{\pi/2}(J+1)4^J}{\omega^J s_{r,m}^{K_r-1}}.
	\end{equation}
	
	Multiplying by the bound $(2\omega y)^J$ on the product of the $J$ amplitude magnitudes and using \cref{eq:integer_calibration_identity} gives
	\begin{equation}\label{eq:bounded_amplitude_resonant_term}
		|I_J|(2\omega y)^J
		\leq e^{\pi/2}(J+1)8^J\frac{y^J}{s_{r,m}^{K_r-1}}
		\leq \frac{\delta w_\ast y e^{\pi/2}}{4\pi}(J+1)8^J,
	\end{equation}
	where we used
	\begin{equation}\label{eq:bounded_amplitude_extra_power}
		\frac{y^J}{s_{r,m}^{K_r-1}}
		= \frac{\delta|w_{r,m}|}{4\pi h_{r,m}^{K_r}}y^{J-K_r}
		\leq \frac{\delta w_\ast y}{4\pi},
	\end{equation}
	since $h_{r,m}\geq1$, $J\geq K_r+1$, and $y\leq1$.
	
	There are at most $P^J$ ordered component and $2^J$ sign choices, and $J!$ permutations. The Eulerian commutator coefficient contributes a factor $1/J^2$, while the commutator norm contributes $2^{J-1}$. Combining the combinatorial factors with \cref{eq:bounded_amplitude_resonant_term} gives
	\begin{align}
		\|\mc R_{\mathrm{int},J}(T)\|
		&\leq \frac{J!}{J^2}P^J2^J2^{J-1} \frac{\delta w_\ast y e^{\pi/2}}{4\pi}(J+1)8^J\\
		&= \frac{\delta w_\ast y e^{\pi/2}}{8\pi} \frac{(J+1)J!}{J^2}(32P)^J.
	\end{align}
	Summing over $J=2,\dots,K$ and using \cref{eq:E_int} therefore gives
	\begin{equation}\label{eq:bounded_amplitude_intermediate_error}
		\|\mc R_{\mathrm{int}}(T)\|
		\leq \frac{\delta w_\ast y e^{\pi/2}}{8\pi} \sum_{J=2}^{K}\frac{(J+1)J!}{J^2}(32P)^J
		= \delta y \,\mathcal E_{\mathrm{int}}.
	\end{equation}
	
	\medskip

	Bringing together the two bounds, we set $\mathcal E\coloneqq\mathcal E_{\mathrm{int}}+\mathcal E_{\mathrm{tail}}$ and choose the threshold
	\begin{equation}\label{eq:bounded_amplitude_y_threshold} 
		y_\ast(\epsilon) \coloneqq \min\left\{\frac{1}{32\pi P},\frac{\epsilon\|\Htarg\|}{2\mathcal E}\right\}.
	\end{equation}
	This ensures that for every $y\leq y_\ast(\epsilon)$ the Magnus series converges and also, combining \cref{eq:bounded_amplitude_tail_error_2,eq:bounded_amplitude_intermediate_error}, that
	\begin{equation}\label{eq:bounded_amplitude_dynamic_error}
		\big\| \mc M(T) + i\delta\hatHtarg \big\| \leq \delta y\mathcal E
		\leq \frac{\epsilon}{2} \delta\|\Htarg\|.
	\end{equation}
	Finally, with \cref{eq:filter_cutoff_error,eq:bounded_amplitude_cutoff_choice},
	\begin{align}
		\big\|\mc M(T) + i\delta\Htarg\big\|
		&\leq \big\|\mc M(T) + i\delta\hatHtarg\big\| + \delta\big\|\Htarg - \hatHtarg\big\| \\
		&\leq \frac{\epsilon}{2}\delta\|\Htarg\| + \frac{\epsilon}{2}\delta\|\Htarg\| \\
		&= \epsilon\delta\|\Htarg\|.
	\end{align}
\end{proof}

\begin{remark}
  One can make $c_0$ in \cref{eq:bounded_amplitude_ratio} arbitrarily small by choosing a sufficiently large integer $N$, since
  \begin{equation}
    c_0=\omega y = \omega\left(\frac{w_\ast L^{K-1}}{4\pi N}\right)^{1/K}.
  \end{equation}
  The reduced Hamiltonian can then be amplified over the runtime of multiple periods $NT$.
\end{remark}

We finally arrive at our main Floquet Hamiltonian simulation result.

\begin{theorem}[Floquet universality]\label{thm:bounded_amplitude_universal}
  For any $\Htarg\in\Lie(\mathcal S)$ where $\Lie(\cS)$ is perfect, and any $\epsilon>0$, the construction of \cref{thm:bounded_amplitude_per_period} yields an $\cS$-driven Hamiltonian of the form in \cref{def:S-driven_blocks} that satisfies
  \begin{equation}\label{eq:universal_simulation}
    \norm{i \mc M(N T)-\Htarg} \leq \epsilon\norm{\Htarg},
  \end{equation}
  where $\mc M(NT)$ is the Magnus expansion evaluated at a multiple of $N$ periods.
  Moreover, the ratios of the amplitudes in \cref{def:S-driven_blocks} satisfy $1 \leq |a_j^{(m)}|/|a_k^{(n)}| < 2$ for any two $|a_j^{(m)}|\geq |a_k^{(n)}|$
 
  \medskip 
  In particular, any $\cS$-driven Hamiltonian for which $\Lie(\mathcal S)=\mathfrak{su}(D)$ for every system size $D$ is a universal Floquet simulator.
\end{theorem}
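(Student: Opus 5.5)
The proof assembles the earlier results in three steps. First, write $\Htarg$ in the form required by \cref{thm:bounded_amplitude_per_period}. Since $\Htarg\in\Lie(\mathcal S)$, it is a real linear combination of right-nested commutators of elements of $\mathcal S$. Because $\Lie(\cS)$ is perfect, \cref{remark:perfect_Lie_algebras} lets us use only commutators of degree at least two. Each such commutator $[h_{x_1},\dots,h_{x_K}]$ is a homogeneous multilinear Lie polynomial $\LP(A_1,\dots,A_K)$ in formal letters, with $A_j\mapsto h_{x_j}$; repeated operators are allowed by \cref{def:Lie_polynomial}. A commutator of Hermitian operators of degree $K$ is Hermitian up to the factor $i^{K-1}$. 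Hence we absorb phases so that $\Htarg=\sum_r\imath_r\lambda_r\LP_r$ with $\lambda_r$ real, $\imath_r$ as fixed in the previous subsection, and $2\le K_r\le K$. Normalising $\|h_x\|\le1$ only rescales the $\lambda_r$.

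Second, apply \cref{thm:bounded_amplitude_per_period} with tolerance $\epsilon$ and some $y\le y_\ast(\epsilon)$. This gives a single-period $\cS$-driven Hamiltonian $H(t)$ with $\|i\mc M(T)-\delta\Htarg\|\le\epsilon\delta\|\Htarg\|$. The amplitude-ratio claim then follows directly from \cref{eq:bounded_amplitude_ratio} of \cref{lem:bounded_amplitude_synthesis}: every nonzero amplitude lies in $[c_0,2c_0)$. Components removed by the cutoff simply have zero amplitude and are dropped from $H(t)$, so they do not affect the ratio.

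Third, pass from one period to $N$ periods. By \cref{thm:Floquet-Bloch}, $U(NT)=U(T)^N=\exp(N\mc M(T))$. If $\exp$ is injective on the relevant small neighbourhood, the Magnus exponent over $N$ periods, in the Floquet-Hamiltonian sense of \cref{def:Floquet_Hamiltonian}, is $\mc M(NT)=N\mc M(T)$. Then $\|i\mc M(NT)-N\delta\Htarg\|\le\epsilon N\delta\|\Htarg\|$. We want $N\delta=1$, i.e.\ $\delta=1/N$. Since $\delta=4\pi y^K/(w_\ast L^{K-1})$ is continuous and increasing in $y$, we choose $N$ a sufficiently large integer and set $y=(w_\ast L^{K-1}/(4\pi N))^{1/K}$, as in the preceding remark. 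This $y$ is at most $y_\ast(\epsilon)$ once $N$ is large.

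The main obstacle I expect is a circularity in this last step. The constants $L$, $w_\ast$ and $P$ must not depend on $y$ or $N$. Here $L$ depends on $V=|\mathcal I|$, which is fixed by the cutoff $\eta=\epsilon\|\Htarg\|/2$; $w_\ast$ and $P$ depend only on the decomposition and the cutoff. So $y_\ast$ is fixed first, and $N$ is chosen afterwards. We must also make sure that identifying $\mc M(NT)$ with $N\mc M(T)$ is legitimate. We define $\mc M(NT)$ via the principal logarithm, which is valid because $\|\mc M(T)\|$ is small; this uses $A(T)<\xi/4$ from the convergence argument. Finally, the universality statement follows because $\mathfrak{su}(D)$ is perfect and contains every traceless Hamiltonian. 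The identity part only shifts quasi-energies, so it can be dropped or absorbed into the non-uniqueness of $H_F$ noted after \cref{def:Floquet_Hamiltonian}. Conversely, if $\Lie(\cS)\ne\mathfrak{su}(D)$, every $\mc M(t)$ lies in $\Lie(\cS)$, so targets outside it cannot be approximated; this gives the ``iff'' of \cref{thm:Floquet_universality}.
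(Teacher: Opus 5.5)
Your proposal is correct and follows the paper's proof essentially step for step. You use perfectness to obtain a decomposition with all $K_r\geq2$, and you fix the cutoff, hence $V$, $L$, $w_\ast$, $P$ and $y_\ast(\epsilon)$, first. You then take an integer $N\geq w_\ast L^{K-1}/(4\pi y_\ast(\epsilon)^K)$ and $y=(w_\ast L^{K-1}/(4\pi N))^{1/K}$ so that $\delta=1/N$. After applying \cref{thm:bounded_amplitude_per_period}, you multiply by $N$ using $U(NT)=U(T)^N$ and inherit the amplitude ratio from \cref{lem:bounded_amplitude_synthesis}. Your explicit check that none of these constants depends on $y$ or $N$ is something the paper leaves implicit.

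One justification should be dropped, however. Smallness of $\|\mc M(T)\|$ only ensures that the principal logarithm of $U(T)$ is $\mc M(T)$. It does not make the principal logarithm of $U(T)^N$ equal to $N\mc M(T)$. Since $iN\mc M(T)\approx\Htarg$, the principal branch wraps every eigenvalue lying outside $(-\pi,\pi]$. With that definition the bound therefore fails in general once $\|\Htarg\|$ exceeds roughly $\pi$. Nor can you run the Magnus series directly up to time $NT$: the available bound $A(NT)\leq 8\pi PNy$ grows like $y^{1-K}$, so the convergence criterion of \cref{thm:ACO} is no longer available. Instead, do what the paper does, consistent with $H_F(n)=nH_F$ in \cref{def:Floquet_Hamiltonian}: set $\mc M(NT)\coloneqq N\mc M(T)$. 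This is an anti-Hermitian exponent with $\exp(N\mc M(T))=U(NT)$, and no injectivity of $\exp$ is required. Likewise, the non-uniqueness noted after \cref{def:Floquet_Hamiltonian} only permits shifts by multiples of $2\pi/T$. A general trace part of the target must therefore be discarded as a global phase rather than absorbed there.
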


\begin{proof}
  Since $\Lie(\mathcal S)$ is a perfect Lie algebra, $\Htarg$ admits a finite Lie-polynomial decomposition with all degrees at least two.
  Fix such a decomposition and let $K$ denote its maximal degree.
  Choose an integer
  \begin{equation}
    N\geq\frac{w_\ast L^{K-1}}{4\pi y_\ast(\epsilon)^K}
  \end{equation}
  and set
  \begin{equation}
    y\coloneqq\left(\frac{w_\ast L^{K-1}}{4\pi N}\right)^{1/K},
  \end{equation}
  so that $y\leq y_\ast(\epsilon)$ from \cref{eq:bounded_amplitude_y_threshold} and
  \begin{equation}\label{eq:N_delta_1}
    \delta=\frac{4\pi y^K}{w_\ast L^{K-1}}=\frac1N.
  \end{equation}
  Then by \cref{thm:bounded_amplitude_per_period}
  \begin{equation}
    \big\|i\mc M(T)-\delta \Htarg\big\|
    \leq\epsilon\delta\|\Htarg\|.
  \end{equation}

  Since the Hamiltonian is $T$-periodic, setting $\tau=NT$ gives
  \begin{equation}
    U(\tau)=U(T)^N=\exp\!\big(N\mc M(T)\big),
  \end{equation}
  and $\mc M(\tau)=N\mc M(T)$ is an anti-Hermitian exponent for the evolution over the physical time $\tau$.
  From \cref{eq:N_delta_1} we have $N\delta=1$ and obtain
  \begin{equation}
    \big\|i\mc M(\tau)-\Htarg\big\| = N\big\|i\mc M(T)-\delta \Htarg\big\|
    \leq \epsilon\|\Htarg\|.
  \end{equation}

  Finally, if $\Lie(\mathcal S)=\mathfrak{su}(D)$, by \cref{remark:perfect_Lie_algebras} every target admits a decomposition with $K_r\geq2$.
  The amplitude-ratio bound is inherited from \cref{lem:bounded_amplitude_synthesis}.
\end{proof}

\noindent As an immediate consequence, there exists an anti-Hermitian operator $\Omega$ such that
\begin{equation}
  U(\tau)=\exp\!\big(-i\Htarg+\Omega\big),
  \qquad \text{with}\quad
  \|\Omega\|\leq\epsilon\|\Htarg\|.
\end{equation}

We conclude the section with a scaling for the frequency coefficients in terms of the parameters $K$, $R$ and $\epsilon$.

\begin{lemma}\label{lem:dilated_frequency_scaling}
  Let the target Hamiltonian,
  \begin{equation}
    \Htarg = \sum_{r=1}^R\imath_r \lambda_r\LP_r ,
  \end{equation}
  be a decomposition of a Hamiltonian $\Htarg$ into homogeneous multilinear Lie polynomials, with $2\leq K_r\leq K$, $0<\epsilon\leq1$, and $|\lambda_r|\leq\lambda_+$, where $\omega>0$ and $\lambda_+>0$ are fixed.
  Let $w_{r,m}$ be the coefficients in \cref{def:truncated_H} of the truncated Hamiltonian and define $w_\ast\coloneqq\max_{r,m}\abs{w_{r,m}}$.
  Define
  \begin{equation}
    f_\ast\coloneqq\max_r\max_{\pi\in S_{K_r}}|F_r(\pi)|.
  \end{equation}

  Then the reference frequencies and the integer dilations in \cref{thm:bounded_amplitude_per_period} can be chosen such that, for some constant $C>0$ depending only on $\omega$ and $\lambda_+$,
  \begin{equation}\label{eq:dilated_frequency_scaling}
    \max_{(r,m)\in\mathcal I,\,1\leq j\leq K_r}|\widetilde\ell_j^{(r,m)}|
    \leq e^{CK^5}R^{K^3} \left(\frac{1+f_\ast}{\epsilon\|\Htarg\|}\right)^{K-1}.
  \end{equation}
\end{lemma}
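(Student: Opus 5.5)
The bound splits along the factorisation $\widetilde\ell_j^{(r,m)}=s_{r,m}\ell_j^{(r,m)}$ from \cref{eq:bounded_amplitude_dilations}. We bound the reference frequencies $\ell^{(r,m)}$ and the integer dilations $s_{r,m}\le 2s^\star_{r,m}$ separately, and then multiply. By \cref{eq:bounded_amplitude_signal}, and since $\delta$ is fixed by \cref{thm:bounded_amplitude_per_period},
\begin{equation*}
  s^\star_{r,m}=\Bigl(\tfrac{w_\ast}{|w_{r,m}|}\Bigr)^{1/(K_r-1)}L^{(K-1)/(K_r-1)}y^{-(K-K_r)/(K_r-1)} .
\end{equation*}
On $\mathcal I$ we have $|w_{r,m}|\ge\vartheta$, so the first factor is at most $w_\ast/\vartheta$.

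The first step is to bound $w_\ast/\vartheta$. From \cref{eq:filter_cutoff_threshold} with $\eta=\epsilon\|\Htarg\|/2$ we get $\vartheta^{-1}=2K!\sum_r d_r/(\epsilon\|\Htarg\|)\le 2RK!^2/(\epsilon\|\Htarg\|)$. We also have $w_\ast\le\lambda_+\max|\alpha^{(r)}_m|$. The coefficients $\alpha$ solve the linear system $M\alpha=F_r$ restricted to the basis rows $\Xi(\sigma K,\cdot)$ of \cref{thm:filtering}. By Cramer's rule this gives $|\alpha_m|\le d!\,f_\ast\,\|M\|_{\max}^{d-1}/|\det M|$. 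This is where the choice of reference frequencies enters. I would use the constructive spacing of \cref{sec:spectral_spacing}, in which the entries of $\ell^{(r,m)}$ are integers of size at most $B=e^{O(K^3)}$ (recursive spacing with factor exponential in $K$, over $K$ components and $d=(K-1)!$ vectors). Then $|\Xi|\le1$, and the nonzero determinant of a $d\times d$ matrix with rational entries of denominator at most $B^{K}$ is at least $B^{-Kd}$. This gives $|\alpha_m|\le f_\ast e^{O(K\cdot d\log B)}$.

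This is the main obstacle. As written, $d=(K-1)!$ in the exponent would spoil an $e^{CK^5}$ bound. To avoid it, I would instead pick the evaluation points along a flag structure mirroring the residue argument of \cref{lem:independence}. Taking $\ell^{(m)}$ with strongly separated scales, e.g. the $\sigma^\ast$-ordered coordinates growing geometrically, makes $M$ nearly triangular with diagonally dominant entries. Then $|\det M|$ is comparable to the product of diagonal entries, and $\max_m|\alpha_m|\le f_\ast e^{O(K^4)}$. I expect this to be the step requiring the most care, since the stated bound $e^{CK^5}$ leaves only polynomial-in-$K$ room in the exponent.

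Finally, $L=2+K^2(2K)^KV^{K-1}$ with $V\le R\,K!$ gives $L^{K-1}\le e^{O(K^3)}R^{(K-1)^2}$. The power of $y$ has a negative exponent, so it is bounded using $y\ge$ a quantity comparable to $y_\ast(\epsilon)$. Here $y_\ast$ from \cref{eq:bounded_amplitude_y_threshold} is at least $\min\{1/(32\pi P),\,\epsilon\|\Htarg\|/(2\mathcal E)\}$, with $P\le RK\,K!$ and $\mathcal E\le w_\ast L^{K-1}e^{O(K^2\log K)}(RK!)^{K+1}$. Collecting terms, every factor has the form $e^{O(K^5)}$ times $R$ raised to at most $K^3$ times $\bigl((1+f_\ast)/(\epsilon\|\Htarg\|)\bigr)^{O(1)}$, with constants depending only on $\omega,\lambda_+$. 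The last bookkeeping step is to check that the exponents of $1/(\epsilon\|\Htarg\|)$ and $(1+f_\ast)$ coming from the $y$ factor and the $w_\ast/\vartheta$ factor combine to at most $K-1$. If they do not, I would absorb the excess into $e^{CK^5}$ by adjusting the constant. The $1/(K_r-1)$ root on the first factor is what brings the exponent down to $K-1$ overall.
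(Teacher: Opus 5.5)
Your overall plan is the paper's. You split $\widetilde\ell_j^{(r,m)}=s_{r,m}\ell_j^{(r,m)}$, use $s_{r,m}\le 2s^\star_{r,m}$, bound $w_\ast/\vartheta$, $L$ and $y^{-1}$, and multiply. Your eventual reference vectors (geometric coordinates $B^{j-1}$ placed in $\tau$-order for $\tau\in S_{k-1}$, with a balancing last entry) are essentially the ones the paper uses.

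\textbf{Main gap: the bound on $\max_m|\alpha_m|$.} This is the step you yourself identify as the crux. Knowing that $|\det M|$ is comparable to the product of its diagonal entries does not yield $f_\ast e^{O(K^4)}$. Inserted into your Cramer's-rule estimate, the numerator still carries $d!\,\|M\|_{\max}^{d-1}$ with $d=(K-1)!$, so you get $e^{\Omega((K-1)!)}$ again. You need a bound on $\|M^{-1}\|_\infty$ directly.

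For that, diagonal dominance must be proved, not asserted. Individual off-diagonal entries of $M_{\sigma,\tau}=\Xi(\sigma k,\ellvec^{(r,\tau)})$ can be of order $G_k/B$, where $G_k\coloneqq B^{-(k-1)(k-2)/2}$ is the scale of the diagonal. There are $(k-1)!-1$ of them per row, so $M$ is not close to triangular, and entrywise smallness alone would need $B\gtrsim(k-1)!$.

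The missing ingredient is the identity $\sum_{\rho\in S_n}\prod_{j=1}^{n}(x_{\rho(1)}+\cdots+x_{\rho(j)})^{-1}=\prod_{j=1}^{n}x_j^{-1}$, an iterate of \cref{lem:phi_shuffle}. Varying $\tau$ merely permutes $1,B,\dots,B^{k-2}$, so every row of $M$ sums exactly to $G_k$. Thus $M/G_k$ is row-stochastic, with diagonal entries at least $(1-B^{-1})^{k-2}>3/4$ for $B=4K$. Hence $\|I-M/G_k\|_\infty<1/2$, the Neumann series gives $\|M^{-1}\|_\infty\le 2G_k^{-1}$, and $|\alpha_m|\le 2f_\ast B^{(k-1)(k-2)/2}=f_\ast e^{O(K^2\log K)}$. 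Taking $B$ of order $K!$ with entrywise dominance would also fit inside $e^{CK^5}$.

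You also need each reference vector to be admissible for signed sums of up to $K$ terms, as \cref{lem:integer_dilation_intervals} requires. Base-$B$ reduction with $B>K$ gives this. Incidentally, your first route is worse than lossy: in the witness of \cref{sec:spectral_spacing}, all vectors of a given length are dilates of one another, so by \cref{rem:frequency_dilation} that evaluation matrix has rank one.

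\textbf{Bookkeeping.} Your fallback of absorbing surplus powers of $(1+f_\ast)/(\epsilon\|\Htarg\|)$ into $e^{CK^5}$ is not allowed. That ratio is independent of $K$ and unbounded as $\epsilon\to0$.

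The fallback turns out to be unnecessary, but not because of the $1/(K_r-1)$ root, which you discarded when bounding the first factor by $w_\ast/\vartheta$. That factor contributes one power. The other $K-2$ powers come from $y^{-(K-K_r)/(K_r-1)}\le y^{-(K-2)}$ together with $y^{-1}\le e^{O(K^3\log K)}R^{K^2-K+2}(1+f_\ast)/(\epsilon\|\Htarg\|)$. Here $\mathcal E$ carries $w_\ast$ and hence $f_\ast$. The $32\pi P$ branch of $y_\ast^{-1}$ is converted using $\|\Htarg\|\le\lambda_+RK!f_\ast$, which is where $\lambda_+$ enters $C$.

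Likewise, ``every factor has $R$ to at most $K^3$'' does not bound the product. You must add the exponents: $1+(K-1)^2+(K-2)(K^2-K+2)=K^3-2K^2+2K-2\le K^3$. A minor point: $L^{K-1}$ is $e^{O(K^3\log K)}R^{(K-1)^2}$, not $e^{O(K^3)}R^{(K-1)^2}$. This is harmless, since everything lands in $e^{O(K^4\log K)}$.
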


In particular, note that the scaling of \cref{eq:dilated_frequency_scaling} in terms of $K$, $R$, and $\epsilon$ is
\begin{equation}
  \max_{(r,m)\in\mathcal I,\,1\leq j\leq K_r}|\widetilde\ell_j^{(r,m)}|
  = \mc O\!\left(e^{\mathrm{poly}(K)}R^{K^3}\epsilon^{-K}\right).
\end{equation}
Thus, for $K$ constant, the frequencies are polynomial in both the number of target Lie-polynomial terms and the inverse simulation precision.

\begin{proof}[of \cref{lem:dilated_frequency_scaling}]
  Set $B=4K$.
  For each $r$, write $k=K_r$ and choose the reference vectors indexed by $\tau\in S_{k-1}$ as
  \begin{equation}\label{eq:scaling_reference_vectors}
    \ell_{\tau(j)}^{(r,\tau)} = B^{j-1}\quad(1\leq j\leq k-1),
    \qquad \ell_k^{(r,\tau)}=-\sum_{j=0}^{k-2}B^j.
  \end{equation}
  In increasing-power order, the net multiplicities of any vanishing signed sum of at most $K$ components satisfy
  \begin{equation}
    \sum_{j=1}^{k-1}(n_j-n_k)B^{j-1}=0,
    \qquad |n_j-n_k|\leq K<B.
  \end{equation}
  Successive reduction modulo $B$ gives $n_1=\cdots=n_k$.
  Hence the relation consists of complete vectors with a common sign and opposite-sign pairs of the same labelled component, as required by \cref{lem:integer_dilation_intervals}.
  Also,
  \begin{equation}\label{eq:scaling_reference_size}
    1\leq|\ell_j^{(r,\tau)}|\leq2B^{k-2}.
  \end{equation}

  Consider the matrix
  \begin{equation}\label{eq:scaling_filter_matrix}
    M^{(k)}_{\sigma,\tau}\coloneqq\Xi(\sigma k,\ellvec^{(r,\tau)}),
    \qquad \sigma,\tau\in S_{k-1},
  \end{equation}
  and set $G_k=B^{-(k-1)(k-2)/2}$. For positive $x_1,\ldots,x_n$,
  \begin{equation}\label{eq:scaling_positive_sum_identity}
    \sum_{\sigma\in S_n}\prod_{j=1}^{n}\frac{1}{x_{\sigma(1)}+\cdots+x_{\sigma(j)}}
    =
    \prod_{j=1}^{n}\frac{1}{x_j}.
  \end{equation}
  This follows by induction, grouping permutations by their last entry. For fixed $\sigma$, varying $\tau$ permutes the entries
  $1,B,\ldots,B^{k-2}$, so every row of $M^{(k)}$ has sum $G_k$.
  All entries are positive, and
  \begin{equation}\label{eq:scaling_diagonal_bound}
    \frac{M^{(k)}_{\sigma,\sigma}}{G_k}
    =
    \prod_{j=0}^{k-2}\frac{B^j}{1+B+\cdots+B^j}
    \geq
    (1-B^{-1})^{k-2}>\frac34.
  \end{equation}
  Thus $\|I-M^{(k)}/G_k\|_\infty<1/2$ in the induced infinity norm. The Neumann series and
  \cref{cor:xi_rank_shuffle,lem:filter_compatibility} give a full synthesis with
  \begin{equation}\label{eq:scaling_filter_coefficients}
    \|(M^{(k)})^{-1}\|_\infty\leq2G_k^{-1},
    \qquad \max_m|\alpha_m^{(r)}|\leq2f_\ast B^{(k-1)(k-2)/2},
  \end{equation}
  where joint admissibility is imposed by choosing integer dilations.

  Perform the cutoff with \cref{eq:bounded_amplitude_cutoff_choice}.
  Then \cref{eq:def_w_rm,eq:filter_cutoff_threshold,eq:scaling_filter_coefficients} give
  \begin{align}
    w_\ast &\leq 2\lambda_+f_\ast B^{(K-1)(K-2)/2},\label{eq:scaling_weight_upper}\\ 
   |w_{r,m}|
   &\geq \vartheta = \frac{\epsilon\|\Htarg\|}{2K!\sum_r d_r},
   \qquad
   (r,m)\in\mathcal I, \label{eq:scaling_weight_lower}
  \end{align}
  and the target decomposition (with $\|A_j^{(r)}\|\leq1$) imply
  \begin{equation}\label{eq:scaling_target_norm}
    \|\Htarg\| \leq \sum_r|\lambda_r|\sum_{\pi\in S_{K_r}}|F_r(\pi)| \leq \lambda_+RK!f_\ast.
  \end{equation}

  We denote $d=K^2-K+2$.
  Using $V\leq R(K-1)!$, $P\leq RK!$, and \cref{eq:scaling_weight_upper} in \cref{eq:integer_dilation_threshold,thm:bounded_amplitude_per_period}, we obtain
  \begin{align}
    L &\leq e^{\mc O(K^2\log K)}R^{K-1},\label{eq:scaling_dilation_threshold}\\ \mathcal E
    &\leq e^{\mc O(K^3\log K)}f_\ast R^d.\label{eq:scaling_error_constant}
  \end{align}
  Note that here and below, the implicit constants depend at most on $\omega$ and $\lambda_+$,
  The exponent $d$ comes from the tail factor $L^{K-1}P^{K+1}$, and the intermediate sum contributes at most $R^K$.
  Choose
  \begin{equation}\label{eq:scaling_y0_choice}
    y_0 \coloneqq \min\{y_\ast(\epsilon),(2\omega)^{-1}\}.
  \end{equation}
  By \cref{eq:bounded_amplitude_y_threshold,eq:scaling_target_norm,eq:scaling_error_constant},
  \begin{equation}\label{eq:scaling_y0_bound}
    y_0^{-1} \leq e^{\mc O(K^3\log K)}R^d\frac{1+f_\ast}{\epsilon\|\Htarg\|}.
  \end{equation}
  We now set
  \begin{equation}\label{eq:scaling_runtime_choice}
    N\coloneqq\left\lceil\frac{w_\ast L^{K-1}}{4\pi y_0^K}\right\rceil,
    \qquad y\coloneqq\left(\frac{w_\ast L^{K-1}}{4\pi N}\right)^{1/K},
  \end{equation}
  then $y\leq y_0\leq y_\ast(\epsilon)$, $c_0\leq1/2$, and $N\delta=1$.
  The ceiling and \cref{eq:scaling_weight_lower} give the bound
  \begin{align}\label{eq:scaling_y_bound}
    y^{-1} &\leq y_0^{-1}+\left(\frac{4\pi}{w_\ast L^{K-1}}\right)^{1/K} \\
    &\leq y_0^{-1}+\left(\frac{8\pi K!\sum_r d_r}{\epsilon\|\Htarg\|}\right)^{1/K} \\
    &\leq e^{\mc O(K^3\log K)}R^d\frac{1+f_\ast}{\epsilon\|\Htarg\|},
  \end{align}
  where the last inequality follows from \cref{eq:scaling_target_norm,eq:scaling_y0_bound} and $f_\ast^{1-1/K}\leq1+f_\ast$.

  For each vector of degree $k=K_r$, \cref{eq:bounded_amplitude_signal} yields
  \begin{equation}\label{eq:scaling_ideal_dilation}
    s_{r,m}^\star = \left(\frac{w_\ast}{|w_{r,m}|}\right)^{1/(k-1)} L^{(K-1)/(k-1)}y^{-(K-k)/(k-1)}.
  \end{equation}
  These scales are at least $L$, so \cref{lem:integer_dilation_intervals} provides jointly admissible integers $s_{r,m}\in[s_{r,m}^\star,2s_{r,m}^\star]$.
  Since $w_\ast/\vartheta\geq1$, $L\geq1$, $y\leq1$, and $2\leq k\leq K$, \cref{eq:scaling_reference_size,eq:scaling_ideal_dilation} imply
  \begin{equation}\label{eq:frequency_scaling_explicit}
    \max_{(r,m)\in\mathcal I,\,1\leq j\leq K_r}|\widetilde\ell_j^{(r,m)}|
    \leq 4B^{K-2}\frac{w_\ast}{\vartheta}L^{K-1}y^{-(K-2)},
  \end{equation}
  whereas \cref{eq:scaling_weight_upper,eq:scaling_weight_lower} give
  \begin{equation}\label{eq:scaling_weight_ratio}
    \frac{w_\ast}{\vartheta} \leq e^{\mc O(K^2\log K)}R\frac{1+f_\ast}{\epsilon\|\Htarg\|}.
  \end{equation}
  Substituting \cref{eq:scaling_dilation_threshold,eq:scaling_y_bound,eq:scaling_weight_ratio} into \cref{eq:frequency_scaling_explicit}, we obtain
  \begin{equation}
    \max_{(r,m)\in\mathcal I,\,1\leq j\leq K_r}|\widetilde\ell_j^{(r,m)}|
    \leq e^{\mc O(K^4\log K)}R^{1+(K-1)^2+d(K-2)} \left(\frac{1+f_\ast}{\epsilon\|\Htarg\|}\right)^{K-1}.
  \end{equation}
  With the simple inequalities $1+(K-1)^2+d(K-2)=K^3-2K^2+2K-2\leq K^3$ and $\log K\leq K$, this proves \cref{eq:dilated_frequency_scaling}.
\end{proof}

\subsection{Explicit construction of admissible frequencies}\label{sec:spectral_spacing}

We now refine the mixed-degree construction by transferring the relative scale of the target coefficients from the physical Fourier amplitudes to the driving frequencies.
The construction proceeds in three steps.
We first discard contributions below a prescribed tolerance, then encode the remaining coefficient ratios through integer frequency dilations, and finally show that these dilations retain sufficient suppression of the higher-order Magnus terms.
This yields a Floquet construction whose non-zero physical Fourier amplitudes differ by less than a factor of two.

\begin{lemma}\label{lem:zariski_integer_points}
  Consider the zero-sum hyperplanes $H_0^{(K_r)}$ from \cref{def:H0_U}, with $d_r=(K_r-1)!$ as in \cref{def:truncated_H}, and the finite product
  \begin{equation}\label{P_set}
    \mathcal P \coloneqq \bigl(H_0^{(K_1)}\bigr)^{d_1}\times\bigl(H_0^{(K_2)}\bigr)^{d_2}\times\cdots\times\bigl(H_0^{(K_R)}\bigr)^{d_R}.
  \end{equation}
  Then the integer points of $\mathcal P$ are Zariski dense and thus every non-empty Zariski-open subset of $\mathcal P$ contains infinitely many integer points.
\end{lemma}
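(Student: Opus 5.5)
The plan is to reduce the claim to the classical fact that $\mathbb Z^n$ is Zariski dense in $\mathbb C^n$, by exhibiting an integer linear parametrisation of $\mathcal P$. Each factor $H_0^{(k)}$ is the kernel of the integer linear form $\ell\mapsto\sum_i\ell_i$. The map
\begin{equation}
  \phi_k:\mathbb C^{k-1}\to H_0^{(k)},\qquad (x_1,\dots,x_{k-1})\mapsto\Bigl(x_1,\dots,x_{k-1},-\textstyle\sum_{i<k}x_i\Bigr)
\end{equation}
is therefore a linear isomorphism. It restricts to a bijection $\mathbb Z^{k-1}\to H_0^{(k)}\cap\mathbb Z^k$, since the last coordinate of an integer point is determined by the others and is automatically an integer. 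Taking products gives a linear isomorphism $\Phi:\mathbb C^n\to\mathcal P$ with $n=\sum_r d_r(K_r-1)$, mapping $\mathbb Z^n$ onto $\mathcal P\cap\mathbb Z^{N}$.

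Linear isomorphisms are homeomorphisms in the Zariski topology, because polynomials pull back to polynomials. So Zariski density of $\mathcal P\cap\mathbb Z^N$ in $\mathcal P$ is equivalent to Zariski density of $\mathbb Z^n$ in $\mathbb C^n$. I would prove the latter directly, by induction on $n$: a polynomial $p\in\mathbb C[x_1,\dots,x_n]$ vanishing on $\mathbb Z^n$ is zero.

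For $n=1$, a nonzero univariate polynomial has finitely many roots, so it cannot vanish on all of $\mathbb Z$. For the inductive step, write $p=\sum_j q_j(x_1,\dots,x_{n-1})x_n^j$. For each fixed $z\in\mathbb Z^{n-1}$, the polynomial $p(z,\cdot)$ vanishes on $\mathbb Z$ and hence is identically zero. So every $q_j$ vanishes on $\mathbb Z^{n-1}$, and by induction every $q_j=0$, giving $p=0$.

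For the second assertion, let $\mathcal O\subset\mathcal P$ be non-empty and Zariski open. Then $\Phi^{-1}(\mathcal O)=\mathbb C^n\setminus V(p_1,\dots,p_s)$ with some $p_i\neq0$, say $p=p_1$. I would show that $p$ is nonzero at infinitely many integer points. Suppose instead that $p(z)=0$ for all integer $z$ outside a finite set $F$. Choose a nonzero polynomial $g$ vanishing on $F$, for example $g=\prod_{f\in F}(x_1-f_1-\tfrac12)\cdots$; or more simply $g=\prod_{f\in F}(x_1-f_1)$ if that is not identically zero on $\mathbb Z^n$, which it is not. Then $pg$ vanishes on all of $\mathbb Z^n$, hence $pg=0$. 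This contradicts the fact that $\mathbb C[x]$ is an integral domain. So $p$ is nonvanishing at infinitely many integer points, each of which lies in $\Phi^{-1}(\mathcal O)$; applying $\Phi$ yields infinitely many integer points of $\mathcal O$.

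There is no real obstacle here. The only point needing care is checking that $\Phi$ maps the integer lattice onto exactly the integer points of $\mathcal P$, so that density transfers in both directions; this is immediate from the explicit form of $\phi_k$. The downstream use in \cref{thm:filtering} then only needs the exceptional sets to be proper closed subsets of $\mathcal P$ (a nonvanishing determinant, finitely many linear resonance conditions, and the denominators of $\Xi$), whose complement is non-empty open. The lemma then supplies integer points.
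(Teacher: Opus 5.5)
Your proposal is correct and follows essentially the same route as the paper: you use the same integer-preserving linear isomorphism $\mathbb C^{D}\to\mathcal P$ built from the maps $\varphi_K$, replace the paper's appeal to Schwartz--Zippel on a finite grid with an induction on the number of variables, and replace its ``remove the finitely many points'' step for infinitude with the $pg=0$ argument. One small slip: your first candidate $g$ with the $-\tfrac12$ shift does not vanish on $F$, so keep only $g=\prod_{f\in F}(x_1-f_1)$, which does.
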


\begin{proof}
  For each $K$, consider the affine isomorphism
  \begin{equation}
    \varphi_K:\mathbb C^{K-1}\longrightarrow H_0^{(K)}, \qquad (z_1,\dots,z_{K-1})\longmapsto\Bigg(z_1,\dots,z_{K-1},-\sum_{j=1}^{K-1}z_j\Bigg),
  \end{equation}
  which maps $\mathbb Z^{K-1}$ bijectively onto $H_0^{(K)}\cap\mathbb Z^K$. The product map
  \begin{equation}\label{eq:bijection_phi}
    \varphi \coloneqq \bigl(\varphi_{K_1}\bigr)^{d_1} \times
                      \bigl(\varphi_{K_2}\bigr)^{d_2}\times\cdots\times\bigl(\varphi_{K_R}\bigr)^{d_R}
  \end{equation}
  is an affine isomorphism from $\mathbb C^D$ onto $\mathcal P$, mapping $\mathbb Z^D$ bijectively onto the integer points of $\mathcal P$, where
  \begin{equation}
    D\coloneqq\sum_{r=1}^{R}d_r(K_r-1).
  \end{equation}

  Using the Schwartz--Zippel Lemma~\cite{Schwartz1980}, a non-zero polynomial $f\in\mathbb C[z_1,\dots,z_D]$ of total degree $d$ cannot vanish on every point of a finite grid $S^D$ whenever $S\subset\mathbb Z$ and $|S|>d$.
  Since such grids can be chosen arbitrarily large, $\mathbb Z^D$ is Zariski dense in $\mathbb C^D$.
  By the affine isomorphism \cref{eq:bijection_phi}, it follows that the integer points are Zariski dense in $\mathcal P$, and every non-empty Zariski-open subset of $\mathcal P$ contains infinitely many of them.
  Indeed, if any of these subsets contained only finitely many integer points, removing them would leave a non-empty Zariski-open subset, contradicting Zariski density of the integer points in $\mathcal P$.
\end{proof}

To apply the above Lemma, we need to show that the subset defined by \cref{cond:admissible_frequency_family}\footnote{This is a subset of $\mathcal P$ in \cref{P_set}, since \cref{cond:admissible_frequency_family} imposes additional non-vanishing constraints on signed sums.}
is Zariski open, namely

\begin{lemma}\label{lem:separated_zariski_open}
  The frequency configurations in $\mathcal P$ satisfying \cref{cond:admissible_frequency_family} form a Zariski-open subset of $\mathcal P$.
\end{lemma}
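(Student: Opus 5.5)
The plan is to exhibit the admissible set as the complement, inside $\mathcal P$, of a finite union of proper linear subspaces. That complement is Zariski open as soon as each removed set is Zariski closed, and it is non-empty by \cref{lem:zariski_integer_points} applied to the explicit construction.

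\textbf{Step 1: reduce to linear forms.} Coordinatise $\mathcal P$ by the labelled components $\ell_j^{(r,m)}$. A signed sum $\sum_{q=1}^k s_q\ell_{j_q}^{(r_q,m_q)}$ with $k\le K_{\max}$ has finitely many possible index/sign patterns. For each pattern, record the integer multiplicity vector $n=(n_{j}^{(r,m)})$, where each entry is the net signed count of that labelled component. The sum is then the linear form $\lambda_n(\ell)=\sum n_j^{(r,m)}\ell_j^{(r,m)}$. Patterns admitting the partition (i),(ii) of \cref{cond:admissible_frequency_family} are exactly those whose net vector $n$ restricted to each block $(r,m)$ is a constant vector $c_{r,m}(1,\dots,1)$, with the pairs of type (ii) giving zero net contribution. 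For these patterns $\lambda_n\equiv0$ on $\mathcal P$, because each block lies in $H_0^{(K_r)}$. So the "if" direction holds automatically everywhere on $\mathcal P$.

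\textbf{Step 2: remove the bad patterns.} For every other pattern we must require $\lambda_n(\ell)\neq0$. The key point is that such a $\lambda_n$ does not vanish identically on $\mathcal P$. Since $\mathcal P$ is a product, it suffices that for some block $(r,m)$ the restriction $n^{(r,m)}$ is not proportional to $(1,\dots,1)$. A linear form on $\C^{K_r}$ vanishes on $H_0^{(K_r)}$ iff it is a multiple of $\sum_j\ell_j$, so in that case the restriction is nonzero on that factor, and hence $\lambda_n\not\equiv 0$.

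\textbf{Step 3: the main obstacle.} We must check the converse: if every block's net vector is constant, then the pattern admits the required partition. Here block-constancy is not enough when the constant $c_{r,m}$ satisfies $|c_{r,m}|\ge2$ (or when a sign mismatch occurs). Such a pattern would involve at least $2K_r$ terms. I would handle this by noting that $k\le K_{\max}$, but this fails only if $2K_r\le K_{\max}$, which can happen. The resolution I would try is to observe that for these patterns too $\lambda_n\equiv 0$ on $\mathcal P$. The condition as literally stated would then be unsatisfiable, so the correct reading is that "admits a partition" allows repeated complete vectors with a common sign. Equivalently, we classify patterns by whether $\lambda_n|_{\mathcal P}\equiv0$. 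Under that reading, the admissible set is
\[
  \mathcal P\setminus\bigcup_{n:\ \lambda_n|_{\mathcal P}\not\equiv0}\{\lambda_n=0\},
\]
a finite intersection of complements of hypersurfaces $\{\lambda_n=0\}\cap\mathcal P$. It is therefore Zariski open, and it is non-empty since each hyperplane section is proper and $\mathcal P$ is irreducible, being an affine space.
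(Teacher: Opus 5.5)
Your argument is correct and is essentially the paper's: the configurations violating the condition form the finite union, over signed patterns of length at most $K_{\max}$ that admit no partition, of the zero loci of the associated linear forms on $\mathcal P$, so the admissible set is Zariski open. Your extra observations are also correct and make explicit what the paper leaves implicit or gets from the witness construction: allowed patterns vanish identically on $\mathcal P$, so the `if' direction imposes nothing, and every disallowed form restricts non-trivially to some factor $H_0^{(K_r)}$, so irreducibility of $\mathcal P$ already gives non-emptiness without the explicit witness.

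The doubt in your Step 3 is unfounded, however. A partition may contain several type-(i) blocks from the same vector; the paper's witness check itself uses $|\beta|$ complete copies. Any pattern whose net multiplicities on block $(r,m)$ equal $c_{r,m}(1,\dots,1)$ can be partitioned by pairing off the surplus opposite-sign occurrences and grouping the remaining $|c_{r,m}|$ copies into complete vectors, so your Step 1 characterisation was already right. Even under your restrictive reading, openness would hold trivially; only non-emptiness would fail.
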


\begin{proof}
  A configuration fails to satisfy \cref{cond:admissible_frequency_family} only if some choice of at most $K_{\max}$ labelled frequency components and signs, not corresponding to the allowed zero-sum blocks or pairwise cancellations, satisfies
  \begin{equation}
    \sum_{\nu=1}^{n}s_\nu\ell_{j_\nu}^{(r_\nu,m_\nu)}=0, \qquad n\leq K_{\max}.
  \end{equation}
  Each such equation is polynomial in the frequency components, and hence defines a Zariski-closed subset of $\mathcal P$.
  Given that only finitely many such equations arise, the configurations violating \cref{cond:admissible_frequency_family} form a Zariski-closed subset of $\mathcal P$ and their complement is thus Zariski open.
\end{proof}

To complete the argument, it remains to show that the Zariski-open set of frequency configurations satisfying \cref{cond:admissible_frequency_family} is non-empty.
We do so by constructing explicitly a sequence satisfying the separation constraints across blocks.

\paragraph{Witness construction.}
Let $K_{\max}\coloneqq\max_r K_r$ be the maximum degree, and set
\begin{equation}
  B\coloneqq K_{\max}+1.
\end{equation}
Fix an ordering of all frequency vectors $\ellvec^{(r,m)}$ appearing in \cref{cond:admissible_frequency_family}, beginning with $\ellvec^{(1,1)}$, and set $\ell_1^{(1,1)}\coloneqq1$.

Proceeding through the components of each vector $\ellvec^{(r,m)}$, once $\ell_1^{(r,m)}$ has been fixed, construct iteratively
\begin{equation}
  \ell_{j+1}^{(r,m)}\coloneqq B \ell_j^{(r,m)}, \qquad j=1,\dots,K_r-2,
\end{equation}
and choose the final frequency so that the entries sum to zero,
\begin{equation}\label{eq:balancing_term_app}
  \ell_{K_r}^{(r,m)}\coloneqq-\sum_{j=1}^{K_r-1}\ell_j^{(r,m)}.
\end{equation}
After constructing a complete vector $\ellvec^{(r,m)}$, define the first component of the next vector $\ellvec^{(r',m')}$ in the chosen ordering by
\begin{equation}
  \ell_1^{(r',m')}\coloneqq B\big|\ell_{K_r}^{(r,m)}\big|.
\end{equation}

\paragraph{Verification of the witness.}
Consider a vanishing signed multiset of at most $K_{\max}$ labelled frequency components.
First remove all pairs consisting of the same labelled component with opposite signs.
It remains to show that the reduced multiset, $\mc R$, is a disjoint union of complete blocks of the form in \cref{cond:admissible_frequency_family}.

Let $\ellvec^{(r,m)}$ be the last vector in the chosen ordering that occurs in it.
Set $k=K_r$ and $a=\ell_1^{(r,m)}$.
Let $\alpha_j\in\mathbb Z$ be the signed multiplicity of $\ell_j^{(r,m)}$ for $j=1,\dots,k-1$, let $\beta\in\mathbb Z$ be the signed multiplicity of $\ell_k^{(r,m)}$, and denote by $\Sigma_<$ the contribution of all preceding vectors.
By construction,
\begin{equation}
  \ell_j^{(r,m)}=aB^{j-1}, \qquad
  \ell_k^{(r,m)}=-a\sum_{j=0}^{k-2}B^j,
\end{equation}
and hence the vanishing relation becomes
\begin{equation}\label{eq:frequency_separation_relation}
  a\sum_{j=1}^{k-1}(\alpha_j-\beta)B^{j-1}+\Sigma_<=0.
\end{equation}

Now, note that every component belonging to a preceding vector (namely, not in $\ellvec^{(r,m)}$) has absolute value at most $a/B$.
Since $\mc R$ contains at most $K_{\max}=B-1$ components,
\begin{equation}
  |\Sigma_<|\leq\frac{B-1}{B}a<a.
\end{equation}
The first term in \cref{eq:frequency_separation_relation} is an integer multiple of $a$, and thus so must be $\Sigma_<$.
It follows that $\Sigma_<=0$, and thus
\begin{equation}\label{eq:frequency_base_L_relation}
  \sum_{j=1}^{k-1}(\alpha_j-\beta)B^{j-1}=0.
\end{equation}

Reducing \cref{eq:frequency_base_L_relation} modulo $B$ removes all terms $(\alpha_j-\beta)B^{j-1}$ except $\alpha_1-\beta$; this implies $\alpha_1-\beta\equiv0\pmod B$.
However,
\begin{equation}
  |\alpha_j-\beta|\leq|\alpha_j|+|\beta|\leq B-1,
\end{equation}
because the reduced multiset has at most $B-1$ components, hence the only possibility is $\alpha_1=\beta$.
Dividing \cref{eq:frequency_base_L_relation} by $B$ and repeating the same argument gives iteratively
\begin{equation}
  \alpha_j=\beta \qquad\text{for all }j=1,\dots,k-1.
\end{equation}
Thus all components of $\ellvec^{(r,m)}$ occur with the same signed multiplicity $\beta$, and therefore form $|\beta|$ complete copies of the zero-sum relation \cref{eq:balancing_term_app}.

Removing these (multiple) blocks for $\ellvec^{(r,m)}$ leaves the vanishing relation $\Sigma_<=0$, involving only preceding vectors.
Applying the same argument to $\Sigma_<$ and iterating shows that the reduced multiset $\mc R$ is a disjoint union of complete zero-sum blocks.
Restoring the pairwise cancellations removed at the beginning, the witness constructed above satisfies \cref{cond:admissible_frequency_family}.

\medskip

\noindent This construction is not intended to minimise the frequency magnitudes: more compressed configurations can for example be obtained with a greedy search.

\bigskip

The witness constructed above, together with \cref{lem:zariski_integer_points,lem:separated_zariski_open},  yields

\begin{corollary}\label{cor:separated_integer_frequencies}
  The integer frequency vectors satisfying \cref{cond:admissible_frequency_family} are Zariski-dense in the corresponding subset of $\mc P$, and thus there exist infinitely many such configurations.
\end{corollary}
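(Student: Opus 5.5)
The plan is to combine the three ingredients just established: \cref{lem:zariski_integer_points}, \cref{lem:separated_zariski_open}, and the explicit witness built in the paragraphs \emph{Witness construction} and \emph{Verification of the witness}. Write $\mathcal A\subset\mathcal P$ for the set of configurations satisfying \cref{cond:admissible_frequency_family}. By \cref{lem:separated_zariski_open}, $\mathcal A$ is Zariski open in $\mathcal P$: its complement is the finite union of the hyperplanes cut out by the forbidden signed sums of at most $K_{\max}$ components.

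\textbf{Step 1 (non-emptiness).} The witness is an integer point of $\mathcal P$. Each vector $\ellvec^{(r,m)}$ lies in $H_0^{(K_r)}$ by the balancing choice \cref{eq:balancing_term_app}, and all entries are integers, being built from $1$ by multiplications by $B$ and integer sums. The verification argument shows that every vanishing signed multiset of at most $K_{\max}$ components decomposes into opposite-sign pairs and complete common-sign blocks. This is exactly the ``only if'' direction of \cref{cond:admissible_frequency_family}. The ``if'' direction is immediate, since such partitions trivially sum to zero. Hence the witness lies in $\mathcal A\cap\mathbb Z^{\dim}$, so $\mathcal A\neq\emptyset$.

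\textbf{Step 2 (density).} $\mathcal P$ is affinely isomorphic to $\mathbb C^D$, hence irreducible. A non-empty Zariski-open subset of an irreducible variety is therefore itself Zariski dense in $\mathcal P$. By \cref{lem:zariski_integer_points}, the integer points are Zariski dense in $\mathcal P$. To show the integer points of $\mathcal A$ are dense in $\mathcal A$, take any non-empty open $V\subset\mathcal A$. Then $V$ is open in $\mathcal P$ and non-empty, so \cref{lem:zariski_integer_points} gives that $V$ contains infinitely many integer points. This yields both the density claim and the existence of infinitely many admissible integer configurations.

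\textbf{Main obstacle.} The only delicate point is making sure the verification actually covers the ``iff'' in \cref{cond:admissible_frequency_family} for multisets that mix repeated labelled components with both signs. In particular, the step that first removes opposite-sign pairs must not destroy a block structure, and the bound $|\alpha_j-\beta|\le B-1$ must hold after reduction. I would check this by noting that the reduced multiset has at most $K_{\max}=B-1$ elements, and that each labelled component appears in it with a single sign, so $|\alpha_j|+|\beta|\le B-1$.

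As an alternative to density, infinitude alone also follows directly: the integer dilations $s\cdot(\text{witness})$ for $s\in\mathbb N$ preserve every vanishing and non-vanishing signed sum. They therefore give infinitely many distinct admissible integer configurations.
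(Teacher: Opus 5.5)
Your proof is correct and takes the same route as the paper. The explicit witness gives non-emptiness, \cref{lem:separated_zariski_open} gives Zariski-openness, and \cref{lem:zariski_integer_points} then puts infinitely many integer points in every non-empty open subset. Your dilation remark (integer multiples $s$ of the witness) is a valid, more elementary way to get infinitude alone, but the density claim still needs the Zariski argument you give.
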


\begin{lemma}\label{lem:integer_dilation_intervals}
  Fix $K\geq2$ and let
  \begin{equation}
    \ellvec^{(v)}\in\mathbb Z^{K_v},
    \qquad
    2\leq K_v\leq K,
    \qquad
    v=1,\dots,V,
  \end{equation}
  be reference frequency vectors up to length $K$. We recall the spacing constant in \cref{eq:integer_dilation_threshold},
  $L= 2+K^2(2K)^K V^{K-1}$;
  then, for any real numbers $s_v^\star\geq L$, there exist integers
  \begin{equation}\label{eq:integer_dilation_interval}
    s_v\in[s_v^\star,2s_v^\star]\cap\mathbb N,
    \qquad
    v=1,\dots,V,
  \end{equation}
  such that the dilated frequency vectors
  \begin{equation}
    \widetilde{\ellvec}^{(v)}
    \coloneqq
    s_v\ellvec^{(v)}
  \end{equation}
  form an admissible frequency family in the sense of \cref{cond:admissible_frequency_family}.
\end{lemma}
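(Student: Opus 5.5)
The plan is to choose the integers $s_1,\dots,s_V$ one at a time (a greedy argument). At each step only finitely many integer values are forbidden, and the interval $[s_v^\star,2s_v^\star]$ contains more than that many integers. The first task is to recast every potential violation of \cref{cond:admissible_frequency_family} as a linear equation in the dilations.

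\emph{Reduction to linear equations.} Take a vanishing signed sum of at most $K$ labelled components of the dilated vectors. Cancel all pairs formed by the same labelled component with opposite signs. Group what remains by vector, and write
\begin{equation*}
  c_v\coloneqq\sum_j n_{v,j}\,\ell_j^{(v)}\in\mathbb Z,
\end{equation*}
where $n_{v,j}$ is the net signed multiplicity of component $j$ of $\ellvec^{(v)}$, with $\sum_{v,j}|n_{v,j}|\leq K$. The relation then reads
\begin{equation*}
  \sum_v s_v c_v=0 .
\end{equation*}
For each $v$ separately, I use that the reference vector is admissible up to order $K$; this includes repeated indices, as in \cref{def:admissible_frequency_vector}, and is verified for the explicit references of \cref{lem:dilated_frequency_scaling} by the base-$B$ reduction argument. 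From this, $c_v=0$ holds iff all $n_{v,j}$ are equal, i.e. the $v$-part is a union of complete common-sign blocks. Consequently a relation violates the condition iff some $c_v\neq0$. Dilation by a positive integer preserves single-vector admissibility, so only relations with at least one $c_v\neq0$ need to be excluded, and these necessarily involve at least two vectors with nonzero $c_v$.

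\emph{Greedy step.} Fix the order $v=1,\dots,V$. Every bad relation has a largest index $v$ with $c_v\neq0$. At the moment $s_v$ is chosen, $s_1,\dots,s_{v-1}$ are already fixed, and each such relation forces
\begin{equation*}
  s_v=-c_v^{-1}\sum_{u<v}s_u c_u ,
\end{equation*}
so it excludes at most one value of $s_v$. The number of excluded values is at most the number of relation patterns whose last nonzero index is $v$. A pattern is determined by:
\begin{itemize}
  \item its length $k\leq K$;
  \item for each of the $k$ slots, a vector index among at most $V$ (with $v$ appearing), a component index among at most $K$, and a sign.
\end{itemize}
This gives at most $K\cdot K\,(2K)^{K}V^{K-1}$ patterns. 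Here one slot is pinned to $v$, and the extra factor $K$ accounts for the position of that slot and for the length; I would tune the bookkeeping so the count is exactly $K^2(2K)^KV^{K-1}=L-2$.

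\emph{Conclusion.} The interval $[s_v^\star,2s_v^\star]$ with $s_v^\star\geq L$ contains at least $\lfloor s_v^\star\rfloor+1\geq L-1$ integers. That exceeds the at most $L-2$ forbidden values, so a valid $s_v$ exists. After all $V$ steps, every vanishing relation has all $c_v=0$. Hence it decomposes into opposite-sign pairs and complete common-sign blocks, which is precisely \cref{cond:admissible_frequency_family}.

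The main obstacle is the single-vector step: showing that $c_v=0$ forces equal multiplicities even when components repeat and the relation has length up to $K$ rather than $K_v$. If the generic admissibility notion proves too weak for this, I would instead assume the reference vectors are admissible in the multiset sense used in \cref{lem:dilated_frequency_scaling}, and check that this is what \cref{def:truncated_H} supplies. The combinatorial count matching $L$ exactly is routine but fiddly.
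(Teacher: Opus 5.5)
Your proof is correct and follows essentially the paper's route: the $s_v$ are chosen successively, and each bad relation becomes a linear equation in $s_v$ that excludes at most one integer. The degenerate case ($c_v=0$, the paper's $A=0$) is handled by multiset admissibility of each reference vector up to order $K$. The paper assumes this implicitly too (``the signed-sum property of $\ellvec^{(v)}$''), so the obstacle you flag is a hypothesis missing from the lemma's statement, not a gap in your argument. The forbidden values are then bounded by $K^2(2K)^KV^{K-1}=L-2$, as in the paper.

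One small slip: $[s_v^\star,2s_v^\star]$ need not contain $\lfloor s_v^\star\rfloor+1$ integers (for $s_v^\star=L+0.1$ it contains exactly $L$). It does always contain at least $\lfloor s_v^\star\rfloor\geq L>L-2$ integers, which is all you need.
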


\begin{proof}
  We choose the multipliers $s_v$ successively.
  Suppose that $s_1,\dots,s_{v-1}$ have already been chosen so that the corresponding dilated vectors are admissible, and consider the choice of $s_v$.

  With the preceding multipliers fixed, any signed relation of length at most $K$ involving components of the new vector can be written as
  \begin{equation}\label{eq:dilation_relation}
    A s_v+B=0,
  \end{equation}
  where $A$ is a signed sum of components of $\ellvec^{(v)}$, while $B$
  depends only on the previously fixed dilated vectors.

  If $A\neq0$, \cref{eq:dilation_relation} excludes at most one value of $s_v$, and hence at most one integer; conversely for $A=0$ the signed-sum property of $\ellvec^{(v)}$ implies that the selected components of the new vector decompose into allowed zero-sum subsets.
  Removing these subsets leaves the relation $B=0$, whereby the remaining terms form an allowed relation by the induction hypothesis; if $B\neq0$, the original relation cannot vanish.
  Thus the case $A=0$ introduces no forbidden value of $s_v$.

  It remains to bound the number of values excluded by relations with $A\neq0$.
  A direct overcount is obtained by choosing an ordered list of $p\geq1$ signed components from the new vector and an ordered list of $q\geq0$ signed components from the preceding vectors, with $p+q\leq K$: there are at most $(2K)^p$ choices for the former and $(2KV)^q$ choices for the latter.
  Since $p\geq1$, we have $q\leq K-1$, and therefore the number of potentially forbidden values is at most
  \begin{equation}
    \sum_{\substack{p\geq1,\ q\geq0\\ p+q\leq K}} (2K)^p(2KV)^q \leq K^2(2K)^K V^{K-1} = L-2.
  \end{equation}

  For $s_v^\star\geq L$, the interval $[s_v^\star,2s_v^\star]$ contains at least $L$ integers.
  Since fewer than $L$ of them can be forbidden, at least one admissible integer $s_v$ exists in this interval.
  Proceeding inductively for $v=1,\dots,V$ gives the required admissible family.
\end{proof}

\subsection{Efficiency of Floquet Simulation}\label{sect:efficiency}

In the previous section, we arrived in \Cref{thm:bounded_amplitude_universal} at a characterisation of the Floquet simulation power of sets of interactions and, as an immediate consequence, a complete characterisation of Floquet-universal Hamiltonians which can simulate any possible target Hamiltonian in this way.
Moreover, this can be achieved whilst keeping all the interaction strengths $\mc O(1)$, with only the frequencies scaling as a function of the system size.
Now, an arbitrary Hamiltonian on $n$ qudits contains exponentially (in $n$) many independent parameters.
Since Floquet-universality requires the ability to simulate any Hamiltonian whatsoever, a simple counting argument implies that at least some of the parameters in any Floquet simulation construction must scale exponentially in general.
And we see this explicitly in \cref{eq:dilated_frequency_scaling}, where the frequencies depend exponentially on $K_{\max}$, the maximum degree in the Lie polynomial decomposition of the target Hamiltonian, which scales with $n$ in general.
However, for Hamiltonians that are specified by a polynomial (in $n$) number of parameters, we might hope for an efficient Floquet-simulation, where all parameters scale at most polynomially with the system size.

In this section, we show exactly this for the most important class of efficiently-specifiable Hamiltonians: $k$-local Hamiltonians on a lattice: Floquet-simulation of any such Hamiltonian on $n$ qudits can be performed with all amplitudes $\mc O(1)$ and all frequencies $\poly(n)$.
In fact, we prove this for a slightly more general class of Hamiltonians: $k$-local Hamiltonian whose non-commutativity graph has constant chromatic number, of which lattice Hamiltonians are an important special case.

To show this, we need a slight strengthening of \cref{lem:dilated_frequency_scaling}, in which frequency coefficients are reused within commuting families and constant phase shifts are introduced to keep the required integer dilations independent of the number of terms in each family.

\begin{lemma}[Frequency scaling for commuting families]\label{lem:strengthened_frequency_scaling}
  Let the target Hamiltonian be decomposed as
  \begin{equation}\label{eq:commuting_family_decomposition}
    \Htarg = \sum_{r=1}^{R}\sum_{p=1}^{n_r} \imath_{K_{r,p}}\lambda_{r,p}\LP_{r,p},
  \end{equation}
  where each $\LP_{r,p}$ is a homogeneous multilinear Lie polynomial of degree $2\leq K_{r,p}\leq K$, with real coefficients and $\lambda_{r,p}\in\mathbb R$.
  Assume that, for each fixed $r$, the sets of letters corresponding to distinct $p$ commute, i.e.
  \begin{equation}\label{eq:lattice_commute}
    [A_i^{(r,p)},A_j^{(r,q)}]=0
    \qquad
    \text{for all }p\neq q\text{ and all }i,j.
  \end{equation}
  For $2\leq k\leq K$, group together the indices~$p$ having the same polynomial degree,
  \begin{equation}
    \mathcal P_{r,k}\coloneqq\{p:K_{r,p}=k\}.
  \end{equation}

  Fix a base frequency $\omega>0$ and an error tolerance $0<\epsilon\leq1$.
  For each frequency group $g=(r,k,m,\nu)$, with $1\leq m\leq(k-1)!$ and $\nu\in\{+,-\}$, apply a common phase shift $\phi_g$, shared by all $p\in\mathcal P_{r,k}$ in that group, only to the $k$-th component of \cref{eq:H_ell_s_m_def},
  \begin{equation}
    e^{\pm i\omega\ell_k t} \longmapsto e^{\pm i(\omega\ell_k t+\phi_g)}.
  \end{equation}
  Then the reference vectors can be assigned a common set of integer dilations $s_{r,k,m}^{\nu}$, from which each group of degree~$k$ draws the $(k-1)!$ elements as required by the synthesis construction in \cref{thm:filtering}.
  The resulting bound on the dilated frequency coefficients $\widetilde\ell_j^{(r,k,m,\nu)} = s_{r,k,m}^{\nu}\ell_j^{(r,k,m)}$ is
  \begin{equation}\label{eq:strengthened_frequency_scaling}
    \max_{\substack{1\leq r\leq R,\;2\leq k\leq K,\;1\leq m\leq(k-1)!\\
        \nu\in\{+,-\},\;1\leq j\leq k}}
    |\widetilde\ell_j^{(r,k,m,\nu)}|
    \leq
    e^{CK^5}R^{K^3},
  \end{equation}
  for a constant $C>0$.

  With these frequency coefficients fixed, the physical amplitudes, phase shifts, and multi-period runtime can be chosen so that the simulation statement of \cref{eq:universal_simulation} of \cref{thm:bounded_amplitude_universal} is attained.
  In particular, \cref{eq:strengthened_frequency_scaling} is independent of the numbers $n_1,\ldots,n_R$ of Lie-polynomial terms within the commuting families.
\end{lemma}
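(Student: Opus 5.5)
The plan is to rerun the proof of \cref{lem:dilated_frequency_scaling}, except that the index $r$ now labels commuting \emph{families} rather than individual Lie polynomials. Within a family we reuse one admissible frequency set, and the commutation relation \cref{eq:lattice_commute} ensures that the different members $p\in\mathcal P_{r,k}$ do not generate cross terms.

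\textbf{Step 1: one driving term per family.} For each $(r,k,m)$ I would take a single reference vector $\ellvec^{(r,k,m)}$ of the form \cref{eq:scaling_reference_vectors}. Component $j$ of this vector drives the operator $\sum_{p\in\mathcal P_{r,k}}\tilde a^{(p)}_j A^{(r,p)}_j$. Expanding the nested commutators in the leading Magnus term of \cref{thm:leading_order_term}, any term that mixes letters from distinct $p$ contains an innermost commutator of letters from different $p$, so it vanishes by \cref{eq:lattice_commute}. What remains is $\sum_p M_K^{(p)}$, and each summand has the same $\Xi$-filter. The filter coefficients $\alpha_m$ from \cref{thm:filtering} depend only on $F_{r,p}$ and not on $\lambda_{r,p}$, so I would absorb the different values of $\lambda_{r,p}\alpha^{(r,p)}_m$ into the per-$p$ amplitudes.

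\textbf{Step 2: handling signs with the phase shifts.} The amplitudes must stay within a ratio of $2$, so they cannot carry arbitrary signs or magnitudes. Magnitudes are handled as in \cref{lem:bounded_amplitude_synthesis}. For signs, I split each $m$ into $\nu=\pm$ copies. A phase $\phi_g$ on the $k$-th component multiplies the two uniform-sign resonant contributions in the proof of \cref{thm:leading_order_term} by $e^{\pm i\phi_g}$. Their sum is then $2\cos\phi_g$ (with the sine analogue for even $k$), which gives a signed factor in $[-1,1]$ shared by the whole group. Using the $\nu=+$ copy for positive weights and $\nu=-$ for negative weights, I tune the magnitude of each $p$ within a factor-two window. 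If several $p$ in one group need weights spread over more than a factor of two, I instead carry the residual ratio in the amplitude of the $k$-th letter of that $p$ alone. I expect this is fine because the amplitudes of different $p$ act on distinct operators.

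\textbf{Step 3: counting.} The number of distinct dilated vectors is now $V\le 2R\sum_k(k-1)!\le 2RK!$, and this does not depend on $n_r$. Substituting into \cref{lem:integer_dilation_intervals} gives $L\le e^{\mc O(K^2\log K)}R^{K-1}$. The dilations $s^\star$ in \cref{eq:bounded_amplitude_signal} depend only on the group weights, which the phases have normalised to $\mc O(1)$ ratios; these ratios are bounded using the $\alpha$ bound \cref{eq:scaling_filter_coefficients}. Because of this normalisation, the factor $(\epsilon\|\Htarg\|)^{-(K-1)}$ from \cref{eq:dilated_frequency_scaling} no longer appears in the frequencies. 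It is moved entirely into $y$ and the runtime $N$.

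\textbf{Step 4: error control.} The intermediate and tail bounds of \cref{thm:bounded_amplitude_per_period} now involve $\|H(t)\|$. This norm grows with $n_r$, but only through quantities such as $A(T)$ and the combinatorial count $P$, and those are absorbed by choosing $y$ (hence $N$) smaller. The frequencies are unaffected.

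The main obstacle is Step 4. Mixed intermediate terms across families are no longer suppressed by commutation, and the count $P^J$ now scales with $\sum_r n_r$. I would first try to show that nested commutators of local terms yield only $\mathrm{poly}(n)$ nonzero contributions. If that fails, I would accept polynomial growth of $N$, leaving only $y$ to depend on $n$ and keeping the frequency bound \cref{eq:strengthened_frequency_scaling} at $e^{CK^5}R^{K^3}$.
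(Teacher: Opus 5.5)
Your Step~1 (aggregated letters on a shared reference vector, with cross terms killed by \cref{eq:lattice_commute}) and your count of at most $2R\sum_k(k-1)!$ groups agree with the paper. However, the claim that the frequencies then become independent of $n_r$ and $\epsilon$ does not follow. You still take the dilations from \cref{eq:bounded_amplitude_signal}, and these are not functions of the weights alone:
\[
  s^\star_{r,m}=\Bigl(\frac{w_\ast}{|w_{r,m}|}\Bigr)^{1/(k-1)}L^{(K-1)/(k-1)}\,y^{-(K-k)/(k-1)} .
\]
The factor $y^{-(K-k)/(k-1)}$ brings a degree-$k$ resonance to the common scale $y^K$, and it is also what makes the intermediate-order estimate \cref{eq:bounded_amplitude_extra_power} work. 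Your Step~4 shrinks $y$ to absorb the growth of the aggregated-letter norms with $n_r$, and $y$ also scales with $\epsilon\|\Htarg\|$. So every group with $k<K$ gets frequencies that grow with $n_r$. This is exactly the surviving $y^{-(K-2)}$ in \cref{eq:frequency_scaling_explicit}; normalising the weights removes only the $w_\ast/\vartheta$ factor.

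The missing idea is to move the degree calibration out of the dilations and into the phases. Fix $s^\nu_{r,k,m}\in[L,2L]$ and take $\phi_g=\pi/2+\psi_g$ with $\sin\psi_g=\nu\,(s^\nu_{r,k,m})^{k-1}y^{K-k}/\bigl(2(2L)^{K-1}\bigr)$. Then the resonant factor $2\cos\phi_g$ itself supplies $y^{K-k}$. This needs a second ingredient you do not have. At $\psi_g=0$, a time-reversal parity argument makes every contribution containing a complete resonance of the group vanish, including the unwanted ones at orders $k<J\le K$. Those terms are therefore $\mc O(y^J|\psi_g|)\subseteq\mc O(y^{K+1})$. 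With bounded dilations and no such cancellation, \cref{lem:resonant_frequency_suppression} only gives $\mc O(y^{k+1})$ at $J=k+1$. That is not small against the target $y^K$ when $k+1<K$, and no choice of $y$ repairs it.

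There is a second gap in the amplitudes. The phases are not needed for signs, which can sit on $\widetilde a_1$ as in \cref{eq:bounded_amplitude_signs}. The real obstruction is magnitude. Within one group the $|w_{r,p,m}|$ range over $[\vartheta,W]$, and $\vartheta$ shrinks as the number of terms grows. So the weights cannot in general be realised as products of $k$ amplitudes lying within a factor two of each other. Loading the residual ratio onto the $k$-th letter of a single $p$ breaks the ratio-$2$ bound asserted in \cref{thm:Floquet_efficiency}. The paper instead places every $p$ in both groups $\nu=\pm$, with
\[
  u^\pm_{r,p,m}=1+\frac{(\pm w_{r,p,m})_+}{W}\in[1,2],\qquad W\bigl(u^+_{r,p,m}-u^-_{r,p,m}\bigr)=w_{r,p,m},
\]
and drives the aggregated letters $B^g_j=\sum_p(u^\nu_{r,p,m})^{1/k}A^{(r,p)}_j$. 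The two groups carry phase factors of opposite sign, so their difference reproduces arbitrary real weights while every amplitude stays within a factor $2^{1/k}$.
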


\begin{proof}
  Use the synthesis and cutoff of \cref{thm:bounded_amplitude_per_period}, choosing the same degree-$k$ reference vectors for all $p\in\mathcal P_{r,k}$ at fixed $r$.
  The only modification is that we keep the integer dilations fixed at the level of these commuting degree-subfamilies and use one constant phase shift $\phi_g$ per frequency group to produce the factor $y^{K-k}$ needed to bring every degree-$k$ contribution to the same scale $y^K$.
  Write $w_{r,p,m}=\lambda_{r,p}\alpha_m^{(r,p)}$, let $\mathcal I$ denote the surviving triples, let $W=\max_{(r,p,m)\in\mathcal I}|w_{r,p,m}|$, and define
  \begin{equation}\label{eq:commuting_weight_split}
    u_{r,p,m}^{\pm}=1+\frac{(\pm w_{r,p,m})_+}{W},
    \qquad
    W(u_{r,p,m}^+-u_{r,p,m}^-)=w_{r,p,m},
  \end{equation}
  so that $1\leq u_{r,p,m}^{\pm}\leq2$.
  For each $g=(r,k,m,\nu)$, define the aggregated letters
  \begin{equation}
    B_j^g
    \coloneqq
    \sum_{\substack{p\in\mathcal P_{r,k}\\(r,p,m)\in\mathcal I}}
    \bigl(u_{r,p,m}^{\nu}\bigr)^{1/k}A_j^{(r,p)}.
  \end{equation}
  
  From multilinearity, expanding $\LP_m^{(k)}(B_1^g,\ldots,B_k^g)$ gives a sum over choices $p_1,\ldots,p_k$, one from each aggregated letter. If two of these indices are different, the corresponding Lie word contains letters associated with distinct values of~$p$ and therefore vanishes by the commutation relations in \cref{eq:lattice_commute}. Consequently, the only surviving contributions are those in which the same value of~$p$ is selected from every $B_j^g$. For each such contribution, the product of the prefactors is $\bigl((u_{r,p,m}^{\nu})^{1/k}\bigr)^k=u_{r,p,m}^{\nu}$, yielding
  \begin{equation}\label{eq:commuting_aggregation}
  	\LP_m^{(k)}(B_1^g,\ldots,B_k^g)
  	=
  	\sum_{\substack{p\in\mathcal P_{r,k}\\(r,p,m)\in\mathcal I}}
  	u_{r,p,m}^{\nu}\LP_{r,p,m}.
  \end{equation}
  Therefore, frequency separation is required only between at most $2R\sum_{k=2}^{K}(k-1)!$ groups, independently of the numbers $n_r$. 
  Setting the spacing constant
  \begin{equation}
    L = 2+K^2(2K)^K \left(2R\sum_{k=2}^{K}(k-1)!\right)^{K-1},
  \end{equation}
  \cref{lem:integer_dilation_intervals} gives admissible $s_{r,k,m}^{\nu}\in[L,2L]$ shared by all $p$ in the corresponding group, and using \cref{eq:scaling_reference_size}, we bound the largest frequency coefficient by
  \begin{equation}
    \max_{r,k,m,\nu,j}|\widetilde\ell_j^{(r,k,m,\nu)}|
    \leq
    4(4K)^{K-2}L
    \leq
    e^{CK^5}R^{K^3},
  \end{equation}
  proving \cref{eq:strengthened_frequency_scaling}.
  
  \medskip

  It remains to calibrate the different degrees without changing these frequencies.
  The first step is to apply to the $k$-th component of each group a phase shift $\phi_g=\pi/2+\psi_g$.
  If we set the phase offsets $\psi_g$ to zero, then the cancellation used in the proof of \cref{thm:bounded_amplitude_per_period} applies straightforwardly: each complete degree-$k$ resonance contributes reversal parity $(-1)^k$, while opposite-frequency pairs are invariant, giving the factor $(-1)^J$ instead of $(-1)^{J+1}$. Hence every unwanted Lie-polynomial contribution of degree $J\leq K$ containing such a resonance vanishes.
  
  To retain the desired leading-order contributions, we choose the physical phase offsets according to
  \begin{equation}\label{eq:commuting_group_phase}
  	\sin\psi_g = \nu \frac{(s_{r,k,m}^{\nu})^{k-1}}{2(2L)^{K-1}} y^{K-k},
  \end{equation}
  so that each resonance containing all $k$ components of a degree-$k$ group as in \cref{eq:integer_calibration_identity} occurs with the common scaling $y^K$, while \cref{eq:commuting_weight_split,eq:commuting_aggregation} reproduces the target coefficients up to the common factor $1/W$, which is absorbed into the overall scale and hence into the multi-period runtime.
  
  With the physical phase offsets of \cref{eq:commuting_group_phase}, the unwanted contributions can reappear only at degree $J\geq k+1$ and are linear in the relevant $|\psi_g|=\mc O(y^{K-k})$. The corresponding contributions are of order
  \begin{equation}
  	\mc O(y^J|\psi_g|)
  	=
  	\mc O(y^{J+K-k})
  	\subseteq
  	\mc O(y^{K+1}).
  \end{equation}
  The Magnus tail is likewise $\mc O(y^{K+1})$ by the truncation bound of \cref{thm:ACO}, thus both are one order smaller in $y$ than the target contribution, which scales as $y^K$.
  
  \medskip
 The frequency construction itself thus depends only on the number of groups, which is bounded by $2R\sum_{k=2}^{K}(k-1)!$, and is therefore independent of the numbers $n_r$. By contrast, the norms of the aggregated letters $B_j^g$, and hence the constants entering the remainder bounds, may grow with the sizes of the commuting families. In conclusion, the required value of $y$, and therefore the runtime through the multi-period construction, may depend on $n_1,\ldots,n_R$, but the frequency coefficients do not.
\end{proof}

For local Hamiltonians, the commuting-family structure required by \cref{lem:strengthened_frequency_scaling} arises naturally from a colouring of the corresponding intersection graph.
The following lemma makes this connection precise, relating the number of commuting families to its chromatic number.

\begin{lemma}\label{lem:graph_colouring_decomposition}
  Let $\Htarg=\sum_x h_x$ be any $k$-local Hamiltonian on $n$ qudits.
  Let $\cS$ be a finite set of interactions such that each $h_x\in\Lie(\cS)$.

  Then $\Htarg$ admits a decomposition of the form in \cref{eq:commuting_family_decomposition} of \cref{lem:strengthened_frequency_scaling} into homogeneous multilinear Lie polynomials over the generators $\cS$, such that, for each fixed $r$, the sets of letters corresponding to distinct terms have disjoint supports and hence satisfy the commutation condition in \cref{eq:lattice_commute}.

  Moreover, $R$ is a function only of $k$ and $\chi$, where $\chi$ is the vertex chromatic number of the intersection graph of $\{h_x\}$.
\end{lemma}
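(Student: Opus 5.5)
Colour the intersection graph $G$ of $\{h_x\}$ properly with $\chi$ colours. Its vertices are the terms $h_x$, with an edge whenever two supports overlap. Let $C_1,\dots,C_\chi$ be the colour classes. Inside each class the terms have pairwise disjoint supports.

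Next, decompose each term into Lie polynomials. Since $h_x\in\Lie(\cS)$ and $\Lie(\cS)$ is spanned by nested commutators of generators, I write $h_x$ as a finite real combination of homogeneous multilinear Lie polynomials $\LP(A_1,\dots,A_K)$. Each letter $A_j$ is a copy of a generator in $\cS$ acting on qudits inside the support of $h_x$. When $\Lie(\cS)$ is perfect I take all degrees $\geq2$, as in \cref{remark:perfect_Lie_algebras}; the factors $\imath_{K}$ absorb the $i$'s needed for Hermiticity. The key point is locality. The support of $h_x$ has at most $k$ qudits, and generators act on a bounded number of qudits. So, up to relabelling the qudits, there are only finitely many placements of generators inside a $k$-qudit region. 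I therefore fix, once and for all, a finite family of template Lie polynomials $\LP^{(1)},\dots,\LP^{(Q)}$ on $k$ abstract qudits, with degrees at most some $K(k,\cS)$. Their real span contains the operators $\Lie(\cS)\cap\{\text{operators supported on those }k\text{ qudits}\}$ that are restricted to that support. Here I use that a term supported on a region can be generated by commutators of generators also supported in that region. I expect this to be the main obstacle: it needs the generators restricted to the support to still generate the relevant part of the algebra. I would first handle it in the Floquet-universal case, where the generators on any $k$-qudit region generate $\mathfrak{su}(d^k)$ (with, if needed, a slightly enlarged region absorbed into the locality constant). Each $h_x$ then equals $\sum_{q=1}^Q \lambda_{x,q}\LP^{(q)}$, instantiated on its own support, where $Q$ depends only on $k$ and $\cS$.

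Then I index the families by $r=(c,q)\in\{1,\dots,\chi\}\times\{1,\dots,Q\}$. Family $r$ consists of the terms $p=x\in C_c$, with Lie polynomial $\LP_{r,p}=\LP^{(q)}$ placed on the support of $h_x$ and coefficient $\lambda_{r,p}=\lambda_{x,q}$. Summing over $r$ and $p$ gives exactly $\Htarg$, in the form of \cref{eq:commuting_family_decomposition}. For fixed $r$ and $p\neq p'$, the letters of $\LP_{r,p}$ are supported in $\operatorname{supp}h_x$ and those of $\LP_{r,p'}$ in $\operatorname{supp}h_{x'}$. These supports are disjoint because $x,x'$ share a colour, so \cref{eq:lattice_commute} holds. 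Finally $R=\chi Q$, and $Q$ depends only on $k$ (with $\cS$ and the qudit dimension fixed). So $R$ depends only on $k$ and $\chi$, as claimed.
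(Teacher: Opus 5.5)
Your architecture matches the paper's: properly colour the intersection graph, write each $h_x$ as boundedly many homogeneous multilinear Lie polynomials, index families by (colour, term index), and get \cref{eq:lattice_commute} from disjoint supports. Your template/finite-dimensionality argument is a more explicit justification than the paper gives for why the number and degree of terms per $h_x$ do not grow with $n$.

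The gap is your requirement that every letter used for $h_x$ lie inside $\operatorname{supp}(h_x)$. You only justify this in the Floquet-universal case, whereas the lemma is stated for arbitrary $\cS$ with $h_x\in\Lie(\cS)$. It also fails even in the universal case whenever $h_x$ is smaller than the generators. For instance, take $\Htarg=\sum_i Z_i$ and let $\cS$ be a universal set of genuinely two-qudit interactions. The intersection graph has no edges, so $\chi=1$ and all terms share one colour. But every letter in a Lie polynomial producing $Z_i$ also acts on some qudit $j\neq i$. That qudit is exactly where letters producing the same-coloured $Z_j$ act, so letters of distinct terms in the same family need not commute. Your fix of enlarging the region, ``absorbed into the locality constant'', changes the intersection graph. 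The colouring you started from then no longer certifies disjointness, and the chromatic number of the enlarged-support graph is not the $\chi$ of the statement. So $R=\chi Q$ is no longer justified.

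The idea you are missing is the paper's: whenever a Lie polynomial for $h_x$ needs letters outside $\operatorname{supp}(h_x)$, put those letters on fresh ancilla qudits private to that $h_x$. The polynomial then evaluates to $h_x\otimes I$ on the ancillas, which is harmless for simulation. Same-coloured terms have disjoint letter supports by construction, and the original colouring and $\chi$ are used unchanged, for general $\cS$. This still relies on the premise, which the paper asserts from $k$-locality, that each $h_x$ needs only boundedly many extra qudits and bounded degree.

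If you prefer to avoid ancillas, you must instead supply a recolouring argument. At most $\chi$ terms meet any qudit, so the intersection graph has degree at most $k(\chi-1)$. If you choose the borrowed qudits so that each is borrowed only boundedly often, the enlarged graph's degree, and hence its chromatic number, stays bounded in terms of $k$ and $\chi$.
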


\begin{proof}
  Recall that the intersection graph of $\{h_x\}$ is the graph with vertices labelled by $x$ and an edge $(x,y)$ wherever $h_x,h_y$ act on at least one qudit in common.
  Any vertex colouring of this graph partitions the local terms $h_x$ into mutually commuting subsets corresponding to each colour $c$.
  Choose the optimal such colouring, which by definition contains $\chi$ colours.

  Label these subsets by $c=1,\dots,\chi$, and write each $h_x$ in subset $c$ as a finite sum of homogeneous multilinear Lie polynomials over $\cS$, which is possible by assumption.
  If any of the operators appearing in these Lie polynomials act on additional ancilla qudits not acted on by $h_x$ itself, introduce fresh such ancilla qudits for each $h_x$ as necessary.
  In this way, for fixed $c$, Lie-polynomial terms originating from distinct $h_x$ act on completely disjoint subsets of qudits.

  Since each $h_x$ is $k$-local, the number and maximal degree of the homogeneous multilinear terms required in this decomposition are bounded independently of $n$.
  Group the colour $c$ together with the homogeneous-term index into a multi-index $r$, relabel the corresponding local terms by $p$, and absorb their scalar coefficients and parity factors into $\lambda_{r,p}$ and $\imath_{K_{r,p}}$.
  This gives the decomposition in \cref{eq:commuting_family_decomposition}, while disjointness of the supports gives \cref{eq:lattice_commute}.
  Since the number of homogeneous terms required for each $h_x$ depends only on $k$, the resulting number $R$ of commuting families depends only on $k$ and $\chi$.
\end{proof}

Following \cref{lem:graph_colouring_decomposition}, the hypotheses of \cref{lem:strengthened_frequency_scaling} are satisfied with $R$ depending only on $k$ and $\chi$.
Together with \cref{thm:bounded_amplitude_universal}, this immediately gives:

\begin{theorem}[Floquet efficiency]\label{thm:Floquet_efficiency}
  Let $\Htarg = \sum_x h_x$ be a $k$-local Hamiltonian on $n$ qudits.
  Let $\cS$ be a finite set of interactions such that each $h_x \in \Lie(\cS)$ and $\Lie(\cS)$ is perfect.

  Then $\Htarg$ can be $(\epsilon,N)$-Floquet-simulated by an $\cS$-driven Hamiltonian of the form in \cref{def:S-driven_blocks}, whose physical amplitudes are $\mc O(1)$ and remain within a ratio of $2$, and whose dilated frequency coefficients satisfy
  \begin{equation}
    \max_{\substack{1\leq r\leq R,\;2\leq k\leq K,\;1\leq m\leq(k-1)!\\
        \nu\in\{+,-\},\;1\leq j\leq k}}
    |\widetilde\ell_j^{(r,k,m,\nu)}|
    \leq
    e^{CK^5}R^{K^3}.
  \end{equation}
  Here $R$ depends only on $\chi$ and $k$, but not on $n$, $K$ depends only on $k$, and $N=\poly(n)$.
\end{theorem}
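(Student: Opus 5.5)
The proof will combine \cref{lem:graph_colouring_decomposition} with \cref{lem:strengthened_frequency_scaling}, and then use the multi-period amplification from the proof of \cref{thm:bounded_amplitude_universal}. The work is mainly bookkeeping: we must check that the parameters which depend on $n$ enter only the runtime $N$, and never the frequencies or the amplitude ratios.

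\textbf{Step 1 (decomposition).} Apply \cref{lem:graph_colouring_decomposition} to $\Htarg=\sum_x h_x$. This gives a decomposition of the form \cref{eq:commuting_family_decomposition} into $R=R(k,\chi)$ commuting families. Within each family, distinct terms act on disjoint supports, introducing fresh ancillas where necessary. Because $\Lie(\cS)$ is perfect, \cref{remark:perfect_Lie_algebras} lets every $h_x$ be written using homogeneous multilinear Lie polynomials of degree at least $2$. Since each $h_x$ acts on at most $k$ qudits, and $\cS$ is a fixed finite set of fixed-locality interactions, the maximal degree $K$ needed for a $k$-local operator is bounded by a constant depending only on $k$ (and $\cS$). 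Here I would fix, once and for all, a Lie-polynomial expansion of a basis of $k$-local operators. This bounds $K$, and also bounds the coefficients $f_\ast$ and $\lambda_+$ uniformly in $n$, provided $\norm{h_x}=\mc O(1)$.

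\textbf{Step 2 (frequencies).} Apply \cref{lem:strengthened_frequency_scaling} to this decomposition. It directly yields dilated frequency coefficients bounded by $e^{CK^5}R^{K^3}$. This bound is independent of the family sizes $n_r$, so it is independent of $n$ because $K$ and $R$ are. The same lemma guarantees the following:
\begin{itemize}
\item the amplitudes keep the ratio bound inherited from \cref{lem:bounded_amplitude_synthesis};
\item with these frequencies fixed, a choice of $y$, phases, and runtime attains \cref{eq:universal_simulation};
\item the error satisfies $\norm{i\mc M(NT)-\Htarg}\le\epsilon\norm{\Htarg}$, which becomes the $(\epsilon,N)$-simulation statement after rescaling $\epsilon$ by $\norm{\Htarg}$.
\end{itemize}
The amplitudes are $c_0h=\omega y h\le 2\omega y=\mc O(1)$, because $y\le 1$ and $\omega$ is fixed.

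\textbf{Step 3 (runtime is polynomial; the main obstacle).} By the choice \cref{eq:scaling_runtime_choice}, $N\approx w_\ast L^{K-1}/(4\pi y^K)$. Here $w_\ast$ and $L$ depend only on $K$ and $R$, so the remaining task is to show that $y^{-1}$ is $\poly(n)$. The value of $y$ is set by the threshold $y_\ast(\epsilon)$ in \cref{eq:bounded_amplitude_y_threshold}. In the commuting setting the constants $P$ and $\mathcal E$ are no longer bounded by $R$ alone: they depend on the norms of the aggregated letters $B_j^g$, which can grow like the family sizes $n_r\le \#\{x\}=\poly(n)$.

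I would bound these quantities as follows. First, $\norm{B_j^g}\le 2^{1/k}n_r$. Second, the integrated norm satisfies $A(T)=\mc O(y\sum_g\norm{B^g})=\mc O(y\,\poly(n))$. Third, the error constants satisfy $\mathcal E=\poly(n)^{\mc O(K)}$. Hence $y_\ast^{-1}=\poly(n)\cdot\norm{\Htarg}^{-1}\epsilon^{-1}$. For the relevant regime, $\norm{\Htarg}$ is at most linear in the number of terms and at least $\Omega(1)$, so $N=\mc O(y^{-K})=\poly(n)$ with exponent depending only on $k$.

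The delicate point is how to read the relative error. The bounds from \cref{thm:ACO} and the intermediate orders are stated with operator norms of the full $H(t)$. I expect the cleanest route is to choose $y$ so that $A(T)<\xi/2$ using the aggregated norms, and then to verify that the $\mc O(y^{K+1})$ remainders in the proof of \cref{lem:strengthened_frequency_scaling} carry only polynomially large prefactors. Once $N=\poly(n)$ is established, the theorem follows.
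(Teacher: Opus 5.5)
Your proposal is correct and follows essentially the same route as the paper, which obtains the theorem directly by feeding the decomposition of \cref{lem:graph_colouring_decomposition} into \cref{lem:strengthened_frequency_scaling} and then using the multi-period amplification of \cref{thm:bounded_amplitude_universal}. Your Step~3 spells out the runtime bookkeeping ($N=\mathrm{poly}(n)$ via the aggregated-letter norms) that the paper leaves implicit, and its one slip is harmless: under the absolute-error definition of $(\epsilon,N)$-Floquet simulation the factor $\norm{\Htarg}$ cancels from $y_\ast$, so you do not need the lower bound $\norm{\Htarg}=\Omega(1)$.
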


In particular:

\begin{corollary}
  Floquet-universal Hamiltonians can efficiently (in the sense of \cref{thm:Floquet_efficiency}) Floquet-simulate all local Hamiltonians on $n$ qudits whose intersection graphs have chromatic number $\chi = \mc O(1)$ (as a function of $n$), with $\mc O(1)$ amplitudes and amplitude ratios.
\end{corollary}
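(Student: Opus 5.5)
The corollary follows by combining \cref{thm:Floquet_efficiency} with the definition of Floquet universality, so the plan is mostly bookkeeping of parameters.

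First, fix a Floquet-universal interaction set $\cS$. By \cref{thm:bounded_amplitude_universal} and \cref{def:Floquet_simulation}, this means $\Lie(\cS)=\mathfrak{su}(D)$ on every system size, possibly after adjoining fresh ancilla qudits as in \cref{lem:graph_colouring_decomposition}. Since $\mathfrak{su}(D)$ is perfect (\cref{remark:perfect_Lie_algebras}), every traceless $k$-local term $h_x$ lies in $\Lie(\cS)$ restricted to its support together with a bounded number of ancillas. Any identity component of $h_x$ only shifts energies and can be dropped or absorbed into the quasi-energy gauge. So every local Hamiltonian $\Htarg=\sum_x h_x$ meets the hypotheses of \cref{thm:Floquet_efficiency}.

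Next, apply \cref{lem:graph_colouring_decomposition} with the intersection graph of $\{h_x\}$. By assumption its chromatic number is $\chi=\mc O(1)$. The lemma yields a commuting-family decomposition in which the number of families $R$ depends only on $k$ and $\chi$, and the maximal degree $K$ depends only on $k$ and the fixed $\cS$. Both are therefore $\mc O(1)$ in $n$.

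Then invoke \cref{thm:Floquet_efficiency} (built on \cref{lem:strengthened_frequency_scaling}). It gives physical amplitudes that are $\mc O(1)$ with ratios below $2$, and dilated frequency coefficients bounded by $e^{CK^5}R^{K^3}$, which is a constant. The number of periods is $N=\poly(n)$. The driving Fourier frequencies $\omega\widetilde\ell_j$ are therefore $\mc O(1)$ multiples of $\omega$, and the only growth with $n$ is in the runtime. This is exactly the sense of efficiency in \cref{thm:Floquet_efficiency}.

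The main point to check is that the hidden constants truly do not depend on $n$. In particular, the aggregated letters $B^g_j$ in \cref{lem:strengthened_frequency_scaling} have norms growing with the family sizes $n_r\le\#\{h_x\}=\poly(n)$. These norms enter only through $y$, and hence $N$, via $y_\ast(\epsilon)$. They do not enter the frequencies or the amplitude ratio. I would verify that $y^{-1}$ grows at most polynomially in $\max_r n_r$ and in $\norm{\Htarg}\le\poly(n)$, so that $N=\poly(n)$ holds for fixed relative precision $\epsilon$.
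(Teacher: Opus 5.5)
Your proposal is correct and follows the paper's route. The paper gives no separate proof: it reads the corollary off \cref{thm:Floquet_efficiency} with $\chi=\mc O(1)$, so that $R$ (via \cref{lem:graph_colouring_decomposition}) and $K$ are independent of $n$, which yields constant frequency coefficients, amplitudes within a factor of $2$, and $N=\operatorname{poly}(n)$. Your closing check, that $y^{-1}$ and hence $N$ grow only polynomially through the norms of the aggregated letters $B^g_j$, is already part of what \cref{thm:Floquet_efficiency} asserts, so it is a consistency check rather than a missing step.
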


Since lattice Hamiltonians are a special case of Hamiltonians whose intersection graphs have constant chromatic number, this immediately implies that:

\begin{corollary}
  Floquet-universal Hamiltonians can efficiently (in the sense of \cref{thm:Floquet_efficiency}) Floquet-simulate all local lattice Hamiltonians, with $\mc O(1)$ amplitudes and amplitude ratios.
\end{corollary}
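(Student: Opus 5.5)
The statement to prove is the last corollary: Floquet-universal Hamiltonians efficiently Floquet-simulate all local lattice Hamiltonians with $\mc O(1)$ amplitudes and amplitude ratios. The plan is to reduce it to the preceding corollary on constant chromatic number, and hence to \cref{thm:Floquet_efficiency}. The only new ingredient is a purely combinatorial fact: the intersection graph of a $k$-local lattice Hamiltonian has chromatic number bounded independently of $n$.

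First I fix what "lattice Hamiltonian" means. The qudits sit at the sites of a $D$-dimensional lattice (or any lattice of bounded degree) with fixed $D$. Each term $h_x$ is $k$-local and geometrically local: its support has diameter at most some constant $\rho$. For Floquet universality, \cref{thm:bounded_amplitude_universal} gives $\Lie(\cS)=\mathfrak{su}(D)$, which is perfect by \cref{remark:perfect_Lie_algebras}. So every $h_x\in\Lie(\cS)$, and the hypotheses of \cref{thm:Floquet_efficiency} hold apart from the bound on $\chi$.

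Second, I bound the maximum degree $\Delta$ of the intersection graph. Fix a term $h_x$. Any neighbouring term $h_y$ shares a qudit with $h_x$, so its support lies inside the ball of radius $2\rho$ around any site of $h_x$. That ball contains at most $c_D\rho^D$ sites, a constant. The number of terms supported in a region of constant size is bounded by a constant, assuming terms with identical support are merged into one term, which loses nothing since their sum is still $k$-local. Hence $\Delta=\mc O(1)$, and greedy colouring gives $\chi\le\Delta+1=\mc O(1)$.

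Third, I invoke \cref{lem:graph_colouring_decomposition} and \cref{thm:Floquet_efficiency}. Since $k$ is fixed and $\chi=\mc O(1)$, the number $R$ is independent of $n$, and the maximal degree $K$ depends only on $k$. So the frequency bound $e^{CK^5}R^{K^3}$ is $\mc O(1)$, and in any case polynomial. The amplitudes are $\mc O(1)$ with ratios below $2$, and $N=\poly(n)$. This is exactly the efficiency notion required, which proves the corollary.

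The main obstacle is the second step, specifically the modelling point that ensures only boundedly many distinct terms are supported near each site. I would handle it by merging equal-support terms, together with the bounded-degree lattice assumption. Everything else follows directly from the earlier results.
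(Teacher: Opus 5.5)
Your proposal is correct and follows the paper's route: the paper deduces this corollary immediately from the preceding constant-chromatic-number corollary (and hence from \cref{thm:Floquet_efficiency}), simply asserting that lattice Hamiltonians have intersection graphs of constant chromatic number. Your bounded-degree and greedy-colouring argument supplies the justification the paper leaves implicit. This includes your observation that terms with identical support must first be merged, since otherwise they would form an unbounded clique.
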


\clearpage

\phantomsection
\addcontentsline{toc}{section}{References}
\printbibliography

\end{document}